\newtheorem{definition}{Definition}
\newtheorem{theorem}{Theorem}[section]
\newtheorem{corollary}[theorem]{Corollary}
\newtheorem{lemma}[theorem]{Lemma}
\title{\bf An Improved Algorithm for Sparse Instances of SAT
}
\author[1]{Sanjay Jain}
\author[2]{Tzeh Yuan Neoh}
\author[1,3]{Frank Stephan }
\affil[1]{School of Computing, National University of Singapore, Singapore 117417}
\affil[2]{Agency for Science, Technology and Research, Singapore 138632}
\affil[3]{Department of Mathematics, National University of Singapore, Singapore 119076}
\date{\vspace{-10mm}}
\begin{document}
\maketitle
\begin{abstract}
    We show that the CNF satisfiability problem (SAT) can be solved in time $O^*(1.1199^{(d-2)n})$, where $d$ is either the maximum number of occurrences of any variable or the average number of occurrences of all variables if no variable occurs only once. This improves upon the known upper bound of $O^*(1.1279^{(d-2)n})$ by Wahlstr$\ddot{\text{o}}$m (SAT 2005) and $O^*(1.1238^{(d-2)n})$  by Peng and Xiao (IJCAI 2023). For $d\leq 4$, our algorithm is better than previous results. Our main technical result is an algorithm that runs in $O^*(1.1199^n)$ for 3-occur-SAT, a restricted instance of SAT where all variables have at most 3 occurrences. Through deeper case analysis and a reduction rule that allows us to resolve many variables under a relatively broad criteria, we are able to circumvent the bottlenecks in previous algorithms.
\end{abstract}
\section{Introduction}
The Boolean satisfiability problem (SAT) is the problem of deciding the satisfiability of formulas in conjunctive normal form (CNF). As the first discovered NP-complete problem \cite{cook1971complexity}, SAT and its many variants and extensions remain some of the most extensively studied NP-complete problems. Beyond its theoretical significance, the advent of SAT-solvers \cite{heule2011cube} has led to practical applications in computer assisted proofs \cite{cas}, AI planning and especially software verification \cite{marques2008practical}. 

While SAT-solvers work well in practice, their worst-case upper bound is the trivial bound of $O^*(2^n)$. Despite decades of research, no algorithm for SAT with worst-case runtime of $O^*(c^n)$ for $c < 2$ is known. In fact, the widely believed Strong Exponential Time Hypothesis \cite{impagliazzo2001complexity} conjectures that such an algorithm do not exist. Consequently, a substantial amount of work has been done on faster algorithms for restricted instances of SAT. For example, $k-$SAT is a restricted instance of SAT where every clause has at most $k$ literals. When $k = 3$, this is the famous 3-SAT problem. 3-SAT can be solved by a deterministic algorithm in time $O^*(1.32793^n)$ \cite{liu2018chain} and by a randomised algorithm in time $O^*(1.30698^n)$ \cite{scheder2022ppsz}.

The main focus of this paper are sparse instances of SAT. This include  instances where the maximum number of occurrences of any variable in a formula is at most $d$. The problem can be solved in linear-time when $d = 2$ and becomes NP-complete when $d\geq3$ \cite{tovey1984simplified}. 3-occur-SAT is the restricted instance of SAT when $d = 3$. The first non-trivial bound for 3-occur-SAT was achieved by Kullmann and Luckhardt \cite{kullmann1997deciding} with an algorithm that runs in time  $O^*(1.1299^n)$.  Subsequently, Wahlstr$\ddot{\text{o}}$m \cite{wahlstrom2005faster} and later Peng and Xiao \cite{inproceedings} presented algorithms for sparse formulas with upper bounds of $O^*(1.1279^{(d-2)n})$ and $O^*(1.1238^{(d-2)n})$ respectively. By setting $d = 3$, we can see that an $O^*(\alpha^{(d-2)n})$ algorithm for SAT is an $O^*(\alpha^n)$ algorithm for 3-occur-SAT.

An alternative approach  for tackling sparse instances of SAT are algorithms measured against the formula length $L$. This problem has been extensively studied \cite{van1988satisfiability, kullmann1997deciding, hirsch1998two, hirsch2000new} and more recent improvements of the upper bound include an  $O^*(1.0663^L$) algorithm by \cite{wahlstrom2005algorithm}, an $O^*(1.0652^L)$ algorithm by \cite{chen2009improved} and  an $O^*(1.0638^L)$ algorithm by  \cite{article} .  Notably, these latest attempts have all incorporated Wahlstr$\ddot{\text{o}}$m's algorithm \cite{wahlstrom2005faster}  as a 3-occur-SAT sub-procedure, with the runtime of the 3-occur-SAT procedure often being one of the bottleneck cases.

For the maximum satisfiability problem (MAXSAT), the restricted instance when $d = 3$ has also been extensively studied. MAXSAT is the optimisation variant of SAT where the objective is to satisfy the maximum number of clauses. More recent improvements of the upper bound include an  $O^*(1.194^n$) algorithm by \cite{xu2019resolution}, an $O^*(1.191^n)$ algorithm by \cite{belova2020algorithms} and  an $O^*(1.1749^n)$ algorithm by  \cite{brilliantov2023improved}.

\textbf{Our Contributions.} In this paper, we improve the  worst-case upper bound for 3-occur-SAT to $O^*(1.1199^n)$. Our algorithm (refer to section 3), like Wahlstr$\ddot{\text{o}}$m \cite{wahlstrom2005faster} and Peng and Xiao \cite{inproceedings}, is a modified branch-and-reduce algorithm. We apply reduction and branching rules until our formula acquires sufficient structure to enable the use of a fast 3-SAT algorithm by Beigel and Eppstein \cite{492575} as a sub-procedure.  For branch-and-reduce algorithms targeting variants and restricted instances of SAT, the key to improving the upper bound often hinges on effectively managing the cases with poor branching factors. For 3-occur-SAT, these critical bottlenecks for the previous attempts (directly or indirectly) are the branching rules to make the formula monotone and to handle all-negative clauses of length 2 and 3. 

For handling all-negative clauses of length 3 and greater, a more comprehensive analysis, by strategically selecting which variable to branch on and delving into additional cases, proves sufficient. The other bottlenecks are more challenging and requires new algorithmic ideas. We introduced a new reduction rule (step 6d) that generalizes many common reduction rules and allows us to resolve many variables at once. This new reduction rule enables us to impose significantly more structure of our formula. Hence, after branching on one variable, a reduction rule based on an autarkic set \cite{chu2021improved}  (step 8) allow us to achieves monotonicity. To effectively handle all-negative clauses of length two, we also resort to branching on two variables (step 9) when branching on one does not achieve a desirable branching factor.

Finally, we extend our algorithm to an $O^*(1.1199^{(d-2)n})$ algorithm for SAT. 
\begin{table}\centering
\begin{tabular}{c c  c } 
 \hline
 $d= 3$& $d = 4$ & Reference  \\ [0.5ex]
 \hline
 $O^*(1.1299^n)$ & $O^*(1.2766^n)$ & Kullmann and Luckhardt \cite{kullmann1997deciding} \\ 
 $O^*(1.1279^n)$ & $O^*(1.2721^n)$ &Wahlstr$\ddot{\text{o}}$m  \cite{wahlstrom2005faster} \\  
 $O^*(1.1238^n)$ & $O^*(1.2628^n)$ & Peng and Xiao \cite{inproceedings} \\ 
 $O^*(1.1199^n)$ & $O^*(1.2541^n)$ &\textbf{This paper} \\ 
\hline
\end{tabular}
  \caption{Progress for SAT with $d = 3,4$}
\end{table}

\section{Preliminaries}
A Boolean variable $x$ has 2 corresponding literals: the positive literal $x$ and the negative literal $\neg x$. A clause is a disjunction ($\lor$ operator) of literals and a CNF formula is a conjunction ($\land$ operator) of clauses. Let $F$ be a Boolean CNF formula. $Vars(F)$ is the set of variables in $F$. If the literal $x$ occurs in $i$ clauses and literal $\neg x$ occurs in $j$ clauses, the variable $x$ is a ($i,j$) variable. If $j = 0$, the literal $x$ is considered a pure literal. Without loss of generality, assume that $i \geq j$, flipping the signs of the literals when necessary. The degree, $deg(x)$, of the variable $x$ is the number of occurrences of either the literal $x$ or $\neg x$. For $F$ to be an instance of 3-occur-SAT, $deg(x)\leq 3$ for all $x\in Vars(F)$. A formula is $k$-regular if  $deg(x)=k$ for all $x\in Vars(F)$.

For a clause $C$, $|C|$ is the length of the clause which is the number of literals contained in $C$. A $k$-clause is a clause of length $k$. For $V \subseteq vars(F)$, $F_{contain \, V}$ is the set of clauses containing a variable in $V$ and  $F_{exclude \, V}$ = $F / F_{contain \, V}$. An \emph{all-positive clause} is a clause that contains only positive literals and a \emph{all-negative clause} is a clause that contains only negative literals. A formula $F$ is monotone if every clause is either all-positive or all-negative. 

For a clause \(C\), a subclause of $C$ is a subset of the literals in $C$. For a clause $C$ and set of variables $V \subseteq Vars(F)$, the maximal subclause of $C$ without $V$ is the largest subclause in $C$ without a variable in $V$, $maximal(C, V) = \{ l \mid l \in C \ \text{s.t.} \ l \notin V \ \text{and} \ \neg l \notin V\}$.

Lowercase letters are used to indicate variables and literals, while uppercase letters and Greek letters $\{\alpha,\beta,\gamma,\delta,\omega,\sigma\}$ are used to indicate (possibly empty) subclauses.

The neighbour of a \emph{literal} $l$,  $N(l)$, is the set of \emph{variables} that share a clause with the $\textit{literal}$ $l$. If $l$ is a literal of a variable $x$, $F[l]$ is the formula after assigning the variable $x$ to 1 if $l$ is positive or to 0 if $l$ is negative. This process removes every clause containing the literal $l$ and every instance of the literal $\neg l$. $F$[$l_1, \dots, l_k$] is the process repeated for the literals $l_1, \dots, l_k$.

Resolution \cite{davis1960computing} is a classic technique for SAT. For clauses $(x \lor c_1$), $(\neg x \lor c_2)$, the resolvent of the 2 clauses by $x$ is the clause $(c_1\lor c_2)$. For a formula $F$ and a variable $x$, the formula after resolving via $x$ adds all nontrivial resolvents by $x$ to $F$ and removes all clauses containing the variable $x$.

For our algorithm,  a formula is step-$k$ reduced if steps 1 to $k$ of the algorithm do not apply to $F$.
\subsection{Method of Analysis}
We use Kullmann's method from \cite{KULLMANN19991} to analyse the running time of our algorithm. We use the number of variables in $F$, $|Vars(F)|$, as our measure of complexity. Suppose a particular branching has $k$ branches that eliminates $n_1, \dots, n_k$ variables in those branches. The branching factor of this, $\tau(n_1, \dots, n_k)$, is the unique positive root of the equation $\sum_{i} x^{-n_i} =1$. If $\alpha$ is the largest branching factor of all branching conducted by the algorithm, then the algorithm runs in time $O^*(\alpha^n)$.  

Our largest branching factor is $\tau(3,11)$. The largest branching factors of Wahlstr$\ddot{\text{o}}$m \cite{wahlstrom2005faster} and Peng and Xiao\cite{inproceedings} algorithms are $\tau(4,8)$ and $\tau(5,7)$ respectively.

\begin{table}\centering
\begin{tabular}{c  c } 
 \hline
 Branching factors & Approximate Value  \\ [0.5ex] 
 \hline
 $\tau(6,7)$ & 1.11278 \\ 

 $\tau(5,8)$ & 1.1148  \\

 $\tau(4,9)$ & 1.11925  \\
  $\tau(3,11)$ & 1.11984  \\

 $\tau(6,11,14)$ & 1.11984  \\ [1ex] 
\hline
\end{tabular}
 \caption{Relevant branching factors of the algorithm}
\end{table}

\section{Algorithm}

Our algorithm employs a recursive approach, systematically applying the first applicable reduction or branching rule. After each application, the process restarts from the beginning to check for any further applicable steps. Steps 1 and 2 represents the base case for the recursive algorithm. Steps 3-6 are the reduction rules.  Step 7 is our primary branching step. Step 8 is a reduction rule that makes $F$ monotone. Step 9 is our secondary branching step. Step 10 uses the fast 3-SAT algorithm by \cite{492575}.

Our algorithm maintains the invariant that, at the conclusion of any step, $F$ has at most 3 occurrences per variable. Although variable occurrences may exceed three during the resolution, we tackle those specific cases to guarantee that this invariant is restored at the conclusion of each step. 

\medskip\noindent\newline\textbf{Algorithm 3-occur-SAT(F)}
\begin{enumerate}
  \item If $F$ is empty, return True. If $F$ contains an empty clause, return False.
  \item If every clause in $F$ contains a positive literal, return True, as all clauses can be satisfied if we set all variables to 1.  
  \item If $F$ is not standardised, standardise $F$ (see Definition 1).
  \item  The following reduction rules are sufficient to eliminate all variables that have less than 3 occurrences:
  \begin{enumerate}
      \item  Assign all ($k$,0) variables to 1. 
      \item Resolve all (1,1) variables.
  \end{enumerate}
  \item If there are 2 variables, $x$ and $y$, that occur together in 2 clauses, apply one of the following cases:
  \begin{enumerate}
    \item With clauses ($x\lor y\lor \alpha)$, ($x\lor y\lor \beta)$, ($\neg x\lor \gamma$) and ($\neg\,y\lor \delta$), resolutions via $x,y$ gives ($\alpha\lor \gamma\lor \delta$) and ($\beta\lor \gamma\lor \delta$). Introduce a new variable $z$ to get ($z\lor \alpha)$,  $(z\lor \beta)$ and ($\neg z\lor \gamma\lor \delta$).
    \item With clauses $(x\lor  y\lor \alpha)$, $(x\lor \neg\,y\lor \beta)$, ($\neg x\lor \gamma$) and $(y\lor \delta$), resolution via $x,y$ gives ($\alpha\lor \beta\lor \gamma$) and ($\delta\lor \beta\lor \gamma$). Introduce a new variable $z$ to get ($\alpha\lor $ $z$), ($\delta\lor $ $z$) and ($\neg z\lor \beta\lor \gamma$).
    \item With clauses  $(x\lor  y\lor \alpha)$, ($\neg x\lor \neg\,y\lor \beta)$, $(x\lor \gamma$) and ($y\lor \delta$), resolution via $x,y$ gives  ($\gamma\lor \beta\lor \delta$).
    \item With clauses  ($\neg x\lor  y\lor \alpha)$, $(x\lor \neg\,y\lor \beta)$, $(x\lor \gamma$) and $(y\lor \delta)$. Setting $x$ = 1 and $y$ = 1 satisfy all the clauses above.
  \end{enumerate}
  \item The following reductions rules help to tackle all-negative clauses:
    \begin{enumerate}
    \item With clauses  ($\neg x\lor \neg\,y\lor \alpha)$, $(x\lor  u\lor \beta$), $(y\lor \neg u\lor \gamma$), $(u\lor  \,\delta$), resolution via $u$ gives us  ($\neg x\lor \neg\,y\lor \alpha)$, $(x\lor  y\lor \beta\lor \gamma$), $(y\lor \gamma\lor \delta$). Notice that the clause $(x\lor  y\lor \beta\lor \gamma$) is redundant as if $F$ is satisfiable, there will be a satisfying assignment where one of $x$ or $y$ is assigned to 1.
    \item With clauses ($\neg\,x\lor \neg\,y\lor \gamma)$, ($x\lor u$), ($y\lor u$), ($x\lor \alpha$) and ($y\lor \beta$), after resolving via $x$ then $y$, we are left with the clauses ($u\lor \beta\lor \gamma)$, $(u\lor \alpha\lor \gamma)$, $(u\lor \gamma)$, $(\alpha\lor \beta\lor \gamma)$. The first 2 of these clauses are redundant, as they subsume $(u\lor \gamma)$. Introduce a new variable $z$ to get $(u\lor z)$, $(\alpha\lor \beta\lor z)$ and $(\neg z\lor \gamma)$.
    \item With clauses ($\neg\,x\lor \neg\,y)$, ($x\lor u$), ($y\lor u\lor \beta$), ($x\lor \alpha$) and ($y\lor \gamma$), after resolving via $x$ and $y$, we are left with clauses $(u\lor \beta)$,$(\alpha\lor \gamma)$, $(u\lor \gamma)$ and $(u\lor \alpha\lor \beta)$ (redundant). Introduce a new variable $z$ to get $(u\lor \beta)$, $(u\lor z)$, $(\alpha\lor z)$ and $(\neg z\lor \gamma)$.
    \item For all possible sets of variables $V$,  $|V|\leq 10$, perform the safe resolution  of variables in $V$  if the safe resolutions of variables in $V$ decreases the number of variables in $F$.  \end{enumerate}
  \item If the above steps do not apply, there exists an all-negative clause, ($\neg x_1 \lor \dots \lor \neg x_k$). If there exists $w\in \{x_1,\dots,x_k\}$ such that  the branching factor of 3-occur-SAT($F$[$\neg w$]) $\lor$ 3-occur-SAT($F$[$w$]) is at most 1.1199,  return 3-occur-SAT($F$[$\neg w$]) $\lor$ 3-occur-SAT($F$[$w$])
  \item Let $S$ be the set of all variables not in an all-negative clause. Assign all variables in $S$ to 1.
  \item For all-negative clause ($\neg x \lor \neg y$). If there exists $z \in N(x)$ such that the branching factor of 3-occur-SAT($F$[$\neg x, z$]) $\lor$ 3-occur-SAT($F$[$\neg x,\neg z$]) $\lor$ 3-occur-SAT($F$[$x$])  is at most $1.1199$, return 3-occur-SAT($F$[$\neg x, z$]) $\lor$ 3-occur-SAT($F$[$\neg x,\neg z$]) $\lor$ 3-occur-SAT($F$[$x$]). 
  \item Use the fast 3-SAT algorithm by \cite{492575}. 
\end{enumerate}
\textbf{End Algorithm}
\begin{definition}
Standardization of a CNF formula refers to applying the following reductions as far as possible
\end{definition}
\begin{enumerate}
    \item  Subsumption: If there are clauses $C_1$, $C_2$ such that $C_1$ is a subclause of $C_2$, remove $C_2$ from $F$
    \item  Trivial Clauses: Remove all clauses that contain a positive and negative literal of the same variable. i.e. ($x \lor \neg x \lor \alpha)$
    \item  Multi-occurring literal, Remove repeat instances of literals in a clause. i.e. ($x \lor x \lor \alpha) \rightarrow (x \lor \alpha)$ 
    \item Unit Clauses: For all clauses $C$ such that $|C| = 1$, let $C = (l)$. Reduce $F$ to $F[l]$
\end{enumerate}

\begin{definition}[Safe resolution]
For a subset $V$, a safe resolution with respect to $V$ involves resolving all variables in $V$ and introducing new variables to ensure no variable has more than 3 occurrences. 
\end{definition}
After resolution, the number of occurrences of some variables may increase. We thus introduce new variables to form an equisatisifable formula while decreasing the number of occurrences of all variables back to 3.  This can be done via the following process.
\begin{enumerate}
\item Standardize the formula 
    \item  Find the largest subclause, $C$, that appears in 2 separate clauses. If $|C|\geq 2$, for clauses ($C\lor\alpha$), ($C\lor\beta$), introduce a new variable $i$ to create clauses ($C\lor \neg i$), ($\alpha\lor i$) and ($\beta\lor i$)
    \item  Find the variable, $x$, with the largest degree. If $deg(x)\geq 4$, for clauses ($x\lor\alpha$), ($x\lor\beta$), introduce a new variable $i$ to create clauses ($x\lor \neg i$), ($\alpha\lor i$) and ($\beta\lor i$)
\end{enumerate}

Each time we introduce a new variable, we decrease the number of occurrences of a subclause.
\subsection{Paper Overview}

 In Section 4, we state the structure imposed on our formula by the reduction rules (step 3-6). In Section 5, we list sufficient conditions for us to apply our primary branching rule (step 7). In Section 6, we show that enough structure has been imposed that we can make the formula monotone simply by removing all clauses that contain a variable not in an all-negative clause (step 8).  In Section 7, we continue to tackle all-negative clauses of length 2 with the aid of monotonicity and a secondary branching rule (step 9). In Section 8, we show that if all the previous cases do not apply, the fast 3-SAT algorithm by \cite{492575} is applicable and runs in time $O^*(1.1092^n).$

\section{Step 1 - 6: Reduction Rules}
 We first note that if any of the reduction rules apply, we decrease the number of variables in $F$ by at least 1. This prevents endless loops between reduction rules. In this section, we describe the structure imposed by the reduction and give a sufficient condition for step 6d to be applicable.

\begin{lemma}\label{3}   If $F$ is a step-6 reduced formula, then all variables are (2,1) variables. \end{lemma}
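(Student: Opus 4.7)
The plan is to argue by simple case analysis on the possible pairs $(i,j)$ describing a variable's occurrences, exploiting the convention $i\geq j$ (established in the Preliminaries by flipping signs when necessary), the global degree bound from 3-occur-SAT, and the reductions in step 4.

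First I would observe that because $F$ is a 3-occur-SAT instance, every variable $x$ satisfies $\deg(x) = i + j \leq 3$. Combined with $i \geq j \geq 0$, the only candidate pairs $(i,j)$ are $(0,0)$, $(1,0)$, $(2,0)$, $(3,0)$, $(1,1)$ and $(2,1)$. The pair $(0,0)$ is ruled out because such a variable would not belong to $\mathrm{Vars}(F)$. The pure-literal cases $(1,0), (2,0), (3,0)$ cannot occur because step~4a would assign such a variable to $1$ and therefore apply, contradicting that $F$ is step-6 reduced. The pair $(1,1)$ is ruled out by step~4b, which resolves every $(1,1)$ variable. This leaves only $(2,1)$.

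The one subtlety, and the only point I would spend care on, is that the later reduction rules (steps~5 and~6) may introduce fresh variables $z$ (and step~6d may introduce auxiliaries during safe resolution) and may also temporarily change occurrence counts of existing variables via resolutions. I would verify case by case that each of the explicit rules 5a, 5b, 6b, 6c introduces its new variable $z$ with exactly two occurrences of one polarity and one of the other, so $z$ is a $(2,1)$ variable on introduction. For step~6d, I would appeal to Definition~2 together with the three-step post-processing for safe resolution, which by construction introduces auxiliary variables with degree at most $3$ and re-standardises the formula.

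The main (and essentially only) obstacle is to argue rigorously that even if some rule in steps~5--6 transiently produces a variable of smaller degree (for instance when existing literals cancel in a resolvent), this does not remain so in a step-6 reduced formula. The argument is that the algorithm restarts from step~1 after every rule application, so any resulting pure literal or $(1,1)$ variable would be eliminated immediately by step~4, which would contradict the assumption that no step 1--6 rule applies to $F$. Hence in a step-6 reduced formula every variable must simultaneously satisfy $j \geq 1$, $(i,j)\neq (1,1)$, and $i+j\leq 3$ with $i\geq j$, forcing $(i,j)=(2,1)$, which is exactly the statement of Lemma~\ref{3}.
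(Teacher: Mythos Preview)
Your proposal is correct and follows essentially the same approach as the paper: enumerate the possible $(i,j)$ pairs under $i+j\leq 3$ and $i\geq j$, and rule out everything except $(2,1)$ using step~4a (pure literals) and step~4b ($(1,1)$ variables). The paper's own proof is just the compressed version of your first two paragraphs; your additional discussion about newly introduced variables and transient occurrence counts is not needed for this lemma, since ``step-6 reduced'' is a static hypothesis (no rule in steps~1--6 applies to $F$), and the degree bound $\deg(x)\leq 3$ is already guaranteed by the algorithm's stated invariant.
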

\begin{proof}
    If there is a (3,0) variable  or a variable with 2 or less occurrences, it will be eliminated by step 4 of the algorithm. By convention, the literal $x$ has at least as many occurrences as the literal $\neg x$. Hence, all variables are (2,1) variables.
\end{proof}

\begin{lemma}\label{many rules}  Let $F$ be a step-6 reduced formula, with an all-negative clause $(\neg x \lor \neg y \lor \gamma)$ along with ($x \lor \alpha$) and ($y \lor \beta$). Then, if  $z \in Vars(\alpha)$ and $z\in Vars(\beta)$, (i) the literal $\neg z$ is in neither $\alpha$ or $\beta$ (ii) $|\alpha|\geq2$ or $|\beta|\geq2$ (iii) If $|\gamma|=0$, $|\alpha|\geq2$ and $|\beta|\geq2$\end{lemma}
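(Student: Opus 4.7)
The plan is to prove each of the three items by contradiction, in each case matching the putative counterexample against one of the clause patterns blocked by a step-6 reduction (step~6a for (i), step~6b for (ii), and step~6c for (iii)). Throughout, I will repeatedly invoke Lemma~\ref{3}, which forces every variable to be a $(2,1)$ variable and so guarantees the existence of the ``missing'' clauses needed to complete each pattern, and step~5, which forbids two variables from co-occurring in two clauses and so rules out accidental coincidences among those clauses.

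For (i), I case-analyse on which literal of $z$ appears in $\alpha$ and in $\beta$. The easy bad case is $\neg z$ in both: this gives $z$ at least two negative occurrences, contradicting Lemma~\ref{3}. The substantive bad case is $\neg z$ in exactly one of $\alpha,\beta$ with $z$ in the other; here the third occurrence of $z$, forced positive by Lemma~\ref{3}, must lie in some new clause $(z \lor \delta)$, and step~5 ensures this clause is genuinely distinct from the three already present (and that $\delta$ contains neither $x$ nor $y$). Together the four clauses fit the template of step~6a with $u := z$ (after possibly swapping the roles of $x$ and $y$), contradicting step-6 reducedness.

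For (ii), I assume $|\alpha| = |\beta| = 1$ and derive a contradiction with step~6b. Part~(i) forces $\alpha = \beta = (z)$, so the named clauses become $(x \lor z)$ and $(y \lor z)$; together with the second positive occurrences of $x$ and of $y$ provided by Lemma~\ref{3}, these produce exactly the five-clause configuration of step~6b. A similar plan handles (iii): assuming $|\gamma| = 0$ and, by symmetry, $|\alpha| = 1$, part~(i) gives $\alpha = (z)$, while $(y \lor \beta)$ takes the form $(y \lor z \lor \beta')$; padding with the second positive clauses of $x$ and $y$ matches the step~6c pattern with $u := z$.

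The main obstacle will be the routine but non-trivial bookkeeping rather than the pattern matching itself. In each case I must verify that the auxiliary clauses produced via the $(2,1)$ property are truly distinct from the named ones and do not share unwanted literals (for instance, the second positive clause of $x$ cannot itself contain $z$, else $z$ and $x$ would share two clauses). Every such side condition follows directly from step~5 and from standardisation (no subsumed clauses, no duplicate literals), so each check reduces to a short appeal to a previously applied rule.
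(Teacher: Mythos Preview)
Your proposal is correct and follows exactly the paper's approach: the paper's entire proof is the one-liner ``Step 6a, 6b and 6c of the algorithm ensures condition (i), (ii) and (iii) respectively,'' and you have simply unpacked each of those three pattern matches, together with the (2,1)-variable bookkeeping from Lemma~\ref{3} and step~5 that the paper leaves implicit.
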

\begin{proof}
    Step 6a, 6b and 6c of the algorithm ensures condition (i), (ii) and (iii) respectively. \end{proof}

 \begin{lemma}\label{at most 2} 
    Let $F$ be a step-6 reduced formula, with an all-negative clause $(\neg x_1 \lor \dots \lor \neg x_k)$. For all variables $v$, clauses containing $x_1, \dots, x_k$ contain at most 2 instance of a variable $v$.
 \end{lemma}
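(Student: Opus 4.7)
The plan is a contradiction argument using step~6a. Suppose some variable $v$ occurs in three distinct clauses of $F_{contain\{x_1,\ldots,x_k\}}$. By Lemma~\ref{3}, $v$ is a $(2,1)$ variable with exactly three occurrences in $F$, so all three clauses of $v$ must lie in $F_{contain\{x_1,\ldots,x_k\}}$. I read the statement with $v \notin \{x_1,\ldots,x_k\}$ in mind (the case $v = x_j$ trivially contributes three clauses — the all-negative clause $C := (\neg x_1 \lor \cdots \lor \neg x_k)$ plus the two positive clauses of $x_j$ — but is not what the lemma is aimed at). Since $v$ is not among the $x_i$'s, it does not appear in $C$; and since each $x_i$ has its only negative occurrence in $C$, any other clause containing some $x_i$ must contain it positively. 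The three clauses of $v$ are therefore forced into the form
\[
A_1 = (v \lor x_{i_1} \lor \alpha_1),\quad A_2 = (v \lor x_{i_2} \lor \alpha_2),\quad A_3 = (\neg v \lor x_{i_3} \lor \alpha_3),
\]
for some $i_1, i_2, i_3 \in \{1,\ldots,k\}$.

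Next I would invoke step~5 to conclude that $i_1, i_2, i_3$ are pairwise distinct: if any two coincided, then $v$ and the corresponding $x_i$ would appear together in two clauses, triggering one of the sub-cases of step~5 and contradicting the assumption that $F$ is step-6 reduced.

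Finally I would match the four clauses $C, A_1, A_3, A_2$ to the template of step~6a by setting $x := x_{i_1}$, $y := x_{i_3}$, $u := v$, and letting $A_2$ play the role of $(u \lor \delta)$ with $\delta := x_{i_2} \lor \alpha_2$. Then $C$ contains the subclause $\neg x \lor \neg y$ (pairwise distinctness guarantees $x \neq y$ as literals), $A_1 = (x \lor u \lor \alpha_1)$ fits $(x \lor u \lor \beta)$, $A_3 = (y \lor \neg u \lor \alpha_3)$ fits $(y \lor \neg u \lor \gamma)$, and $A_2$ fits $(u \lor \delta)$ — exactly the step~6a pattern. Hence step~6a would fire, contradicting the step-6 reducedness of $F$. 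The only subtlety is the index-distinctness argument, especially $i_1 \neq i_3$, so that $\neg x$ and $\neg y$ genuinely correspond to two distinct negative literals inside $C$; everything else is a routine template match.
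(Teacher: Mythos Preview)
Your proof is correct and follows essentially the same route as the paper: both reduce the claim to step~6a (the paper does so by invoking Lemma~\ref{many rules}(i), which is just a restatement of the step~6a constraint). Your argument is in fact more careful than the paper's one-line proof, making the pairwise distinctness of $i_1,i_2,i_3$ explicit via step~5 and flagging the $v\in\{x_1,\ldots,x_k\}$ edge case, both of which the paper leaves implicit.
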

 \begin{proof}
     By lemma \ref{many rules}, if the negative literal $\neg v$ shares a clause with any of $x_1, \dots, x_k$, then the literal $\neg v$ does not share a clause with any of $x_1, \dots, x_k$. Thus, clauses containing $x_1, \dots, x_k$ contain at most 2 instance of a variable $v$. 
 \end{proof}

\begin{lemma}\label{key reduction}
    If there exists a subset $V$, $|V|>3$ and $|\{maximal(C,V) \mid C \in F_{contain V})\} \setminus \emptyset| \leq 3$, then after the safe resolution of variables in $V$, the number of variables in F decreases.
\end{lemma}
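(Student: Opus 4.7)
The plan is to show that safely resolving the variables in $V$ removes the $|V| > 3$ variables of $V$ while introducing strictly fewer auxiliary variables, giving a net decrease in $|Vars(F)|$.

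First I would characterise the clauses produced by the resolution. Every clause obtained by successively resolving away all $V$-variables is a union of the non-$V$ parts of its parent clauses in $F_{contain V}$; by the hypothesis these non-$V$ parts lie in $\{A_1, A_2, A_3\}$, so every fully resolved clause is a union of some subset of $\{A_1, A_2, A_3\}$. Standardization then keeps only the inclusion-minimal such unions, and since any antichain of nonempty subsets of $\{1,2,3\}$ has size at most three, the resolution contributes at most three new clauses.

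Next, I would control the new degrees. For a non-$V$ variable $v$, let $p_v$ denote the number of distinct $A_i$ containing $v$ and $e_v$ the number of its occurrences in $F_{exclude V}$. Because each $A_i$ containing $v$ arises from a distinct clause in $F_{contain V}$ containing $v$, the original degree of $v$ is at least $p_v + e_v$, so $p_v + e_v \leq 3$. In the clean sub-case where the surviving resolvents are exactly among $\{A_1, A_2, A_3\}$, the new degree of $v$ equals $p_v + e_v \leq 3$ and no split is required. In the remaining sub-case, some surviving resolvents are strict unions like $A_i \cup A_j$; a variable may then appear in multiple such unions, but those unions share the common subclause $A_i$, so step 2 of safe resolution factors this shared subclause out with a single auxiliary variable (or, if $|A_i|=1$, step 3 splits that single variable with one auxiliary variable).

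The main obstacle will be bounding the total number of auxiliary variables introduced. Each split produces a new clause $(S \lor \neg i)$ that can itself share subclauses with other clauses, including clauses of $F_{exclude V}$ which, by step 5, pairwise share at most one variable with any single original clause but may share more with a resolvent since the latter accumulates variables across several original clauses. I plan to handle this by using a lexicographic monovariant, for instance the multiset of sizes of size-$\geq 2$ shared subclauses, that strictly decreases under each split performed largest-first; together with the bound of at most three initial resolvents, this caps the total number of auxiliary variables at a small constant. Combined with $|V| \geq 4$, this yields the claimed strict decrease in $|Vars(F)|$.
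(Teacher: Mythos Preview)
Your first paragraph is fine and matches the paper: every fully resolved clause is a union of some subset of $\{A_1,A_2,A_3\}$, and after subsumption the survivors form an antichain in $\mathcal{P}(\{A_1,A_2,A_3\})$.

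The gap is in your bound on the number of auxiliary variables. You need \emph{at most three} auxiliaries, since $|V|\geq 4$; ``a small constant'' obtained from an unspecified monovariant does not close the argument. The paper's proof supplies the missing counting step very directly: in any antichain of $\mathcal{P}(\{A_1,A_2,A_3\})$, each $A_i$ is contained in at most two members. Therefore one auxiliary variable suffices to collapse the two occurrences of $A_i$ into a single occurrence, and doing this once for each $A_i$ costs at most three new variables. After this, every $A_i$ appears exactly once among the (rewritten) resolvent clauses, so a variable $v$ lying in $p_v$ of the $A_i$'s has exactly $p_v$ occurrences there; together with your correct observation that $p_v + e_v \leq 3$, the degree bound is restored without any further splits.

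Your second paragraph is heading toward this idea (factoring out a shared $A_i$), but you never isolate the ``each $A_i$ occurs at most twice in the antichain'' fact, which is what makes the count come out to exactly three. Your third paragraph then overcomplicates matters: the worry about splits creating new shared subclauses with $F_{exclude\,V}$, and the lexicographic monovariant, are both unnecessary once you factor out each $A_i$ as a block. Replace the monovariant plan with the one-line antichain observation and the proof is complete.
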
 \begin{proof}
    Suppose the 3 maximal subclauses are $c_1, c_2$ and $c_3$. After resolving all the variables in $V$, the clauses remaining are $F_{exclude \, V}$ and clauses that only contain any combination of $c_1, c_2$ and $c_3$. Let this set of clauses be $C$, $C \subseteq\mathcal{P}(\{ c_1,c_2,c_3\})$. If there are clauses $D_1,D_2 \in C$ such that $D_1 \subsetneq D_2$, we can remove $D_2$ from $C$. Thus, $C$ is an antichain of $\mathcal{P}(\{ c_1,c_2,c_3\})$. For $c_i \in \{c_1,c_2,c_3\}$, we note that all antichains of $\mathcal{P}(\{ c_1,c_2,c_3\})$ contains at most 2 subsets that contains $c_i$. Hence, we only need to introduce 1 additional variable for $c_i$ to have only 1 occurrence. 
    
    Thus, as we need to introduce at most 3 additional variables such that each of $c_1,c_2,c_3$ have only 1 occurrence, if $|V| > 3$, the safe resolution of variables in $V$ decrease the number of variables.

    \end{proof}
\begin{figure}[h]
    \centering
    \begin{tikzpicture}
        \node[draw, rectangle, minimum width=3.5cm, minimum height=6cm] (box1) at (0,0) {
            \begin{tabular}{c}
                \textbf{Stage 1: }\\ $(\neg x \lor \neg y \lor C_1)$ \\ $(x \lor a)$ \\ $(x \lor b)$ \\ $(y \lor c)$ \\ $(y \lor d)$ \\ $(a \lor b )$ \\ $(c \lor d)$ \\ $(\neg a \lor \neg c)$ \\ $(\neg b \lor C_2)$\\ $(\neg d \lor C_3)$\end{tabular}
        };

        \node[draw, rectangle, minimum width=3cm, minimum height=4cm, right=1.5cm of box1] (box2) {
            \begin{tabular}{c}
               \textbf{Stage 2: }\\ $(C_1 \lor C_2 \lor C_3)$\\ $(C_1 \lor C_2)$\\ $(C_2 \lor C_3)$\\ $(C_1 \lor C_3)$\end{tabular}
        };

        \node[draw, rectangle, minimum width=3cm, minimum height=3cm, right=1.5cm of box2] (box3) {
            \begin{tabular}{c}
               \textbf{Stage 3: }\\ $(C_1 \lor C_2)$\\ $(C_2 \lor C_3)$\\ $(C_1 \lor C_3)$\end{tabular}
        };

        \node[draw, rectangle, minimum width=3.5cm, minimum height=4cm, right=1.5cm of box3] (box4) {
            \begin{tabular}{c}
               \textbf{Stage 4: }\\ $(v_1 \lor v_2)$\\ $(v_2 \lor v_3)$\\ $(v_1 \lor v_3)$\\ $(\neg v_1 \lor C1)$\\ $(\neg v_2 \lor C_2)$\\ $(\neg v_3 \lor C_3)$\end{tabular}
        };

        \draw[->] (box1.east) -- (box2.west);
        \draw[->] (box2.east) -- (box3.west);
        \draw[->] (box3.east) -- (box4.west);
    \end{tikzpicture}
    \caption{Example of Step 6d of the algorithm and lemma \ref{key reduction}. In stage 1, we have $F_{contain \{x,y,a,b,c,d\}}$ and the maximal subclauses (without $ \{x,y,a,b,c,d\}$) $C_1, C_2, C_3$. In stage 2, we have the clauses after resolving variables in $\{x,y,a,b,c,d\}$. In stage 3, we remove the redundant clause after standardization. In stage 4, we introduce 3 new variables $v_1,v_2,v_3$ to reduce the occurrences of our maximal subclauses back to 1. }
\end{figure}
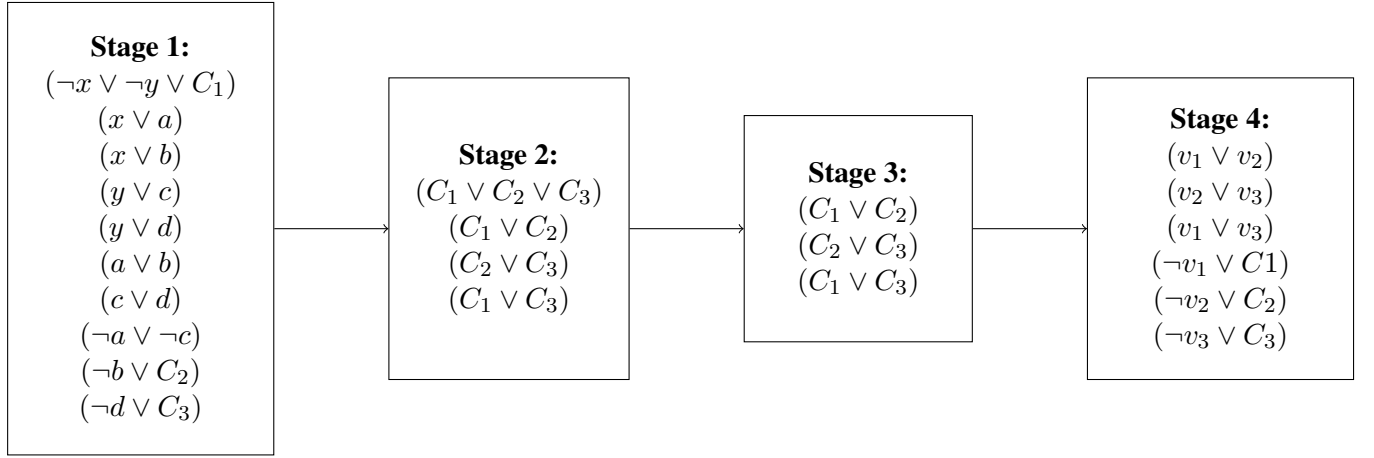

\begin{lemma}
    If there exists a subset $V$ such that $F_{contain V}$ contains variables only in $V \cup U$, if $|V| \geq 4$ and $|U| \leq 3$, then after the safe resolution of variables from some subset of $V \cup U$, the number of variables in F decreases.
\end{lemma}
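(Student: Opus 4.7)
The plan is to select a suitable subset $V' \subseteq V \cup U$ and apply safe resolution, using Lemma~\ref{key reduction} where possible and otherwise a direct accounting argument. I would first try $V' = V$: since every clause in $F_{contain V}$ has its variables in $V \cup U$, the maximal subclauses of $F_{contain V}$ with respect to $V$ live over the at most three variables of $U$ and hence belong to a small finite universe of clauses. If the number of distinct non-empty maximal subclauses is at most three, Lemma~\ref{key reduction} closes the case immediately since $|V| \geq 4 > 3$.

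If more than three distinct maximal subclauses appear, I would enlarge $V'$ by absorbing variables from $U$. Partition $U$ into $U_{\text{int}}$, those whose occurrences all lie inside $F_{contain V}$, and $U_{\text{ext}} = U \setminus U_{\text{int}}$. Taking $V' = V \cup U_{\text{int}}$ adds no new clauses to $F_{contain V'}$ (the added variables contribute nothing outside what was already in $F_{contain V}$), yet the maximal subclauses of $F_{contain V'}$ with respect to $V'$ now live over $U_{\text{ext}}$ only. When $|U_{\text{ext}}| \leq 1$, this yields at most two distinct non-empty maximal subclauses, so Lemma~\ref{key reduction} again applies with $|V'| \geq |V| \geq 4 > 3$.

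The remaining and hardest case is $|U_{\text{ext}}| \in \{2, 3\}$. Here I would take $V' = V \cup U$, so that the maximal subclauses of $F_{contain V'}$ with respect to $V'$ come only from $F_U^{\text{ext}} := F_{contain U} \setminus F_{contain V}$. Because each $u \in U$ has degree at most three in $F$, $F_U^{\text{ext}}$ contains at most $3|U_{\text{ext}}| \leq 9$ clauses, which bounds the complexity of the new material introduced into the resolution. The main obstacle is then a careful accounting: one must argue that the $|V| + |U|$ variables eliminated by resolving $V'$ strictly exceed the number of new variables that the safe resolution procedure needs to introduce, both for collapsing shared subclauses among the resolvents and for repairing any literals whose degree spikes above three. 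The hypothesis $|V| \geq 4$, together with $|U| \leq 3$ and the fact that antichains of clauses over at most three variables have bounded size, should supply the necessary slack; making this rigorous likely requires a refined case analysis on the structure of $F_U^{\text{ext}}$, adapting the antichain argument in Lemma~\ref{key reduction} to the setting where maximal subclauses may share literals across only a few variables.
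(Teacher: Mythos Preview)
Your proposal has the right overall shape—absorb some of $U$ into the resolved set and invoke Lemma~\ref{key reduction}—but the partition $U_{\text{int}}/U_{\text{ext}}$ you chose is the wrong threshold, and this is why your final case does not close. You defined $U_{\text{int}}$ as the variables of $U$ with \emph{all} occurrences inside $F_{contain\,V}$, which leaves in $U_{\text{ext}}$ every variable with exactly two occurrences inside and one outside. Such a variable then contributes two internal clauses with non-empty maximal subclause, and with $|U_{\text{ext}}|=3$ you can genuinely get more than three distinct maximal subclauses. Your fallback of taking $V'=V\cup U$ then forces you into the accounting over the external clauses $F_U^{\text{ext}}$, where Lemma~\ref{key reduction} no longer applies directly and the sketch you give (``adapting the antichain argument'') is not a proof.

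The paper sidesteps this entirely by using the threshold \emph{two} instead of three: let $U'\subseteq U$ be the variables with at least two occurrences in $F_{contain\,V}$, and take $V'=V\cup U'$. Then each $u\in U\setminus U'$ has exactly one occurrence in $F_{contain\,V}$, so at most $|U\setminus U'|$ clauses of $F_{contain\,V}$ have non-empty maximal subclause with respect to $V'$; and each $u\in U'$ has at most one occurrence outside $F_{contain\,V}$ (degree at most three), so at most $|U'|$ new clauses are added to $F_{contain\,V'}$. The total number of non-empty maximal subclauses is therefore at most $|U\setminus U'|+|U'|=|U|\leq 3$, and Lemma~\ref{key reduction} applies in one stroke with no residual case analysis. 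The single missing idea in your argument is this choice of threshold.
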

\begin{proof}
    Let $U' \subseteq U$ be the set of variables in $U$ that have 2 or more occurrences in $F_{contain V}$. For each clause $C \in F_{contain V \cup U'}$, $maximal(C,V) \neq \emptyset$ if (1) $C \in F_{contain V}$ and there is a variable in $U\setminus U'$ that is in $C$, or (2) $C \notin F_{contain V}$ and there is a variable in $U'$ that is in $C$. At most $|U \setminus U'|$ clauses are applicable for the former case as all variables in $U \setminus U'$ have at most 1 occurrence in $F_{contain V}$. At most $| U'|$ clauses are applicable for the latter case as all variables in $ U'$ have at most 1 occurrence not in $F_{contain V}$. Thus, $|\{maximal(C,V \cup U') \mid C \in F_{contain V \cup U'})\} \setminus \emptyset| \leq 3 $ and by lemma \ref{key reduction}, after the safe resolution of variables in $V \cup U'$, the number of variables in F decreases.
\end{proof}

\section{Step 7: Dealing with all-negative clauses}
In this section, for all negative clause $C = (\neg x_1\lor \ldots\lor \neg x_k)$, we  show several conditions that are sufficient to ensure there exists a $w\in \{x_1,\dots,x_k\}$ such that returning 3-occur-SAT($F$[$\neg w$]) $\lor$ 3-occur-SAT($F$[$w$]) has branching factor of at most 1.1199. We first give a lower bound on the number of variables eliminated in when branching on a variable in an all-negative clause, followed by delving into the cases where $|C| \geq 4, |C|=3$ and $|C| =2$.

 \begin{lemma}\label{x}
    In the branch $F$[$x$], the variable $x$ and the variables in $N(x)$ are eliminated.
    \end{lemma}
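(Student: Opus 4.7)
The plan is to combine Lemma~\ref{3} with a direct case analysis of what happens to each neighbour of $x$ once $x$ is set to true. Since $F$ is step-6 reduced, every variable is a $(2,1)$-variable, so in particular $x$ appears positively in exactly two clauses $C_1, C_2$ and negatively in exactly one clause. Setting $x = 1$ removes $x$ from the formula, satisfies and deletes $C_1, C_2$, and shortens the clause containing $\neg x$; this accounts for the first eliminated variable.

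I then argue that every $y \in N(x)$ is also eliminated by the subsequent reduction rules. By definition of $N(x)$, the variable $y$ has some literal in $C_1 \cup C_2$. The key structural point is that $y$ cannot have literals in both $C_1$ and $C_2$: otherwise $x$ and $y$ would co-occur in two clauses and step~5 would apply, contradicting the assumption that $F$ is step-6 reduced. Hence $y$ loses exactly one of its three literal occurrences when $C_1$ and $C_2$ are deleted.

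A two-case split on the sign of that lost occurrence finishes the argument. If $y$ appeared positively in $C_1$ or $C_2$, then, since $y$ was a $(2,1)$-variable, it is left as a $(1,1)$-variable and is resolved away by step~4b. If instead $y$ appeared negatively, then $y$ loses its single negative occurrence and becomes a $(2,0)$ pure literal, which step~4a assigns to $1$. Either way, $y$ disappears from the formula, yielding a total of $1 + |N(x)|$ eliminated variables in the branch $F[x]$.

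I do not foresee a substantial obstacle here; this is essentially structural bookkeeping rather than a combinatorial argument. The two delicate points are (i) invoking step~5 correctly to rule out $x$ and $y$ co-occurring twice, and (ii) checking that the cleanup by steps~4a and~4b only deletes variables and does not somehow reintroduce a variable of $N(x)$ (new auxiliary variables are only created by the safe-resolution steps 5 and~6, not by the pure-literal or $(1,1)$-rules). Once those two checks are made, the conclusion follows directly from the case split above.
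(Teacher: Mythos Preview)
Your proposal is correct and follows essentially the same approach as the paper: after setting $x=1$ the clauses containing the literal $x$ vanish, so every variable in $N(x)$ drops to degree at most $2$ and is cleaned up by step~4. The paper's own proof is terser and does not bother with the step-5 argument (even if $y$ occurred in both $C_1$ and $C_2$ it would still drop to degree $\le 2$), so your case split on the sign of the lost occurrence and the invocation of step~5 are more detail than strictly needed, but not wrong.
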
 \begin{proof}
        In the branch $F$[$x$], all clauses containing the literal $x$ is removed. All variables in $N(x)$ will have 2 or less occurrences and can be eliminated by step 3 or step 4 of the algorithm. 
    \end{proof}
\begin{lemma}\label{not x}  
    With clauses $(\neg x_1\lor \ldots\lor \neg x_k)$, ($x_1\lor\alpha$) and ($x_1\lor\beta$), in the branch $F$[$\neg x_1$], variables $x_1,\dots,x_k$ and variables in $N(x_2) \cup N(x_3) \cup \dots \cup N(x_k)$ are eliminated. Furthermore, variables in $\alpha$ or $\beta$ are eliminated if $|\alpha| = 1$ or $|\beta| = 1$ respectively.\end{lemma}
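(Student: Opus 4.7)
The plan is to trace which reduction rules fire inside the branch $F[\neg x_1]$ and to tally the variables they eliminate. The initial step is routine: the assignment $x_1\mapsto 0$ satisfies the all-negative clause $(\neg x_1\lor\ldots\lor\neg x_k)$ and removes it, while the clauses $(x_1\lor\alpha)$ and $(x_1\lor\beta)$ lose their $x_1$ literal and collapse to $\alpha$ and $\beta$. Thus $x_1$ is eliminated immediately.

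For $x_2,\ldots,x_k$ I would invoke Lemma \ref{3}: in a step-6 reduced formula every variable is of type $(2,1)$, so each $x_i$ has a unique negative occurrence, which must be the one already inside $(\neg x_1\lor\ldots\lor\neg x_k)$. Once that clause has been removed, every such $x_i$ is a pure literal with at most 2 remaining (positive) occurrences, and so step 4a sets it to 1. This deletion wipes out every clause containing $x_i$, and those clauses collectively contain every variable in $N(x_i)$. Any $v\in N(x_i)$ had degree 3 by Lemma \ref{3} and appeared in at least one of the deleted clauses, so its degree drops to at most 2; step 4 (either 4a as a pure literal or 4b as a $(1,1)$-variable) then eliminates $v$, possibly after a short cascade. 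Hence every variable in $N(x_2)\cup\ldots\cup N(x_k)$ is eliminated.

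The ``furthermore'' claim is even quicker: if $|\alpha|=1$ then the residue of $(x_1\lor\alpha)$ after the branch is a unit clause, and the unit-clause rule inside standardization (step 3) propagates the single literal, eliminating the variable in $\alpha$; the case $|\beta|=1$ is symmetric. I do not anticipate a genuine obstacle here, since the argument amounts to book-keeping of which reduction rule becomes enabled by which deletion. The one point that deserves to be spelt out carefully is the appeal to Lemma \ref{3} that forces $\neg x_i$ to live only inside the given all-negative clause — this is exactly what makes each $x_i$ become pure once that clause disappears, which in turn drives the subsequent elimination cascade into $N(x_2)\cup\ldots\cup N(x_k)$.
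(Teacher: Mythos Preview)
Your proposal is correct and follows essentially the same approach as the paper: after assigning $x_1\mapsto 0$, the all-negative clause is satisfied, the remaining $x_i$ become pure, setting them to $1$ removes their clauses and drops the degree of every neighbour below $3$, and the unit clauses coming from $\alpha,\beta$ handle the ``furthermore'' part. Your write-up is in fact more explicit than the paper's, since you spell out via Lemma~\ref{3} why each $\neg x_i$ has no other occurrence and hence why $x_i$ genuinely becomes pure---a point the paper simply asserts.
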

    \begin{proof}
In the branch $F$[$\neg x_1$], $x_2,\dots,x_k$ are now pure literals and are assigned to 1. All clauses they are involved in are eliminated and hence all variables in those clauses are eliminated. In the branch $F$[$\neg x_1$], the clauses ($x_1\lor\alpha$) and ($x_1\lor\beta$) becomes just $\alpha$ and $\beta$. If either $\alpha$ or $\beta$ contain just 1 variable, they become a one-literal clause and we can set the literal to 1.

\end{proof}

\subsection{Tackling all-negative clauses of length 4 or greater}

 \begin{lemma}\label{limit} 
    Let $F$ be a step-6 reduced formula, with an all-negative clause $(\neg x_1 \lor \dots \lor \neg x_k)$. Let $n$ be the number of 2-clauses the literals $x_2, \dots, x_k$ are in. Then $|N(x_2) \cup \dots \cup N(x_k)| \geq \max(n, (|N(x_2)| + \dots + |N(x_k)|)/2)$.
 \end{lemma}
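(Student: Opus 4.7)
The plan is to prove the two lower bounds on $|N(x_2)\cup\cdots\cup N(x_k)|$ separately.

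For the bound $|N(x_2)\cup\cdots\cup N(x_k)| \ge n$, I will map each 2-clause of the form $(x_i \lor \ell)$ with $i \in \{2,\ldots,k\}$ to the variable of $\ell$ (its ``partner''), which lies in $N(x_i)$, and show this map is injective. Two 2-clauses with the same $x_i$ sharing a partner are either identical (removed by standardization in step 3) or of the form $(x_i \lor v),(x_i \lor \lnot v)$; in the latter case $v$ and $x_i$ occur together in two clauses, so step 5b reduces the formula. Two 2-clauses $(x_i \lor \ell),(x_j \lor \ell')$ with $i \ne j$ (which forces $k \ge 3$, hence $|\gamma| \ge 1$ in the all-negative clause) sharing a partner $v$ split into three sub-cases: both literals equal to $v$ contradict Lemma~\ref{many rules}(ii) since $|\alpha| = |\beta| = 1$; opposing signs contradict part (i); both literals equal to $\lnot v$ force $\lnot v$ to have at least two occurrences, violating Lemma~\ref{3}.

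For the bound $|N(x_2)\cup\cdots\cup N(x_k)| \ge \tfrac{1}{2}(|N(x_2)|+\cdots+|N(x_k)|)$, the first step is the structural claim that no clause contains two positive literals from $\{x_1,\ldots,x_k\}$. Indeed, if $x_i$ and $x_j$ both appear positively in some clause $P$, they also co-occur negatively in the all-negative clause; so $x_i, x_j$ share two clauses and step 5c applies, contradicting step-6 reducedness. It follows that every positive clause of any $x_i$ ($i \ge 2$) contains exactly one literal from $\{x_1,\ldots,x_k\}$, namely $x_i$, and all other variables in it are ``ordinary'' (outside $\{x_1,\ldots,x_k\}$); in particular $\bigcup_i N(x_i)$ consists entirely of ordinary variables.

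Setting $c(v):=|\{i \in \{2,\ldots,k\} : v \in N(x_i)\}|$, one has $\sum_i |N(x_i)| = \sum_v c(v)$, so it suffices to show $c(v) \le 2$ for every $v$ in the union. A short argument via Lemma~\ref{many rules}(i), together with step 5b, establishes that every ordinary $v$ has a consistent sign across all clauses containing some $x_i$: if positive, $v$'s two positive occurrences lie in at most two distinct $x_i$-positive clauses (step 5 precludes $v$ from sharing two clauses with the same $x_i$), giving $c(v) \le 2$; if negative, the single occurrence of $\lnot v$ forces $c(v) \le 1$. Summation yields $\sum_i |N(x_i)| \le 2|\bigcup_i N(x_i)|$, as desired. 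The main obstacle is establishing the structural claim via step 5c; once it is in hand, the $c(v) \le 2$ counting and the injection argument for the first bound are both routine.
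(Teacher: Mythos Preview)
Your proof is correct and follows essentially the same approach as the paper. The paper's proof is terse---it cites Lemma~\ref{many rules} for the injectivity of the partner map (giving the bound $\ge n$) and Lemma~\ref{at most 2} for the bound $c(v)\le 2$ (giving the averaging inequality)---while you unpack those citations inline, including the explicit structural claim (via step~5c) that no two $x_i$'s share a positive clause, which the paper leaves implicit.
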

\begin{proof}
By lemma \ref{many rules}, if there are clauses $(x_i \lor l_i)$ and $(x_j \lor l_j)$, then the variables of $l_i$ and $l_j$ are different. Thus, $|N(x_2) \cup \dots \cup N(x_k)| \geq n$.  By lemma \ref{at most 2}$, |N(x_2) \cup \dots \cup N(x_k)| \geq (|N(x_2)| + \dots + |N(x_k)|)/2$
\end{proof}

\begin{lemma}\label{C=4} 
    If $|C| \geq 4$, there is always a $w\in \{x_1,\dots,x_k\}$, such that returning 3-occur-SAT($F$[$\neg w$]) $\lor$ 3-occur-SAT($F$[$w$]) has a  branching factor of at most $\tau(4,9).$ \end{lemma}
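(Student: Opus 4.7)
The plan is to choose $w = x_i$ from the all-negative clause $C = (\neg x_1 \lor \cdots \lor \neg x_k)$ based on the structure of the positive clauses of the $x_l$'s, then apply Lemmas~\ref{x}, \ref{not x}, and~\ref{limit} to lower-bound the number of variables eliminated in each of $F[w]$ and $F[\neg w]$. By Lemma~\ref{3} every $x_i$ is a $(2,1)$-variable, and since step~5 has already been applied, the other variables in $x_i$'s two positive clauses are pairwise distinct; this gives $|N(x_i)| \geq 4 - p_i$, where $p_i \in \{0,1,2\}$ counts the positive 2-clauses containing $x_i$. I split on $M := \max_i |N(x_i)|$.

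First I dispatch the two extreme regimes. If $M \geq 4$, take $w = x_i$ attaining the maximum, so that Lemma~\ref{x} gives $F[w] \geq 5$. A minimisation of $\max(n_{-i}, \sum_{l \neq i}|N(x_l)|/2)$ over all $p_l$-distributions, combined with $\sum_{l \neq i}|N(x_l)| \geq 4(k-1) - n_{-i}$, forces $|\bigcup_{l \neq i} N(x_l)| \geq 4$ for every $k \geq 4$, so $F[\neg w] \geq 8$ and the branching factor is at most $\tau(5,8) < \tau(4,9)$. If $M = 2$, every $x_l$'s positive clauses are 2-clauses and Lemma~\ref{many rules}(ii) makes the $2k$ 2-clause partners pairwise distinct. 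Picking any $w = x_j$ then yields $F[w] \geq 3$, while $F[\neg w]$ eliminates $k$ variables of $C$, the $2(k-1)$ partners of the other $x_l$'s (which already lie in $\bigcup_{l \neq j}N(x_l)$), and the two partners of $x_j$ itself, captured by the unit-propagation clause in Lemma~\ref{not x} and still distinct from the previous ones; the total is $\geq 3k \geq 12$, giving $\tau(3,12) < \tau(4,9)$.

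The main obstacle is the intermediate case $M = 3$. Here I pick $w = x_j$ with $|N(x_j)| = 3$, which forces $p_j = 1$ with a unique 2-clause partner $b_j$ and one positive 3-clause, giving $F[w] \geq 4$; the task is to show $F[\neg w] \geq 9$ to reach $\tau(4,9)$. For $k \geq 5$, Lemma~\ref{limit} already yields $|\bigcup_{l \neq j}N(x_l)| \geq k-1 \geq 4$, hence $F[\neg w] \geq k + 4 \geq 9$. The tight sub-case is $k = 4$ with distribution $(p_l)_{l \neq j} = (2,1,1)$, where Lemma~\ref{limit} only gives $|\bigcup| \geq 4$. The key step will be to note that the four 2-clause partners of $\{x_l : l \neq j\}$ are pairwise distinct by Lemma~\ref{many rules}(ii) and all lie in $\bigcup_{l \neq j}N(x_l)$: if $b_j \notin \bigcup$, unit propagation from Lemma~\ref{not x} contributes the missing variable, while if $b_j \in \bigcup$, then since $b_j$ is distinct from all four of those partners, its presence forces $|\bigcup| \geq 5$. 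Either way $F[\neg w] \geq 4 + 4 + 1 = 9$, closing the proof.
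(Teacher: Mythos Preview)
Your proposal is correct and follows essentially the same three-way case split on $M=\max_i|N(x_i)|$ as the paper, invoking Lemmas~\ref{x}, \ref{not x}, \ref{limit}, and \ref{many rules} in the same places. The only noteworthy difference is in the tight sub-case $M=3$, $k=4$, $n=4$: the paper observes (via Lemma~\ref{at most 2}) that if $|\bigcup_{l\neq j}N(x_l)|=4$ then every variable in the union already has two occurrences among the positive clauses of $x_2,x_3,x_4$, so the $2$-clause partner $b_j$ of $x_j$ cannot lie in the union at all; you instead argue directly from Lemma~\ref{many rules}(ii) that $b_j$ is distinct from the four $2$-clause partners, so its membership in the union would force $|\bigcup|\geq 5$. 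Both routes yield the needed ninth variable.
\medskip
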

\begin{proof}

Let $C$  = $(\neg x_1\lor \ldots\lor \neg x_k)$. Without loss of generality, let $|N(x_1)| \geq |N(x_2)| \geq \dots \geq |N(x_k)|$. 
\medskip\noindent\newline\textbf{Case 1:} $|N(x_1)|\geq 4$

In the branch $F[x_1]$, by lemma \ref{x}, we can eliminate 5 variables. Let $n$ be the number of 2-clauses the literals $x_2, \dots, x_k$ are in. If $n\leq 3$, $(|N(x_2)| + \dots + |N(x_k)|) \geq 9$. Thus, by lemma \ref{limit}, $|N(x_2) \cup N(x_3) \cup \dots \cup N(x_k)| \geq 4$. Returning 3-occur-SAT($F$[$x_1$]) $\lor$ 3-occur-SAT($F$[$\neg x_1$]) has branching factor of at most $\tau(5,8)$ 
\medskip\noindent\newline\textbf{Case 2:}  $|N(x_1)| = 3$

In the branch $F[x_1]$, by lemma \ref{x}, we can eliminate 4 variables. If $|N(x_2) \cup \dots \cup N(x_k)| \geq 5$,  by lemma \ref{not x}, we can eliminate 9 variables in the branch $F[\neg x_1]$. Let $n$ be the number of 2-clauses the literals $x_2, \dots, x_k$ are in. If $n \leq 3$ or $n \geq 5$,  $|N(x_2) \cup N(x_3) \cup \dots \cup N(x_k)| \geq 5$. 

Otherwise, $n = 4$. If $|N(x_2) \cup N(x_3) \cup \dots \cup N(x_k)| \leq 4$, we note that this is only possible if $k = 4$ and thus, $|N(x_2)| + |N(x_3)| + |N(x_4)| = 8$. As $|N(x_1)|=3$, there must be a clause of form $(x_1 \lor v_1)$. As all variables in   $N(x_2) \cup N(x_3) \cup N(x_4)$ share 2 clauses with $x_2,x_3,x_4$, we know that $v_1 \notin N(x_2) \cup N(x_3) \cup N(x_4)$. Thus,  by lemma \ref{not x}, we can eliminate 9 variables in the branch $F[\neg x_1]$
Returning 3-occur-SAT($F$[$x_1$]) $\lor$ 3-occur-SAT($F$[$\neg x_1$]) has branching factor of at most $\tau(4,9)$.
\medskip\noindent\newline\textbf{Case 3:} $|N(x_1)| = 2$  

By Lemma \ref{many rules}, we know that $|N(x_i) \cap N(x_j)| = 0$ if $i\neq j$. Thus, by Lemma \ref{not x}, we eliminate at least 12 variables in the branch $F[\neg x_1]$. Returning 3-occur-SAT($F$[$x_1$]) $\lor$ 3-occur-SAT($F$[$\neg x_1$]) has branching factor of at most $\tau(3,12)$ \end{proof}
\subsection{Tackling all-negative 3-clauses}
    
     Let the all-negative 3-clause be ($\neg x\lor \neg y\lor \neg z)$ and let there be clauses $(x\lor \alpha)$, $(x\lor \beta)$, $(y\lor \gamma)$, $(y\lor \delta)$, $(z\lor \omega)$ and $(z\lor \sigma)$.  Without loss of generality, let $|\alpha|\geq|\beta|$, $|\gamma|\geq|\delta|$, $|\omega|\geq|\sigma|$  and $|\alpha|+|\beta|\geq|\gamma|+|\delta|\geq|\omega|+|\sigma|$.

 Let A = $Vars$($\alpha$) $\cup$ $Vars$($\beta$), B =$Vars$($\gamma$) $\cup$ $Vars$($\delta$) and C =$Vars$($\omega$) $\cup$ $Vars$($\sigma$). From the inclusion-exclusion principle, we have $|Vars(A\,\cup\,B\,\cup\,C)| = |Vars(A)|  + |Vars(B)| + |Vars(C)| - |Vars(A\,\cap\,B)|  - |Vars(B\,\cap\,C)| - |Vars(A\,\cap\,C)|$. Note that $|Vars(A\,\cap\,B\,\cap\,C)| = 0$ by Lemma \ref{many rules}. 
 
 Before we proceed, we first state a lemma that is used repeatedly in the following cases.
\begin{lemma}\label{C = 3,reduction} 
With the clause ($\neg x\lor \neg y\lor \neg z)$, if in the branch $F$[$x$], we are left with the clause ($\neg y\lor \neg z)$ and there are clauses $(y\lor u\lor \gamma)$, $(z\lor u)$, we can eliminate 1 additional variable other than $x$ and $N(x)$ by step 6c of the algorithm.
\end{lemma}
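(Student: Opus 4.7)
The plan is to verify that, in the branch $F[x]$, the five clauses required by step 6c are present exactly as its template demands, so that applying step 6c provides a net reduction of one variable beyond those already eliminated by Lemma \ref{x}.

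First, I would check that neither $y$ nor $z$ lies in $N(x)$, and that their remaining positive clauses survive the branching intact. Because $F$ is step-6 reduced, step 5 of the algorithm forbids any two variables from sharing two clauses. Since $x$ already shares the clause $(\neg x \lor \neg y \lor \neg z)$ with both $y$ and $z$, neither $y$ nor $z$ can occur in $x$'s positive clauses $(x \lor \alpha)$ or $(x \lor \beta)$, and the second positive clauses of $y$ and $z$ cannot contain any literal of $x$ either. Hence those clauses pass through to $F[x]$ unchanged, and we have $y, z \notin \{x\} \cup N(x)$.

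Next, I would match the five surviving clauses $(\neg y \lor \neg z)$, $(z \lor u)$, $(y \lor u \lor \gamma)$, together with the two remaining positive clauses of $y$ and $z$, onto step 6c's template via the renaming $x \mapsto z$, $y \mapsto y$, $u \mapsto u$, with the subclause $\gamma$ of our lemma playing the role of step 6c's $\beta$. Step 6c then rewrites them into four clauses using a single fresh auxiliary variable while removing both $y$ and $z$ from the formula. The net variable count decreases by one, and since $y$ and $z$ were not already counted inside $\{x\} \cup N(x)$, this is a genuine additional elimination.

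The only delicate point is the first paragraph: one must confirm that branching on $x$ neither deletes nor mutates any of the five clauses that step 6c relies upon. This reduces cleanly to step 5's guarantee that no variable can share two clauses with $x$, so no real obstacle remains.
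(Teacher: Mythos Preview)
Your proposal is correct. The paper states this lemma without proof, treating it as an immediate observation; your argument supplies precisely the justification it implicitly relies on---using step~5 to certify that $y,z\notin N(x)$ and that all four positive clauses of $y,z$ pass through the branch $F[x]$ untouched, then matching the resulting five-clause configuration to the template of step~6c under the renaming $x\mapsto z$, $y\mapsto y$.
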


\begin{lemma}\label{C=3} 
    If $|\alpha|+|\beta| \geq 4$, there exists $w\in \{x,y,z\}$ such that branching factor of 3-occur-SAT($F$[$\neg w$]) $\lor$ 3-occur-SAT($F$[$w$]) is at most $\tau(3,11)$. \end{lemma}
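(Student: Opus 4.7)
The plan is to branch on $x$, which by the WLOG ordering has the largest ``mass'' $p=|\alpha|+|\beta|\geq 4$, and to fall back on $y$ or $z$ for a handful of residual configurations.

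First I would set up the elementary counts. Since $F$ is step-5 reduced, no two variables share two clauses; applied to the two clauses $(x\lor\alpha)$ and $(x\lor\beta)$ this gives $Vars(\alpha)\cap Vars(\beta)=\emptyset$, so $|N(x)|=p$ and Lemma \ref{x} yields that $F[x]$ eliminates at least $1+p$ variables. The same argument gives $|N(y)|=|\gamma|+|\delta|=:q\geq 2$ and $|N(z)|=|\omega|+|\sigma|=:r\geq 2$ (each remainder has size $\geq 1$, else a unit clause would already have been reduced). In $F[\neg x]$ the all-negative clause is satisfied, so $y$ and $z$ are pure and Lemma \ref{not x} shows that $F[\neg x]$ eliminates at least $3+|N(y)\cup N(z)|$ variables, plus one extra whenever $|\alpha|=1$ or $|\beta|=1$.

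Next I would split on $p$. The outer cases are easy: for $p\geq 7$ the factor is at most $\tau(8,5)=\tau(5,8)\leq\tau(3,11)$; for $p=6$ I only need $|N(y)\cup N(z)|\geq 3$, which holds either because $\max(q,r)\geq 3$ or because $|\gamma|=|\delta|=|\omega|=|\sigma|=1$, in which case Lemma \ref{many rules}(ii) applied to each of the four pairings $(\gamma,\omega),(\gamma,\sigma),(\delta,\omega),(\delta,\sigma)$ forces $N(y)\cap N(z)=\emptyset$ and so $|N(y)\cup N(z)|=4$. The delicate cases are $p=5$ (need $|N(y)\cup N(z)|+\text{extras}\geq 4$) and $p=4$ (need $\geq 5$), which I would treat by enumerating $(|\gamma|,|\delta|,|\omega|,|\sigma|)$ and invoking Lemma \ref{many rules}(ii) to rule out overlaps whenever two paired remainders both have size $1$, while harvesting unit-clause extras from $\alpha$ or $\beta$ when $|\beta|=1$.

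The tight leftover configurations are those where $N(z)$ lies essentially inside $N(y)$, for instance $p=5$ with $(|\gamma|,|\delta|,|\omega|,|\sigma|)=(2,1,2,1)$ and $N(z)\subseteq N(y)$, giving only $|N(y)\cup N(z)|=3$. For these I would switch to branching on $y$: each variable of $N(y)\cap N(z)$ already uses its two allowed occurrences in the $\{x,y,z\}$-clauses inside $y$'s and $z$'s clauses (Lemma \ref{at most 2}), so $N(x)\cap N(z)=\emptyset$ and $|N(x)\cup N(z)|=p+r$. Consequently $F[\neg y]$ eliminates $\geq 3+p+r$ variables (plus an extra from $|\delta|=1$), while $F[y]$ eliminates $\geq 1+q$; a short check gives a factor of at most $\tau(4,11)\leq\tau(3,11)$. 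An entirely analogous switch to $y$ or $z$ covers the handful of remaining $p=4$ subcases.

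The main obstacle is not any single inequality but the bookkeeping: for every small-$p$ shape of $(|\gamma|,|\delta|,|\omega|,|\sigma|)$ I must identify which disjointness in Lemma \ref{many rules}(ii) rules out the worst overlap, and for those where it does not, exhibit an alternative branching on $y$ or $z$ that hits the $\tau(3,11)$ target — ensuring that the three branching options collectively cover every configuration allowed by the step-6 reductions.
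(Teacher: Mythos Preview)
Your plan is sound for the ``fat'' cases $p=|\alpha|+|\beta|\geq 5$, and the switch-to-$y$ trick via Lemma~\ref{at most 2} is essentially what the paper does in its Case~1 and parts of Case~2. However, the sentence ``An entirely analogous switch to $y$ or $z$ covers the handful of remaining $p=4$ subcases'' hides a genuine gap.

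Take $p=4$ with $|\alpha|=|\beta|=2$ and $q=r=2$ (so all of $\gamma,\delta,\omega,\sigma$ are singletons). Lemma~\ref{many rules}(ii) forces $N(y)\cap N(z)=\emptyset$, but it does \emph{not} prevent overlaps between $N(x)$ and $N(y)\cup N(z)$, since $|\alpha|,|\beta|\geq 2$. Concretely, if exactly one variable is shared, say $|N(x)\cap N(y)|=1$ and $|N(x)\cap N(z)|=0$, then $|N(x)\cup N(y)\cup N(z)|=7$ and a direct count gives
\[
\text{branch on }x:\ \tau(5,7),\qquad \text{branch on }y:\ \tau(3,10),\qquad \text{branch on }z:\ \tau(3,10),
\]
\emph{all} of which exceed $\tau(3,11)$. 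Your switch argument does not help here: its hypothesis is $N(z)\subseteq N(y)$, which is false in this regime (indeed $N(y)\cap N(z)=\emptyset$), and there is no disjointness of $N(x)$ with $N(z)$ or $N(y)$ to exploit. The configurations with two or more overlaps (the paper's Case~3cii) are worse still.

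The paper closes this gap not by smarter choice of $w$, but by invoking the step-6 reductions \emph{inside a branch}: after setting $z=1$ one is left with $(\neg x\lor\neg y)$ together with $(x\lor u\lor l),(y\lor l)$, and step~6c (packaged as Lemma~\ref{C = 3,reduction}) kills one extra variable, turning the $z$-branch from $3$ into $4$ and yielding $\tau(4,9)$. For the heavily-overlapping subcase ($|N(x)\cup N(y)\cup N(z)|<6$) the paper needs explicit resolutions and ultimately step~6d. None of these ingredients appear in your plan, and pure neighbourhood counting cannot replace them.
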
 
\begin{proof} 
We delve into 3 cases, (1)  $|\gamma|+|\delta|\geq4$, (1)  $|\gamma|+|\delta|=3$, (3)  $|\gamma|+|\delta|=2$, 
\medskip\noindent\newline\textbf{Case 1:} $|\gamma|+|\delta|\geq4$ 

Without loss of generality, let $|Vars(A\,\cap\,C)| \geq |Vars(B\,\cap\,C)|$. Thus $|Vars(B\,\cup\,C)|\geq 5$, by Lemma \ref{not x}, the branch $F$[$\neg x$, $y$, $z$] eliminates at least 8 variables. Thus, returning 3-occur-SAT($F$[$x$]) $\lor$ 3-occur-SAT($F$[$\neg$$x$, $y$, $z$]) have a branching factor of at most $\tau (5,8)$.
\medskip\noindent\newline\textbf{Case 2:} $|\gamma|+|\delta|=3$ 

If $|Vars(B\,\cup\,C)|\geq 5$, then the branch $F$[$\neg x$, $y$, $z$] eliminates 8 variables and returning 3-occur-SAT($F$[$x$]) $\lor$ 3-occur-SAT($F$[$\neg$$x$, $y$, $z$]) have a branching factor of at most $\tau (5,8)$. Else, by the inclusion-exclusion principle, $|Vars(B\,\cap\,C)| \geq 2$  or $|Vars(B\,\cap\,C)| \geq 1$,  $|\sigma|+|\omega|$ = 2 and $|Vars(B\,\cup\,C)|\geq 4$ (as if $|Vars(B\,\cap\,C)| = 0$ then $|Vars(B\,\cup\,C)|\geq 5)$.

In the former case, if $|Vars(B\,\cap\,C)| \geq 2$, $|Vars(A\,\cup\,C)| \geq 6$ (as $|Vars(A\,\cap\,B\,\cap\,C)| = 0$). Returning 3-occur-SAT($F$[$y$]) $\lor$ 3-occur-SAT($F$[$x$,$\neg$ $y$, $z$]) have a branching factor of at most $\tau (4,9)$.

In the latter case, in the branch $F$[$x$], the clause ($\neg x \lor \neg y \lor \neg z$) becomes ($\neg y \lor \neg z$). Furthermore, there  are clauses ($y \lor u \lor \gamma$), ($z \lor u)$ and by step 6b of the algorithm, another variable is eliminated. Returning 3-occur-SAT($F$[$x$]) $\lor$ 3-occur-SAT($F$[$\neg$$x$, $y$,$z$]) have a branching factor of at most $\tau (6,7)$.
\medskip\noindent\newline\textbf{Case 3 :} $|\gamma|+|\delta|=2$.
We tackle this case through several sub-cases. We note that in this case, in the branch $F[\neg z]$, we eliminate all variables in $A \cup B \cup C$. Furthermore, as $|B| = |C| = 2$, by lemma \ref{many rules}, $|B \cap C| = 0.$
\medskip\noindent\newline\textbf{Case 3a :}  $|\alpha| \geq 3$

 We note that by lemma \ref{many rules}, $|B \cup C| = 4$ and $|B \cup C \cup vars(\beta)| = 5$ if $|\beta| = 1$. Thus, if $|\alpha| \geq 3$ and $|\beta| \geq 2$, then returning 3-occur-SAT($F$[$x$]) $\lor$ 3-occur-SAT($F$[$\neg$$x$, $y$,$z$]) have a branching factor of at most $\tau (6,7)$. If $|\alpha| \geq 3$ and $|\beta| = 1$, then returning 3-occur-SAT($F$[$x$]) $\lor$ 3-occur-SAT($F$[$\neg$$x$, $y$,$z$]) have a branching factor of at most $\tau (5,8)$. 
\medskip\noindent\newline\textbf{Case 3b :} $|\alpha| = |\beta| =2$, $|A \cup B \cup C| =8$. 
 
 If $|A \cup B \cup C| =8$, then by lemma \ref{not x}, returning 3-occur-SAT($F$[$z$]) $\lor$ 3-occur-SAT($F$[$x$, $y$,$\neg$$z$]) have a branching factor of at most $\tau (3,11)$. 
\medskip\noindent\newline\textbf{Case 3c :} $|\alpha| = |\beta| =2$, $|A \cup B \cup C| <8$.

Without loss of generality, let $|Vars(A\,\cap\,B)| \geq |Vars(A\,\cap\,C)|$. Thus,  $|Vars(A\,\cap\,B)| \geq 1$ and there are clauses $(x \lor u \lor l), (y \lor l)$. In the branch $F[z]$, we can eliminate at least 4 variables ($z, N(Z)$ and an additional variable due to step 6b of the algorithm).
\medskip\noindent\newline\textbf{Case 3ci :} $|\alpha| = |\beta| =2$,  $8 > |A \cup B \cup C| \geq 6$.

 Returning 3-occur-SAT($F$[$z$]) $\lor$ 3-occur-SAT($F$[$x$, $y$,$\neg$$z$]) have a branching factor of at most $\tau (4,9)$. 
\medskip\noindent\newline\textbf{Case 3cii :} $|\alpha| = |\beta| =2$,  $|A \cup B \cup C| < 6$.

$|Vars(A\,\cap\,B)|= 2$ and $|Vars(A\,\cap\,C)| \geq 1$. There are 2 possibilities of how $|A \cap B| \geq 2$, (1) there are clauses $(x \lor a \lor l_1), (x \lor b \lor l_2), (y \lor a), (y \lor b)$, or (2) there are clauses $(x \lor a \lor b),  (y \lor a), (y \lor b)$.

For case (1), in the branch $F[z]$, there are clauses $(x \lor a \lor C_1), (x \lor b \lor C_2), (y \lor a), (y \lor b)$, $(\neg x \lor \neg y)$. After resolving via $x,y$, we have the clauses $(a \lor C_1), (b \lor C_2)$. Note that the number of occurrences of $a,b$ decreases. Thus, we eliminate 7  variables in this branch ($\{x,y,z,a,b\}$ and the 2 neighbours of $z$ which are different from $a,b$ as $|B\cap C| = 0$). 
In the branch $F[x,y,\neg z]$, 7 variables are eliminated in a similar way. Returning 3-occur-SAT($F$[$z$]) $\lor$ 3-occur-SAT($F$[$x$, $y$,$\neg$$z$]) have a branching factor of at most $\tau (7,7)$. 

For case (2), as  $|Vars(A\,\cap\,C)| \geq 1$, there are clauses $(x \lor a \lor b),  (y \lor a), (y \lor b)$ and $(x \lor c \lor l_1), (z \lor c), (z \lor l_2), (\neg a \lor C_1), (\neg b \lor C_2)$ where $l_1$ may be $l_2$. We note that due to the clause $(x \lor a \lor b)$, the literals $\neg a, \neg b$ are not in the same clause. 

In the branch $F[y]$, there are clauses $(x \lor C_1 \lor C_2), (x \lor c \lor l_1), (z \lor c), (z \lor l_2).$ If $|vars(C_1 \cup C_2 \cup \{c,l_1,l_2\})| \geq 4$, then, in the branch $F[x,y,\neg z]$, we can eliminate 9 variables ( $|\{a,b,x,y,z\} \cup vars(C_1 \cup C_2 \cup \{c,l_1,l_2\})| \geq 9$). Then returning 3-occur-SAT($F$[$z$]) $\lor$ 3-occur-SAT($F$[$x$, $y$,$\neg$$z$]) have a branching factor of at most $\tau (4,9)$. 

Otherwise, $|vars(C_1 \cup C_2 \cup \{c,l_1,l_2\})| \leq 3$. Let $vars(C_1 \cup C_2 \cup \{c, l_1,l_2\}) = N(x) \cup N(z) \subseteq \{c,d,e\}.$ Through a simple counting argument, at most 3 instances of the variables $c,d,e$ are not part of the clauses $(\neg x \lor \neg y \lor \neg z), (x \lor a \lor b), (x \lor c \lor l_1), (y \lor a), (y \lor b), (z \lor c), (z\lor l_2), (\neg a \lor C_1), (\neg b\lor C_2).$ Therefore, Step 6d of the algorithm is applicable to $\{x,y,z,a,b,c,d,e\}$.

\end{proof}

\begin{lemma}\label{C=3ii} 
    If $|\alpha|+|\beta| = 3$, there exists $w\in \{x,y,z\}$ such that branching factor of 3-occur-SAT($F$[$\neg w$]) $\lor$ 3-occur-SAT($F$[$w$]) is at most $\tau(3,11)$. \end{lemma}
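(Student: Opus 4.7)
The plan is to follow the case structure of Lemma~\ref{C=3} with the reduced budget $|\alpha|+|\beta|=3$. From $|\alpha|\geq|\beta|$ we have $|\alpha|=2$ and $|\beta|=1$, and step~5 together with subsumption forbids $\mathrm{vars}(\alpha)\cap\mathrm{vars}(\beta)\neq\emptyset$, so $|A|=3$; the same reasoning yields $|B|=|\gamma|+|\delta|$ and $|C|=|\omega|+|\sigma|$. Using the ordering $|\gamma|+|\delta|\geq|\omega|+|\sigma|$ together with the fact that step~4 already resolves unit clauses, the three sub-cases to handle are $(|B|,|C|)\in\{(3,3),(3,2),(2,2)\}$.

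In each sub-case I would pick a branching variable $w\in\{x,y,z\}$ and compute $e_w=1+|N(w)|$ from Lemma~\ref{x} and $e_{\neg w}\geq 3+|N_1\cup N_2|+(\text{extras})$ from Lemma~\ref{not x}, where $N_1,N_2$ are the neighborhoods of the other two variables and the extras count unit clauses surviving in $F[\neg w]$ (for instance $\beta$, $\delta$, or $\sigma$ whenever their single literal lies outside $N_1\cup N_2$), plus the step-6c bonus from Lemma~\ref{C = 3,reduction} when applicable. Lemma~\ref{at most 2} combined with $|A\cap B\cap C|=0$ bounds the total pairwise overlap $|A\cap B|+|B\cap C|+|A\cap C|$, and pigeonhole forces at least one pairwise intersection to be small. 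In the $(3,3)$ and $(3,2)$ sub-cases this lets me pick $w$ opposite the smallest intersection and verify $\tau(e_w,e_{\neg w})\leq\tau(3,11)$; the $(3,2)$ sub-case is particularly clean when branching on $z$, since $|\omega|=|\sigma|=1$ turns both $(z\lor\omega)$ and $(z\lor\sigma)$ into unit clauses in $F[\neg z]$, giving $e_{\neg z}=3+|A\cup B\cup C|$ and hence exactly $\tau(3,11)$ when all three pairwise intersections vanish.

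The main obstacle is the $(2,2)$ sub-case, where $|\gamma|=|\delta|=|\omega|=|\sigma|=1$. Pairwise applications of Lemma~\ref{many rules}(ii) force $B\cap C=\emptyset$, so $|B\cup C|=4$, but $|A\cap B|$ and $|A\cap C|$ can individually be as large as $2$, and no 2-way branch in $\{x,y,z\}$ clears $\tau(3,11)$ from the gross counting alone. Here I would enumerate the possible local clause configurations around $x,y,z$ and their neighbors, much as Case~3cii of Lemma~\ref{C=3} does. In each configuration I expect to either (i) extract an extra elimination in $F[w]$ via step~6c (Lemma~\ref{C = 3,reduction}) when a shared neighbor lies on a 2-clause with one of $y,z$, or (ii) exhibit a set $V$ with $|V|>3$ and $|\{\mathrm{maximal}(C,V):C\in F_{\mathrm{contain}\,V}\}\setminus\emptyset|\leq 3$ so that Lemma~\ref{key reduction} implies step~6d should have fired, contradicting the step-6 reducedness of $F$. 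Verifying that this dichotomy exhausts the tight configurations is the bulk of the technical work.
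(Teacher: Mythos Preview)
Your three-way split on $(|B|,|C|)$ and the toolkit you name (inclusion--exclusion on $A,B,C$, the unit-clause extras from Lemma~\ref{not x}, the step~6c bonus of Lemma~\ref{C = 3,reduction}, and step~6d via Lemma~\ref{key reduction}) are exactly what the paper uses, and the paper's Cases~1--3 correspond to your $(3,3),(3,2),(2,2)$. For $(3,3)$ and $(3,2)$ your plan is broadly right but optimistic: ``pick $w$ opposite the smallest intersection'' is not sufficient once all three pairwise intersections are nonzero (paper's Cases~1b, 1c) or once $|A\cup B\cup C|\leq 5$ in the $(3,2)$ setting (Cases~2b, 2c). There the paper must track \emph{which} variable is shared---whether the overlap sits in the singleton $\beta,\delta,\sigma$ or in the 2-subclause $\alpha,\gamma$---because this decides whether Lemma~\ref{C = 3,reduction} fires in $F[w]$ and hence whether to aim for $\tau(4,9)$ on one variable or $\tau(5,8)$ on another. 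Several of those sub-cases are dispatched by explicit resolutions rather than by a generic 6c/6d appeal.

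The real gap is in $(2,2)$. You diagnose the obstacle as large overlap and point to Case~3cii of Lemma~\ref{C=3} as the template, but the hardest sub-case here is the \emph{zero-overlap} one, $|A\cup B\cup C|=7$ (paper's Case~3a): gross counting gives only $\tau(4,8)$ on $x$ and $\tau(3,10)$ on $y$ or $z$, and your dichotomy (i)/(ii) does not rescue it---with no shared neighbour there is no 6c bonus, and seven distinct neighbours do not obviously collapse into $\leq 3$ maximal subclauses for 6d. The paper needs two ingredients you have not listed. First, a \emph{sign analysis} of the seven literals $a,\dots,g$: a negative literal among them cascades a pure-literal elimination in the $F[w]$ branch, upgrading $\tau(4,8)$ to $\tau(5,8)$ or $\tau(3,10)$ to $\tau(4,10)$. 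Second, once all seven are positive, the paper traces the \emph{third} occurrence of the variable $c$ in $\beta$ (which lives outside $F_{\text{contain}\,\{x,y,z\}}$) and argues by hand---via resolution and further pure-literal propagation depending on whether that occurrence meets $d$, $\neg d$, or a fresh variable---to squeeze out the missing elimination. Neither step is an instance of 6c or 6d, so your proposed enumeration would stall precisely here.
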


\begin{proof}
    We delve into 3 cases,  (1) $|\gamma|+|\delta|=|\omega|+|\sigma|=3$, (2) $|\gamma|+|\delta|=3, |\omega|+|\sigma|=2$ and (3) $|\gamma|+|\delta|=|\omega|+|\sigma|=2$.
\medskip\noindent\newline\textbf{Case 1:} $|\alpha|+|\beta|=|\gamma|+|\delta|=|\omega|+|\sigma|=3$  

Without loss of generality, let $ |Vars(B\,\cap\,C)|  \geq|Vars(A\,\cap\,C)|  \geq  |Vars(A\,\cap\,B)|$
\medskip\noindent\newline\textbf{Case 1a :} $|Vars(\alpha\,\cup\,\beta\,\cup\,\gamma\,\cup\,\delta\,\cup\,\omega\,\cup\,\sigma)|\geq$ 7 

As $|Vars(A\,\cap\,B)|$ =0, $|N(x) \cup N(y)| \geq 6$. Hence, by Lemma \ref{not x}, returning 3-occur-SAT($F$[$z$]) $\lor$ 3-occur-SAT($F$[$x$, $y$,$\neg$$z$]) have a branching factor of at most $\tau (4,9)$.
\medskip\noindent\newline\textbf{Case 1b :} $|Vars(\alpha\,\cup\,\beta\,\cup\,\gamma\,\cup\,\delta\,\cup\,\omega\,\cup\,\sigma)|$ = 6 

If $|Vars(A\,\cap\,B)|$ = 0, we can do the same as the previous case. Else,  $ |Vars(B\,\cap\,C)|  = |Vars(A\,\cap\,C)|  =  |Vars(A\,\cap\,B)| = 1$. Looking at the clauses,  $(x\lor  a \lor  b),\, (x\lor  c)$, 2 of $a$, $b$ or $c$ are neighbours with $y$ or $z$. If both $a$ and $b$ are neighbours of $y$ or $z$, then the branch $F$[$\neg$ $x$, $y$, $z$]  will eliminate 9 variables. Returning 3-occur-SAT($F$[$x$]) $\lor$ 3-occur-SAT($F$[$\neg$$x$, $y$,$z$]) have a branching factor of at most $\tau (4,9)$. Else, $c$ is a neighbour of $y$ or $z$. Without loss of generality, let $c$ be a neighbour of $y$. By Lemma \ref{C = 3,reduction}, we can eliminate an additional variable in the branch $F$[$z$]. As $|Vars(A\,\cap\,B)|$ = 1, $|N(x) \cup N(y)| = 5$, returning 3-occur-SAT($F$[$z$]) $\lor$ 3-occur-SAT($F$[$x$, $y$,$\neg$$z$]) have a branching factor of at most $\tau (5,8)$.
\medskip\noindent\newline\textbf{Case 1c :} $|Vars(\alpha\,\cup\,\beta\,\cup\,\gamma\,\cup\,\delta\,\cup\,\omega\,\cup\,\sigma)|$ = 5 

$ |Vars(B\,\cap\,C)|  = 2$, $|Vars(A\,\cap\,C)|  =  |Vars(A\,\cap\,B)| = 1$. 
Looking at the clauses,  $(x\lor  a \lor  b),\, (x\lor  c)$, 2 of $a$, $b$ or $c$ are neighbours with $y$ or $z$. If both $a$ and $b$ are neighbours of $y$ or $z$, then the branch $F$[$\neg$ $x$, $y$, $z$]  will eliminate 8 variables. 
Furthermore, by Lemma \ref{C = 3,reduction}, the branch $F$[$x$] will eliminate an additional variable. Returning 3-occur-SAT($F$[$x$]) $\lor$ 3-occur-SAT($F$[$\neg$$x$, $y$,$z$]) have a branching factor of at most $\tau (4,9)$. Else, $c$ is a neighbour of $y$ or $z$. Without loss of generality, let $c$ be a neighbour of $y$. By Lemma \ref{C = 3,reduction}, we can eliminate an additional variable in the branch $F$[$z$]. As $|Vars(A\,\cap\,B)|$ = 1, $|N(x) \cup N(y)| = 5$, returning 3-occur-SAT($F$[$z$]) $\lor$ 3-occur-SAT($F$[$x$, $y$,$\neg$$z$]) have a branching factor of at most $\tau (5,8)$.
\medskip\noindent\newline\textbf{Case 2:} $|\alpha|+|\beta|=|\gamma|+|\delta|=3,\,|\omega|+|\sigma|=2$ 

Without loss of generality, let $|Vars(A\,\cap\,C)| \geq |Vars(B\,\cap\,C)|$. As  $|\beta|=|\delta|=|\omega|=|\sigma|=1$, from Lemma \ref{many rules}, they all are distinct variables. 
\medskip\noindent\newline\textbf{Case 2a :} $|Vars(\alpha\,\cup\,\beta\,\cup\,\gamma\,\cup\,\delta\,\cup\,\omega\,\cup\,\sigma)|\geq 6$  

Looking at clauses $(x\lor \beta)$, $(y\lor \gamma)$, if the variable in $\beta$ is in $\gamma$,  by Lemma \ref{C = 3,reduction}, we can eliminate another variable in the branch $F$[$z$]. Returning 3-occur-SAT($F$[$z$]) $\lor$ 3-occur-SAT($F$[$x$, $y$,$\neg$$z$]) have a branching factor of at most $\tau (4,9)$.  If one of the variable in $\omega$ or $\sigma$ is in $\gamma$, by Lemma \ref{C = 3,reduction}, we can eliminate another variable in the branch $F$[$x$]. Furthermore, $|Vars(\beta\,\cup\,\gamma\,\cup\,\delta\,\cup\,\omega\,\cup\,\sigma)| \geq 5$. Hence, returning 3-occur-SAT($F$[$x$]) $\lor$ 3-occur-SAT($F$[$\neg x$, $y$, $z$]) have a branching factor of at most $\tau (5, 8)$. 
Else, $|Vars(\beta\,\cup\,\gamma\,\cup\,\delta\,\cup\,\omega\,\cup\,\sigma)| = 6$ as there is no overlap amongst the clauses.  Hence, returning 3-occur-SAT($F$[$x$]) $\lor$ 3-occur-SAT($F$[$\neg x$, $y$, $z$]) have a branching factor of at most $\tau (4, 9)$.
\medskip\noindent\newline\textbf{Case 2b :} $|Vars(\alpha\,\cup\,\beta\,\cup\,\gamma\,\cup\,\delta\,\cup\,\omega\,\cup\,\sigma)|$ = 5 

If $|Vars(A\,\cap\,C)| = 2$, $|Vars(A\,\cap\,B)| = 1$. We have clauses $(x\lor a \lor b)$, $(x\lor c)$, $(y\lor c \lor d)$, $(y\lor e)$, $(z\lor a)$, $(z\lor b)$. If the literal $e$ is not in a clause solely with either $\neg a$ or $\neg b$, in the branch $F$[$x, \neg y, z$], $e$ is assigned to 1 and the variables that neighbours the literal $e$ are eliminated. Returning $F$[$y$] $\lor$ $F$[$x, \neg y, z$] has branching factor of at most $\tau (5, 8)$. Else, if $e$ is in a clause solely with $\neg a$ or $\neg b$, in the branch $F$[$x$], we will have clauses $(z \lor e), (y \lor e)$ and by Lemma \ref{C = 3,reduction}, another variable can be eliminated. Returning $F$[$x$] $\lor$ $F$[$\neg x, y, z$] has branching factor of at most $\tau (5, 8)$. 

If $|Vars(A\,\cap\,C)| = 1$ and $|Vars(A\,\cap\,B)| = 2$, there are 2 possibilities, (1) there are clauses $(x\lor a \lor b)$, $(x\lor c)$, $(y\lor c \lor d)$, $(y\lor b)$, $(z\lor a)$, $(z\lor e)$, or (2) there are clauses $(x\lor a \lor b)$, $(x\lor c)$, $(y\lor c \lor b)$, $(y\lor d)$, $(z\lor a)$, $(z\lor e)$. In the former case, in the branch $F$[$z$], there are clauses ($\neg x\lor \neg y)$, $(x\lor a \lor b)$, $(x\lor c)$, $(y\lor c \lor d)$, $(y\lor b)$, after resolving $x,y$ we have $(a\lor b \lor c\lor d)$, $(a\lor b)$,  $(c\lor d)$ $(c\lor b)$ which can be simplified to $(a\lor b)$,  $(c\lor d)$ $(c\lor b)$. Hence, we can eliminate 2 additional variables and returning 3-occur-SAT($F$[$z$]) $\lor$ 3-occur-SAT($F$[$x$, $y$, $\neg$$z$]) have a branching factor of at most $\tau (5,8)$. In the latter case, by Lemma \ref{not x} and step 6c of the algorithm, returning 3-occur-SAT($F$[$y$]) $\lor$ 3-occur-SAT($F$[$x$, $\neg$$y$, $z$]) have a branching factor of at most $\tau (5,8)$. 

If $|Vars(A\,\cap\,C)| = |Vars(A\,\cap\,B)| = |Vars(B\,\cap\,C)| = 1$. One of $\alpha$ or $\gamma$ must only contain variables that are neighbours with another of $x$, $y$ or $z$. Without loss of generality, suppose that $\alpha$ only contain variables that are neighbours of $y$ or $z$. By Lemma \ref{not x} and step 6c of the algorithm, returning 3-occur-SAT($F$[$x$]) $\lor$ 3-occur-SAT($F$[$\neg$$x$, $y$, $z$]) have a branching factor of at most $\tau (5,8)$. 

Else, $|(A\,\cap\,B)| = 3$, After the branch $F$[$z$], we have clauses ($\neg x\lor \neg y)$, $(x\lor a\lor b)$, $(x\lor c)$, $(y\lor  a\lor c)$, $(y\lor b)$. Resolving via $x$ and $y$ gives us  $(a\lor b)$, $(b\lor c)$, $(a\lor c)$, eliminating 2 variables in this branch. Hence, by Lemma \ref{not x}, returning 3-occur-SAT($F$[$z$]) $\lor$ 3-occur-SAT($F$[$x$, $y$,$\neg$$z$]) have a branching factor of at most $\tau (5,8)$. 
\medskip\noindent\newline\textbf{Case 2c :} $|Vars(\alpha\,\cup\,\beta\,\cup\,\gamma\,\cup\,\delta\,\cup\,\omega\,\cup\,\sigma)|$ = 4 

We have clauses $(x\lor a\lor b)$, $(x\lor c)$, $(y\lor c\lor d)$, $(y\lor b)$, $(z\lor a)$, $(x\lor d)$ which can be resolved to $(a \lor b)$, $(c \lor d)$.
\medskip\noindent\newline\textbf{Case 3:} $|\alpha|+|\beta|=3,\,\,|\gamma|+|\delta|=|\omega|+|\sigma|=2$  

$|Vars(B\,\cap\,C)|$ = 0. 
Without loss of generality, let $|Vars(A\,\cap\,B)| \geq |Vars(A\,\cap\,C)|$ 
\medskip\noindent\newline\textbf{Case 3a :} $|Vars(\alpha\,\cup\,\beta\,\cup\,\gamma\,\cup\,\delta\,\cup\,\omega\,\cup\,\sigma)|$ = 7 

We have clauses $(x\lor a\lor b)$, $(x\lor c)$,$(y\lor d)$,$(y\lor e)$, $(z\lor f)$,$(z\lor g)$. Suppose one of $a$, $b$ or $c$ is negative and call this literal $\neg h$. If $h$ is only neighbours of $a$, $b$ or $c$, we can simplify the formula by resolving via $h$. Else, the branch $F$[$x$] eliminates at least 5 variables and returning 3-occur-SAT($F$[$x$]) $\lor$ 3-occur-SAT($F$[$\neg x$, $y$,$z$]) have a branching factor of at most $\tau (5,8)$. Suppose one of $d$, $e$, $f$ or $g$ is negative. Without loss of generality, let $d$ be negative. The branch $F$[$y$] eliminates at least 4 variables and returning 3-occur-SAT($F$[$y$]) $\lor$ 3-occur-SAT($F$[$x$, $\neg y$,$z$]) have a branching factor of at most $\tau (4,10)$. Assume now that all of $a,b,c,d,e,f,g$ are positive literals.

If the literal $c$ is neighbours with variable other than those in $\{x,y,z,c,d,e,f,g\}$ returning 3-occur-SAT($F$[$x$]) $\lor$ 3-occur-SAT($F$[$\neg x$, $y$,$z$]) have a branching factor of at most $\tau (4,9)$.

Suppose the literal $c$ shares a clause with $\neg d$ (symmetric if it is $\neg e$, $\neg f$ or $\neg g$),  in the branch $F$[$z$], we are left with clauses $(\neg x\lor\neg y)$, $(x\lor a\lor b)$, $(x\lor c)$,$(y\lor d)$,$(y\lor e)$ and ($c\lor\neg d \lor\gamma$). If $\gamma$ contains a variable outside of $e$ (after the resolution for variables $f,g$ in the branch $F[z]$), then the branch $F$[$\neg x,y$] will now eliminate 6 variables. Hence, returning 3-occur-SAT($F$[$x$]) $\lor$ 3-occur-SAT($F$[$\neg x$, $y$,$z$]) have a branching factor of at most $\tau (4,9)$. Otherwise, $\gamma$ is empty or only contains the variable $e$. Suppose there is clause ($d\lor\delta$), after resolution via $d$, we have clauses ($y \lor c \lor \gamma$), ($\delta \lor c \lor \gamma)$. If $\gamma$ is nothing, we have clauses $(x\lor c)$,$(y\lor c)$ and hence we can just do $F$[$c$] using step 6b. If $\gamma$ is $e$, ($y \lor c \lor e$) is redundant due to clause ($y \lor e$) and hence we have eliminated $y$ and $d$. If $\gamma$ is $\neg e$, ($y \lor c \lor \neg e$) can be reduced to ($y \lor c$) due to clause ($y \lor e$) and we can branch $F$[$c$] using step 6b. Hence, returning 3-occur-SAT($F$[$z$]) $\lor$ 3-occur-SAT($F$[$x$, $y$, $\neg z$]) have a branching factor of at most $\tau (5,10)$.

Suppose the literal $c$ shares a clause with $d$ (symmetric if it is $e$, $f$ or $g$),  in the branch $F$[$z$], we are left with clauses $(\neg x\lor\neg y)$, $(x\lor a\lor b)$, $(x\lor c)$,$(y\lor d)$,$(y\lor e)$ and ($c\lor d \lor\gamma$). Suppose there is the clause ($\neg d \lor \delta$). If $\delta$ is just $e$ or $\neg e$, after resolution of $d$, we either have duplicate clauses $(y\lor e)$ or we have the clauses $(y\lor e)$, $(y\lor \neg e)$. In either case, we can eliminate another variable and returning 3-occur-SAT($F$[$z$]) $\lor$ 3-occur-SAT($F$[$x$, $y$, $\neg z$]) have a branching factor of at most $\tau (4,9)$. Else, in the branch $F$[$\neg x,y$], $\neg d$ is a pure literal and the variable in $\delta$ is eliminated. Returning 3-occur-SAT($F$[$x$]) $\lor$ 3-occur-SAT($F$[$\neg x$, $y$,$z$]) have a branching factor of at most $\tau (4,9)$.
\medskip\noindent\newline\textbf{Case 3b :} $|Vars(\alpha\,\cup\,\beta\,\cup\,\gamma\,\cup\,\delta\,\cup\,\omega\,\cup\,\sigma)|$ = 6 

There are clauses $(x\lor  a\lor b),\, (x\lor  c),\,(y \lor  a),\,  (y \lor  d)\, (z \lor  e),\,  (z \lor  f)$. In the branch $F$[$z$], we are sure to eliminate another variable due to step 6c of the algorithm. Returning 3-occur-SAT($F$[$z$]) $\lor$ 3-occur-SAT($F$[$x$, $y$,$\neg$$z$]) have a branching factor of at most $\tau (4,9)$.
\medskip\noindent\newline\textbf{Case 3c :} $|Vars(\alpha\,\cup\,\beta\,\cup\,\gamma\,\cup\,\delta\,\cup\,\omega\,\cup\,\sigma)|$ = 5 

 Either one of $|Vars(A\,\cap\,B)| = 2$ or $|Vars(A\,\cap\,B)| = |Vars(A\,\cap\,C)| = 1$. 

If $|Vars(A\,\cap\,B)| = 2$, then there are clauses $(x\lor  a\lor b),\, (x\lor  c),\,(y \lor  a),\,  (y \lor  b)\, (z \lor  d),\,  (z \lor  e),\, (\neg a \lor \gamma)$ and  $(\neg b \lor  \delta)$. In the branch $F$[$x$], after resolution, we will have clauses $(y \lor  \gamma),\,  (y \lor  \delta)\, (z \lor  d),\,  (z \lor  e)$. If either $\gamma$ or $\delta$ contains the variable $d$ or $e$, an additional variable can be eliminated either by step 6a or 6c of the algorithm. 
Returning 3-occur-SAT($F$[$x$]) $\lor$ 3-occur-SAT($F$[$\neg$$x$, $y$, $z$]) have a branching factor of at most $\tau (5, 8)$. 
Otherwise, in the branch $F$[$z$], we can eliminate a variable 
by Lemma \ref{C = 3,reduction}. In the branch $F$[$x$, $y$, $\neg$$z$], $\neg a$ and $\neg b$ are pure literals and an additional variable can be eliminated. Hence, returning 3-occur-SAT($F$[$z$]) $\lor$ 3-occur-SAT($F$[$x$, $y$,$\neg$$z$]) have a branching factor of at most $\tau (4,9)$. 

If $|Vars(A\,\cap\,B)| = |Vars(A\,\cap\,C)| = 1$, then there are clauses $(x\lor  a\lor b),\, (x\lor  c),\,(y \lor  a),\,  (y \lor  d)\, (z \lor  b),\,  (z \lor  e),\, (\neg a \lor \gamma)$ and  $(\neg b \lor  \delta)$. Again, if  either $\gamma$ or $\delta$ contains the variable $d$ or $e$, an additional variable can be eliminated in the branch $F$[$x$] either by step 6a or 6c of the algorithm. Returning 3-occur-SAT($F$[$x$]) $\lor$ 3-occur-SAT($F$[$\neg$$x$, $y$, $z$]) have a branching factor of at most $\tau (5, 8)$. If $\gamma$ (or $\delta$) is $\neg c$, resolution via $a$ gives $(x\lor  \neg c\lor b),\, (x\lor  c),\,(y \lor  \neg c),\,  (y \lor  d)\, (z \lor  b),\,  (z \lor  e)$ and  $(\neg b \lor  \delta)$. Due to $(x\lor  c)$,  $(x\lor  \neg c\lor b)$ can be reduced to $(x\lor b)$ allowing the formula to remain 3-regular. Else, either  $\gamma$ and $\delta$ contain a variable other than $\{x,y,a,b,c,d,e\}$.In the branch $F$[$x$, $y$, $\neg$$z$], $\neg a$ and $\neg b$ are pure literals and an at least 9 variables are eliminated ($x$, $y$, $z$, $a$,  $b$, $c$, $d$, $e$ and the variables in $\gamma$ or $\delta$). Hence, returning 3-occur-SAT($F$[$z$]) $\lor$ 3-occur-SAT($F$[$x$, $y$,$\neg$$z$]) have a branching factor of at most $\tau (4,9)$.\end{proof}
\begin{lemma}\label{C =3, +} 
    If $|\alpha|+|\beta| = 2$ and one of the literals $x$, $y$ or $z$ is neighbours with a negative literal, there exists $w\in \{x,y,z\}$ such that branching factor of 3-occur-SAT($F$[$\neg w$]) $\lor$ 3-occur-SAT($F$[$w$]) is at most $\tau(4,9)$. \end{lemma}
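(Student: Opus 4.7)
The plan is to branch on whichever of $x, y, z$ shares a clause with the negative literal. From $|\alpha|+|\beta|\ge|\gamma|+|\delta|\ge|\omega|+|\sigma|=2$, each of $\alpha,\beta,\gamma,\delta,\omega,\sigma$ is a single literal. By relabelling I may place the negative neighbour in an $x$-clause and write $\alpha=\{\neg v\}$, $\beta=\{b\}$, with $b_v=var(b)$. I will show that returning 3-occur-SAT($F$[$x$]) $\lor$ 3-occur-SAT($F$[$\neg x$]) has branching factor at most $\tau(4,9) < 1.1199$.

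For the branch $F[\neg x]$ I need to eliminate $9$ variables. Lemma \ref{not x} immediately gives $\{x,y,z\}\cup N(y)\cup N(z)\cup Vars(\alpha)\cup Vars(\beta)$, and since $Vars(\alpha)\cup Vars(\beta)=N(x)=\{v,b_v\}$, this equals $\{x,y,z\}\cup N(x)\cup N(y)\cup N(z)$. Each of $N(x), N(y), N(z)$ contains two distinct variables, for otherwise the pair would share two clauses and step 5 would have fired; and the three sets are pairwise disjoint, because opposite-sign overlaps are forbidden by step 6a (see Lemma \ref{many rules}(i)) and same-sign overlaps by step 6b. Hence $|\{x,y,z\}\cup N(x)\cup N(y)\cup N(z)|=3+6=9$.

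For the branch $F[x]$ I only need to find $4$ eliminations. Lemma \ref{x} provides $x, v, b_v$. The fourth comes from a pure-literal cascade: since $\neg v$ occurred solely in $(x\lor\neg v)$, deleting that clause leaves $v$ pure, and step 4a sets $v=1$, removing both clauses $(v\lor P_1)$ and $(v\lor P_2)$. The decisive claim --- and the main obstacle of the proof --- is that $(P_1\cup P_2)\setminus\{x,v,b_v\}$ is non-empty, since any such variable $q$ then loses an occurrence and is caught by step 4. The exclusions $x\notin P_1\cup P_2$ and $v\notin P_1\cup P_2$ are immediate (from the enumeration of $x$'s positive clauses and from standardisation, respectively). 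If $b_v$ were the only variable in $P_1\cup P_2$, then $P_1, P_2\subseteq\{b,\neg b\}$; since neither can equal $\{b,\neg b\}$ (a trivial clause), each $P_i$ is a single literal. Same-sign choices yield a duplicate clause (standardised away), whereas opposite signs make $v$ and $b_v$ share two clauses and trigger step 5 on that pair --- both contradict step-6 reducedness. Hence the needed $q$ exists, $F[x]$ eliminates at least $4$ variables, and the branching factor is at most $\tau(4,9)$, as required.
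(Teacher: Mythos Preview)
Your proof is correct and follows essentially the same approach as the paper: both arguments relabel so that the negative neighbour sits in an $x$-clause, use Lemma~\ref{many rules} to conclude that the six singleton subclauses $\alpha,\beta,\gamma,\delta,\omega,\sigma$ involve six distinct variables (yielding nine eliminations in $F[\neg x]$), and obtain the fourth elimination in $F[x]$ from the pure-literal cascade on $v$ together with the observation that step~5 forbids the two positive $v$-clauses from containing only the variable of $b$. Your treatment of the last point is slightly more explicit than the paper's (spelling out the duplicate-clause and opposite-sign subcases), but the substance is identical.
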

        
\begin{proof}
        By lemma  \ref{many rules},  $\alpha, \beta, \gamma, \delta, \omega, \sigma$ all contain a different variable. Thus, in the branch $F[\neg x,y ,z]$, 9 variables are eliminated. Without loss of generality, suppose $x$ shares a clause with a negative literal and that there are clauses $(\neg x \lor \neg y \lor \neg z), (x \lor \neg v_1), (x \lor l), (v_1 \lor C_2), (v_1 \lor C_2)$. In the branch $F[x]$, $v_1$ is set to 1 and the variables in the subclauses $C_1$ and $C_2$ can be eliminated. As $C_1, C_2$ cannot contain solely the variable of $l$ (due to step 5), $C_1, C_2$ contain a variable outside of $x, v_1$ and the variable of $l$, and at least 4 variables are eliminated in this branch.
    
\end{proof}
\subsection{Tackling all-negative 2-clauses}
Lastly, we analyse the case when $|C|$ = 2. There are clauses ($\neg x\lor \neg y)$, $(x\lor \alpha)$, $(x\lor \beta)$, $(y\lor \gamma)$ and $(y\lor \delta)$. Due to Lemma \ref{many rules}, $Vars(\alpha) \cap Vars(\beta) = \emptyset$ and if any of $\alpha,\beta,\gamma,\delta$ is a singleton subclause containing variable $u$, $u$ is not present in the other 3 subclauses. 

\begin{lemma}\label{tau}
    Returning 3-occur-SAT($F$[$\neg$$x$, $y$]) $\lor$ 3-occur-SAT($F$[$x$, $\neg y$]) has a branching factor of at most $\tau(6-|N(x)|+|N(y)|, 6-|N(y)|+|N(x)|)$.\end{lemma}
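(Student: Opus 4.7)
The plan is to lower-bound, in each of the two branches, the number of variables removed, and then apply monotonicity of $\tau$ in its arguments.

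First I establish that $|N(x)|,|N(y)|\geq 4$. Because the all-negative clause $(\neg x\lor\neg y)$ has an empty tail, Lemma \ref{many rules}(iii) applied to each of the four pairs of the form $(x\lor\cdot),(y\lor\cdot)$ forces $|\alpha|,|\beta|,|\gamma|,|\delta|\geq 2$. Step~5 of the algorithm forbids two variables from sharing two clauses, so $Vars(\alpha)\cap Vars(\beta)=\emptyset$ and $Vars(\gamma)\cap Vars(\delta)=\emptyset$; hence $|N(x)|=|\alpha|+|\beta|\geq 4$ and $|N(y)|=|\gamma|+|\delta|\geq 4$. Applying step~5 once more to $y$ and any $v\in\gamma\cup\delta$, each $v\in N(y)$ lies in exactly one of $\gamma$ or $\delta$, and similarly for $N(x)$.

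Next I count eliminations in $F[\neg x,y]$. The two assignments remove the variables $x$ and $y$ outright. For any $v\in N(y)$, the assignment $y=1$ destroys precisely the unique clause $(y\lor\gamma)$ or $(y\lor\delta)$ in which $v$ appears, so $v$ loses exactly one of its occurrences. By Lemma \ref{3} every variable in a step-6 reduced formula is a $(2,1)$ variable, so after losing one occurrence $v$ is either a pure literal or a $(1,1)$ variable and is therefore eliminated by step~4. This gives $2+|N(y)|$ variables removed in this branch; a symmetric argument yields $2+|N(x)|$ in $F[x,\neg y]$.

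Finally, since $|N(x)|\geq 4$ we have $2+|N(y)|\geq 6-|N(x)|+|N(y)|$, and since $|N(y)|\geq 4$ we have $2+|N(x)|\geq 6-|N(y)|+|N(x)|$. Because $\tau$ is decreasing in each of its coordinates, the branching factor is at most $\tau(6-|N(x)|+|N(y)|,\,6-|N(y)|+|N(x)|)$. The only subtlety is confirming via step~5 that no $v\in N(y)$ appears in both $\gamma$ and $\delta$, and that $x,y$ do not themselves lie in $N(x)\cup N(y)$; both facts follow from the observation that $x$ and $y$ already share the clause $(\neg x\lor\neg y)$, so no further shared clause between these variables (or between $y$ and a neighbour $v$) is permitted in a step-5 reduced formula.
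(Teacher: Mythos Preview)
Your argument has a genuine gap: the claim that $|N(x)|,|N(y)|\geq 4$ is false, and it stems from a misreading of Lemma~\ref{many rules}(iii). That lemma carries the hypothesis ``if $z\in Vars(\alpha)$ and $z\in Vars(\beta)$'' for all three conclusions (i)--(iii); part (iii) therefore only asserts $|\alpha|\geq 2$ and $|\beta|\geq 2$ \emph{when $\alpha$ and $\beta$ share a variable}. It does not force every subclause to have length at least $2$. The paper itself confirms this: the very next lemma (Lemma~\ref{3 clause}) treats the case where $x$ is in a $2$-clause, and Lemma~\ref{C =2, +} explicitly handles $|\alpha|=|\beta|=|\gamma|=|\delta|=1$. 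So $|N(x)|$ can be $2$ or $3$, and then your inequality $2+|N(y)|\geq 6-|N(x)|+|N(y)|$ fails.

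What you are missing is the extra elimination on the $\neg x$ side coming from singleton subclauses. The paper's proof uses Lemma~\ref{not x}: in the branch $F[\neg x,y]$, besides $x$, $y$, and all of $N(y)$, you additionally eliminate the variable in $\alpha$ (respectively $\beta$) whenever $|\alpha|=1$ (respectively $|\beta|=1$), because that subclause becomes a unit clause. Since $|N(x)|=|\alpha|+|\beta|$ with each summand at least $1$, the number of singletons among $\{\alpha,\beta\}$ is at least $4-|N(x)|$; and by the remark preceding Lemma~\ref{tau} those singleton variables are not in $N(y)$, so there is no double counting. This yields exactly $2+|N(y)|+(4-|N(x)|)=6-|N(x)|+|N(y)|$ eliminations, which is the bound claimed. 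Your count of $2+|N(y)|$ is only the $|N(x)|\geq 4$ special case of this.
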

\begin{proof}
The literal $x$ is in at least $4-|N(x)|$ clauses with a singleton subclause. By Lemma \ref{not x}, in the branch $F$[$\neg x$, $y$], variables $x,y$, variables in $\gamma,\delta$ and variables from $\alpha,\beta$ (if they are a singleton subclause) are eliminated. This brings a total of $2 + |N(y)| + 4-|N(x)|$ variables or $6-|N(x)|+|N(y)|$ variables. Symmetric analysis for the branch $F$[$x$, $\neg  y$].
 \end{proof}
\begin{lemma}\label{3 clause}Let  $C = (\neg x \lor \neg y)$. If there is a literal $u \in \{x,y\}$ such that $u$ in a clause of size 4 or more, returning 3-occur-SAT($F$[$\neg$$x$, $y$]) $\lor$ 3-occur-SAT($F$[$x$, $\neg y$]) has branching factor of at most $\tau(4,9)$. Furthermore, if $u$ is in a clause of size 2, then the branching factor is at most $\tau(5,8)$.\end{lemma}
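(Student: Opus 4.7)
The plan is to apply Lemma \ref{tau} with a sharpened counting that tracks singleton subclauses, then split into a short case analysis. By the symmetry between $x$ and $y$, I may assume without loss of generality that $u=x$ and the clause of size at least $4$ containing $x$ is $(x\lor\alpha)$ with $|\alpha|\geq 3$. In particular $|N(x)|=|\alpha|+|\beta|\geq 4$ (using the disjointness of $Vars(\alpha)$ and $Vars(\beta)$ noted at the start of the subsection) and $|N(y)|\geq 2$.

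First I sharpen the count behind Lemma \ref{tau}. Let $s_x$ denote the number of singleton subclauses among $\{\alpha,\beta\}$ and $s_y$ the analogous count among $\{\gamma,\delta\}$. In the branch $F[\neg x,y]$, the eliminated variables are $\{x,y\}\cup N(y)\cup S_x$, where $S_x$ is the set of singleton variables on the $x$-side; the symmetric statement holds for $F[x,\neg y]$. The key subtlety is the disjointness $S_x\cap N(y)=\emptyset$: a singleton variable $v$ from $\alpha$ (say $\alpha=\{v\}$) that also lay in $Vars(\gamma\cup\delta)$ would, by Lemma \ref{many rules}(iii) applied to the empty-tailed clause $(\neg x\lor\neg y)$, force $|\alpha|\geq 2$, a contradiction; likewise for singletons from $\beta$. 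Hence the two branches eliminate exactly $2+|N(y)|+s_x$ and $2+|N(x)|+s_y$ variables respectively.

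Next I case-split on $|\beta|$ and $|N(y)|$. If $|\beta|=1$ so that $s_x=1$ and $|N(x)|\geq 4$, the sub-cases $|N(y)|\in\{2,3,\geq 4\}$ (which pin down $s_y=2,\geq 1,0$) yield branching vectors at least $(5,8)$, $(6,7)$, and $(7,6)$ respectively. If $|\beta|\geq 2$ so that $s_x=0$ and $|N(x)|\geq 5$, the same sub-split yields vectors at least $(4,9)$, $(5,8)$, and $(6,7)$. Consulting the table of branching factors, each such vector produces $\tau\leq\tau(4,9)$, which proves the main claim.

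For the ``furthermore'' part, $u=x$ lying additionally in a $2$-clause means one of $\alpha,\beta$ is a singleton; since $|\alpha|\geq 3$ we must have $|\beta|=1$, which lands us in the first regime of the case analysis and delivers $\tau(5,8)$ in the worst sub-case. The only step that is not routine arithmetic is the disjointness argument in paragraph two, which depends specifically on condition (iii) of Lemma \ref{many rules} (the other conditions would only give $|\alpha|\geq 2$ \emph{or} $|\beta|\geq 2$, not both). Once disjointness is in hand, the finite case check against the standard branching vectors is mechanical.
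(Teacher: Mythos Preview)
Your proof is correct and follows essentially the same approach as the paper: both sharpen the count in Lemma~\ref{tau} by exploiting that $|\alpha|\geq 3$ forces an improved lower bound on the number of singleton subclauses on the $x$-side, and both rely (implicitly in the paper, explicitly in your write-up) on Lemma~\ref{many rules}(iii) to ensure the singleton variables are disjoint from $N(y)$. The only stylistic difference is that the paper packages the endgame as ``at least 13 variables total across both branches, each branch at least 4 (respectively 5)'', whereas you unroll this into a six-way case split on $(|\beta|,|N(y)|)$; the two are equivalent. Two tiny wording quibbles: your ``exactly $2+|N(y)|+s_x$'' should be ``at least'' (further reductions may apply), and for $|N(y)|\geq 4$ you should write $s_y\geq 0$ rather than $s_y=0$, but neither affects the validity of the bounds.
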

\begin{proof}
    Without loss of generality, let $x$ be in a clause with $4$ or more variables.  From the analysis as Lemma \ref{tau}, the branch $F[x ,\neg  y]$ eliminates at least $6-|N(y)|+|N(x)|$ variables. Furthermore, the literal $x$ is in at least $5 -|N(x)| $ clauses with a singleton subclause. The branch $F[\neg x$, $ y]$ eliminates at least $7 -|N(x)|+|N(y)|$ variables. Thus, the branching factor is at most  $\tau(7 -|N(x)|+|N(y)|, 6-|N(y)|+|N(x)|)$. We can see that at least 13 variables are eliminated across both branches  and each branch eliminates at least 4 variables. This gives a branching factor of at most $\tau(4,9)$. 

Furthermore, if the literal $x$ is also in a clause with of size 2,  then we note that in both branches, at least 5 variables are eliminated. As, at least 13 variables are eliminated across both branches, this gives a branching factor of at most $\tau(5,8)$. 
\end{proof}
Note that by Lemma \ref{tau}, based on counting the numbers of neighbours of $x,y$, the maximum branching factor for returning 3-occur-SAT($F$[$\neg$$x$, $y$]) $\lor$ 3-occur-SAT($F$[$x$, $\neg y$]) is at most $\tau(4,8)$. Thus, if we are able to eliminate an additional variable in either branch, this brings the branching factor down to at most $\tau(4,9)$, making the branch admissible for step 7 of our algorithm.
 \begin{lemma}\label{C =2, +}  With the clause ($\neg x \lor \neg y$), if either the literals $x$ or $y$ are neighbours with a negative literal, we can eliminate an additional variable from one of the branches. \end{lemma}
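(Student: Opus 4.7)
The plan is to exhibit one additional eliminated variable via a cascading pure-literal reduction triggered by the branching. Without loss of generality, suppose the negative literal neighbouring $x$ is $\neg v$ and $\neg v \in \alpha$ (the case $\neg v \in \beta$ is symmetric, as is the case where the negative neighbour is of $y$). First I would verify $v \neq x$ (otherwise $(x \lor \alpha)$ is trivial and removed by standardisation) and $v \neq y$ (otherwise $x$ and $y$ co-occur in both $(\neg x \lor \neg y)$ and $(x \lor \alpha)$, triggering step~5). By Lemma~\ref{3}, $v$ is a $(2,1)$ variable whose unique negative occurrence is $\neg v \in \alpha$; denote its two positive clauses by $(v \lor C_1)$ and $(v \lor C_2)$.

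I would then look at the branch $F[x, \neg y]$. Setting $x = 1$ removes $(x \lor \alpha)$ and $(x \lor \beta)$, so $\neg v$ disappears and $v$ becomes a $(2,0)$ pure literal; step~4a sets $v = 1$, further removing $(v \lor C_1)$ and $(v \lor C_2)$, so every variable in $C_1 \cup C_2$ loses one occurrence. My goal is to find some $z \in Vars(C_1 \cup C_2)$ outside the standard elimination set $S = \{x, y\} \cup N(x) \cup \{$singleton variables of $\gamma, \delta\}$ accounted for by Lemma~\ref{tau}; such a $z$ started with three occurrences and has at most two remaining in this branch, so step~4 eliminates it as the claimed additional variable.

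By step~5 no two variables co-occur in two clauses, so $x$ and every $u \in Vars(\alpha) \setminus \{v\}$ are excluded from $Vars(C_1 \cup C_2)$ since each already co-occurs with $v$ inside $(x \lor \alpha)$; moreover $y$'s three occurrences are exhausted by $(\neg x \lor \neg y), (y \lor \gamma), (y \lor \delta)$, so $y \notin Vars(C_1 \cup C_2)$. Thus, if no additional variable existed, we would have $Vars(C_1 \cup C_2) \subseteq Vars(\beta) \cup \{$singleton variables of $\gamma, \delta\}$. In that residual configuration I would apply step~6d to a set $V$ containing $v$, $Vars(\beta)$ and the relevant singleton variables (augmented with $x$ if $|V| \leq 3$), and, using step~5 to control the ``third clause'' of each member of $V$, verify that the distinct non-empty maximal subclauses over $F_{contain \, V}$ collapse to at most three (typically $(\neg y)$ together with a couple of third-clause residues), so that Lemma~\ref{key reduction} forces a net variable decrease and contradicts the step-6-reducedness of $F$.

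The hard part will be this step-6d verification, which requires a subcase analysis on the sizes of $\alpha, \beta, \gamma, \delta$ and on how $C_1, C_2$ embed into $Vars(\beta)$. The tightest subcase is $|\alpha| = 1$ with $\beta = (w_1 \lor w_2)$ and $C_1, C_2$ singleton subclauses on $w_1, w_2$: there $|V| = 4$ is exact, and step~5 applied to each of $w_1$ and $w_2$ is needed to bound their third-clause residues by at most two additional distinct non-empty maximal subclauses beyond $(\neg y)$.
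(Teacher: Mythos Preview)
Your overall strategy---trigger a pure-literal cascade in the branch $F[x,\neg y]$ and, if it stays inside the already-counted set, fall back on step~6d---is exactly the paper's approach. However, your execution has a genuine gap: you stop the cascade after \emph{one} level (the clauses $C_1,C_2$ of $v$) and then try to apply step~6d to the fixed set $V=\{x,v\}\cup Vars(\beta)\cup\{\text{singletons of }\gamma,\delta\}$. This $V$ does not work in general. Take $|\alpha|=|\beta|=|\gamma|=|\delta|=2$ with clauses $(x\lor\neg a\lor b)$, $(x\lor c\lor d)$, $(a\lor c)$, $(a\lor d)$, $(\neg c\lor p)$, $(\neg d\lor q)$ for fresh $p,q$. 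Here $Vars(C_1\cup C_2)=\{c,d\}\subseteq Vars(\beta)$, so your level-1 test finds nothing; but your $V=\{x,a,c,d\}$ yields four distinct non-empty maximal subclauses $(\neg y),(b),(p),(q)$, so Lemma~\ref{key reduction} fails. Step~5 gives you no control over the third clauses of $c,d$.

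What actually rescues this case is that $c$ and $d$ themselves become pure in $F[x,\neg y]$ (both positive occurrences are gone), so $p$ and $q$ are eliminated---an additional variable outside $Elim$ after all. The paper formalises this by building an \emph{iterated} set $S$ of variables all of whose clauses vanish in the branch, starting from $v$ and repeatedly absorbing any variable whose remaining occurrences are already covered; only when $S$ saturates inside $Elim$ does it turn to step~6d, and then the choice of $V$ depends on which variables ended up in $S$. Even so, several subcases (e.g.\ the paper's Case~3a and parts of Cases~3c and~5) require explicit resolution computations rather than a clean step~6d application. So your dichotomy ``find $z\in Vars(C_1\cup C_2)\setminus Elim$ or apply 6d to a fixed $V$'' is too coarse: you need the full cascade to even locate the extra variable, and the 6d fallback must use a $V$ chosen adaptively from the cascade, with a handful of cases handled by hand.
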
\begin{proof} 
Let $Elim$ = $N(x) \,\,\cup \{x,y\}\,\,\cup\,\, \{v \in N(y) | \text{there is a clause } (v\lor y) \text{ or } (\neg v \lor y)          \}$. In the branch $F[x]$, by Lemma \ref{not x}, the variables in $Elim$ are eliminated. We will argue below that if $x$ shares a clause with a negative literal ($\neg u)$, another variable outside of $Elim$ can be eliminated in the branch $F[x]$. 
Let $S \subset Elim$, be the set of variables such that all clauses containing a variable in $S$ are eliminated in the branch $F[x]$. $S$ can be iteratively generated from the following procedure. Initially, $S = \{u\}$ as the positive literal of $u$ becomes a pure literal and all clauses containing this variable are eliminated. Suppose there is a variable $v\in N(x)/S$ such that both positive instances or the negative instance of $v$ is in a clause with either the literal $x$ or the literals of a variable in $S$. In the branch $F[x]$, $v$ will become a pure literal and all clauses containing $v$ will be eliminated.  Add $v$ to $S$. Suppose there is a $v\in N(y)/S$ such that there is the clause ($y \lor v$) and all clauses containing the literal $\neg v$ contains a variable in $S$.  In the branch $F[x]$, As $v$ is set to 1 in the branch, all clauses containing the literal $v$ are eliminated. All clauses containing $\neg v$ also contains a variable in $S$, and hence all clauses containing $\neg v$ are eliminated. Hence, we can add $v$ to $S$. Repeat this process until convergence.

If any variable in $S$ is neighbours with a variable outside of $Elim$, as all clauses containing a variable in $S$ are eliminated in the branch $F[x]$, another variable can be eliminated using reductions up to step 6. Hence, we assume that all variables in $S$ are only neighbours with variables in $Elim$. As we delve into several cases, we discover that under this assumption some subset of $Elim$ are neighbours with only 3 subclauses containing variables outside the subset and hence can be reduced by step 6d. Suppose there are clauses ($y\lor v$), ($v\lor \sigma$). In the branch $F$[$x,\neg y$], $v$ is assigned to 1 and all variables in $\sigma$ are eliminated. If $\sigma$ contains a variable beyond $Elim$, an additional variable can be eliminated. Hence, we assume that $N(V) \subset Elim$. Thus for the variable $v$ only the literal $\neg v$ can be a neighbour with a subclause containing variables outside of $Elim.$ 

 The notation $v^*$ refers to either a positive or negative literal of the variable $v$ so as to distinguish literals we know are positive or negative and just a generic literal of a variable. Unless explicitly stated otherwise, when mentioning that a subset can be reduced, we are referring to a reduction by step 6d.   Let $|\alpha|\geq|\beta|$, $|\gamma|\geq|\delta|$ and $|\alpha|+|\beta|\geq|\gamma|+|\delta|$.
\medskip\noindent\newline\textbf{Case 1:} $|\alpha|=|\beta|=|\gamma|=|\delta|=2$

Without loss of generality, let there be clauses ($x \lor \neg a \lor b^*$), ($x \lor c^* \lor d^*$). $Elim = \{x,y,a,b,c,d\}$ and $a \in S$. The literal $a$ must be neighbours with both the variables $c,d$. Hence $c,d \in S$. If either the last instance of $c,d$ is neighbors with the variable $b$, $b\in S$, the set $\{x,y,a,b,c,d\}$ can be reduced. If the last instance of $c,d$ are solely neighbours with $y$, the set $\{x,y,a,c,d\}$ can be reduced.
\medskip\noindent\newline\textbf{Case 2:} $|\alpha|=|\beta|=|\gamma|=2,  |\delta|=1$

Suppose the literal $y$ is neighbours with a negative literal. Let there be clauses  $(y\lor e^*)$, $(y \lor f^* \lor g^*)$. In the branch $F[\neg x, y]$, $Elim = \{x,y,e,f,g\}$. Suppose one of $e^*, f^*$ or $g^*$ is $\neg h$. $h\in S$ and must be neighbours only with the other 2 of $e,f,g$. Thus, the set $\{x,y,e,f,g\}$ can be reduced.

Otherwise, without loss of generality, let there be clauses ($x \lor \neg a \lor b^*$), ($x \lor c^* \lor d^*$),  $(y \lor e)$. In the branch $F[x,\neg  y]$, $Elim = \{x,y,a,b,c,d,e\}$ and  $a\in S$. As the literal $a$ cannot share a clause with $y$ (otherwise  step 6a is applicable), the literal $a$ must share a clause with at least one of $c$ or $d$. Without loss of generality, let the literal $a$ be neighbours with $c^*$. Thus, $c \in S$.

If there is the clause ($y \lor c^* \lor f$), $f$ is $b^*$ as $c \in S$ ($f$ cannot be $e^*$ or $d^*$ otherwise step 5 of the algorithm would apply). Depending on if $a$ is a neighbour of $d^*$ or $e^*$, the set $\{x,y,a,b,c,d\}$ or $\{x,y,a,b,c,e\}$ can be reduced. Hence, the last instance of $c$ is not a neighbour of $y$ and must be a neighbour of $b^*$ or $e^*$. If the literal $a$ is neighbours with $d^*$, $d \in S$. One of the sets $\{x,y,a,c,d,e\}$ or $\{x,y,a,b,c,d\}$ or  $\{x,y,a,b,c,d,e\}$ (depending on if $b$ and/or $e$ is in $S$) can be reduced. Else, the literal $a$ is neighbours with $c^*$ and $e^*$ solely. If the last instance of $c$ is neighbours with $b$, $b \in S$. Either the set $\{x,y,a,b,c,e\}$ or $\{x,y,a,b,c,d,e\}$ (if $d$ in $S$) can be reduced. Else, the literal $a$ is neighbours with $c$ and $e$ solely and the last instance of $c$ is neighbours with $e$ and the set $\{x,y,a,c,e\}$ can be reduced.
\medskip\noindent\newline\textbf{Case 3:} $|\alpha|=|\beta|=2,  |\gamma|=|\delta|=1$

Let there be clauses $(y \lor e^*)$, $(y\lor f^*)$.  In the branch $F$[$x, \neg y$], $Elim = \{x,y,e,f\}$. If $e^*$ (symmetric for $f^*$) is negative, the 2 instances of the literal $e$ must be neighbours with a variable outside of $Elim$ and another variable can be eliminated.  Note that the  literal $e$ cannot be a neighbour of the literal $x$ due to Lemma \ref{many rules}. Otherwise, without loss of generality, let there be clauses ($x \lor \neg a \lor b^*$), ($x \lor c^* \lor d^*$),  $(y \lor e)$, $(y \lor f)$. $Elim = \{x,y,a,b,c,d,e,f\}$ and  $a\in S$.
\medskip\noindent\newline\textbf{Case 3a:} $a$ is neighbours with only variables $e,f$

If there are clauses ($a \lor e$), ($a \lor f$) together with ($x \lor \neg a \lor b^*$), ($x \lor c^* \lor d^*$),  $(y \lor e)$, $(y \lor f)$. Resolving via $a, x, y$ gives us clauses ($e\lor b^*$), ($f\lor b^*$), ($e\lor c^*\lor d^*$), ($f\lor c^* \lor d^*$). Introduce 2 new variables $i,j$ to have clauses ($e\lor i$), ($f\lor i$), ($e\lor j$), ($f\lor j$), ($\neg i\lor b^*$), ($\neg j\lor c^* \lor d^*$), thus decreasing the number of variables by 1.

Else, $a$ is neighbours with $\neg e$ or $\neg f$. Without loss of generality, let  $a,\neg e$ share a clause and hence $e \in S$. The last instance of $e$ is either neighbours to $f^*$ or one of $b^*,c^*,d^*$. In the former case, step 6d of the algorithm is applicable to the set $\{x,y,a,e,f\}$. In the latter case, one of $b^*,c^*,d^*$ is in $S$ (call this variable $g$). $g$ must be neighbours with either last instance of $f$ or another variable in $b,c,d$ and that variable must be in $S$. The set $\{x,y,a,e,f, g\}$ (in the former case) or $\{x,y,a,b,c,d, e,f\}$ (in the latter case as at most one variable out of $b,c,d$ is not in $S$) can be reduced.
\medskip\noindent\newline\textbf{Case 3b:} $a$ is neighbours with $c^*,d^*$

$c,d\in S$. If the last instance of $c,d$ are neighbours with $b^*$ then $b\in S$ and the set $\{x,y,a,b,c,d,e,f\}$ can be reduced.  If the last instance of $c,d$ are not neighbours with $b^*$, they must be neighbours with $e^*, f^*$, and the set $\{x,y,a,c,d,e,f\}$ can be reduced.
\medskip\noindent\newline\textbf{Case 3c:} $a$ is neighbours with $c^*, e$ and not $d^*$

$c \in S$. If the last instance of $c$ is neighbours with $b^*$, then $b \in S$, and the last instance of $b$ must be neighbours with $d^*$ or only $e^*/f^*$. Either the set $\{x,y,a,b,c,d,e,f\}$ (in the former case as $a,b,c,d$ are all in S) or $\{x,y,a,b,c,e,f\}$ (in the latter case) can be reduced.

If the last instance of $c$ is neighbours with $e$, the set $\{x,y,a,c,e\}$ or $\{x,y,a,c,e, f\}$ (if $f$ shares a clause with $a,c$ or $e$) can be reduced.

If the previous case do not apply, the last instance of $c$ is only neighbours with $f^*$. If the last instance of $f^*$ is neighbours with $a,c,e$, the set $\{x,y,a,c,e, f\}$ can be reduced. If the neighbour to $a$ is $\neg e$ then $e \in S$. 
One of $b,d$ must be a neighbour of the literal $e$ and must be in $S$. The set $\{x,y,a,b,c,d,e, f\}$ can be reduced.

Else, there are clauses ($x \lor \neg a \lor b^*$), ($x \lor c_1 \lor d^*$),  $(y \lor e)$, $(y \lor f)$, $(a \lor e)$,$(a \lor c_2)$,$(c_3 \lor f)$, where $c_1,c_2,c_3$ are instances of the variable $c$. Resolving via $x,y$ we have ($e \lor \neg a \lor b^*$), ($f \lor \neg a \lor b^*$), ($e \lor c_1 \lor d^*$),  ($f \lor c_1 \lor d^*$), $(a \lor e)$,$(a \lor c_2)$,$(c_3 \lor f)$. Resolving via $a$, we have ($e \lor b^*$),($e \lor c_2 \lor b^*$), ($f \lor e\lor b^*$), ($f \lor c_2 \lor b^*$), ($e \lor c_1 \lor d^*$),  ($f \lor c_1 \lor d^*$),  $(c_3 \lor f)$ which can be simplified to ($e \lor b^*$), ($f \lor c_2 \lor b^*$), ($e \lor c_1 \lor d^*$),  ($f \lor c_1 \lor d^*$),  $(c_3 \lor f)$. From the following cases, observe that step 6d of the algorithm is applicable to the set $\{x,y,a,c\}$.

If $c_1$ is $\neg c$, we have ($e \lor b^*$), ($f \lor c \lor b^*$), ($e \lor \neg c \lor d^*$),  ($f \lor \neg c \lor d^*$),  $(c \lor f)$. Resolving via $c$, we have ($e \lor b^*$), ($e \lor f \lor b^* \lor d^*$),  ($f \lor b^* \lor d^*$), ($e \lor f \lor d^*$),  ($f \lor  d^*$), which can be simplified to ($e \lor b^*$), ($f \lor  d^*$)

If $c_2$ is $\neg c$,  we have ($e \lor b^*$), ($f \lor \neg c \lor b^*$), ($e \lor c \lor d^*$),  ($f \lor c \lor d^*$),  $(c \lor f)$. Resolving via $c$, we have ($e \lor b^*$), ($f \lor e \lor d^* \lor b^*$), ($f \lor d^* \lor b^*$),($f \lor b^*$), which can be simplified to ($e \lor b^*$), ($f \lor b^*$).

If $c_3$ is $\neg c$, we have ($e \lor b^*$), ($f \lor c \lor b^*$), ($e \lor c \lor d^*$),  ($f \lor c \lor d^*$),  $(\neg c \lor f)$. Resolving via $c$, we have ($e \lor b^*$), ($f \lor b^*$), ($e \lor f \lor d^*$),  ($f \lor d^*$) which can be simplified to ($e \lor b^*$), ($f \lor b^*$), ($f \lor d^*$).
\medskip\noindent\newline\textbf{Case 4:} $|\alpha|=|\gamma|= 2,  |\beta|=|\delta|=1$

There are clauses ($x\lor a^*\lor b^*$), ($x\lor c^*)$, $(y\lor d^*)$, ($y \lor e^* \lor f^*$). One of $a^*,b^*$ may be $e^*,f^*$ . In the branch $F$[$x,\neg y$]. $Elim = \{x,y,a,b,c,d\}$. If one of $a^*,b^*,c^*$ is negative, the branch $F$[$x,\neg y$] can eliminate an additional variable as follows (symmetric for $d^*, e^*, f^*$ being negative). 

If $c^*$ is negative, $c\in S$. The literal $c$ must be neighbours with at least one of $a^*$ or $b^*$. Without loss of generality, let $a^*$ be neighbours with $c$ and $a \in S$. Hence, the last instance of $a$ must be neighbours with $d^*$. Depending on if $c$ is a neighbour to $b^*$ or $d^*$, the set $\{x,y,a,b,c,d\}$ or $\{x,y,a,c,d\}$ can be reduced.

If $a^*$ is negative (symmetric analysis if $b^*$ is negative), $a\in S$. Hence, $a$ must be neighbours with solely $c^*$ and $d^*$. Hence, $c\in S$ and the last instance of $c$ is neighbours with either $b^*$ or $d^*$. Depending on if the last instance of $c$ is a neighbour to $b^*$ or $d^*$, the set $\{x,y,a,b,c,d\}$ or $\{x,y,a,c,d\}$ can be reduced. 
\medskip\noindent\newline\textbf{Case 5:} $|\alpha|= 2, |\beta|=|\gamma|=|\delta|=1$

There are clauses ($x\lor a^*\lor b^*$), ($x\lor c^*)$, $(y\lor d^*)$, ($y \lor e^*$). If either $d^*$ or $e^*$ is negative, another variable can be eliminated in the branch $F$[$\neg x, y$] as follows. Without loss of generality, let $d^*$ be negative. The literal $d$ must be neighbours solely with $c^*$ and $e^*$. The set $\{x,y,c,d,e\}$ can be reduced. 

Else, we know $d^*, e^*$ are positive. If any of $a^*,b^*$ or $c^*$ is negative, another variable can be eliminated in the branch $F$[$x, \neg y$] as follows. $Elim =\{x,y,a,b,c,d,e\}$.

If $c^*$ is negative, $c \in S$. If the literal $c$ is neighbours with solely $d^*$ and $e^*$, the set $\{x,y,c,d,e\}$ can be reduced. Else, $c$ is neighbours with either $a^*$ or $b^*$. Without loss of generality, let $c$ be neighbours with the $a^*$ and $a \in S.$ As the literals $d$ and $e$ can only have neighbours in $Elim$ (otherwise another variable is eliminated as both $d$ and $e$ are assigned to 1 in the branch $F$[$x, \neg y$]), either the set $\{x,y,a,c,d,e\}$ (if $b^*$ is not neighbours with $c^*,d^*$ or $e$) or $\{x,y, a,b,c,d,e\}$ (if $b^*$ is neighbours with one of $c^*,d^*$ or $e$) can be reduced.
 
 Otherwise, without loss of generality, let $a^*$ be negative, $a\in S$. If the literal $a$ is neighbours with $c^*$, $c^* \in S$. Either the set $\{x,y, a,c,d,e\}$ (if $b^*$ is not neighbours with $c^*,d^*$ or $e$) or $\{x,y, a,b,c,d,e\}$ (if $b^*$ is neighbours with one of $c^*,d^*$ or $e$) can be reduced. 

Else, the literal $a$ is only neighbours with $d^*$ and $e^*$. If the literal $a$ is neighbours with solely the literals $d$ and $e$, there are clauses ($a \lor d$), ($a \lor e$). Resolving via $a, x, y$ gives us clauses ($d\lor b^*$), ($e\lor b^*$), ($d\lor c^*$), ($e\lor c^*$). Introduce 2 new variables $i,j$ to have clauses ($d\lor i$), ($e\lor i$), ($d\lor j$), ($e\lor j$), ($\neg i\lor b^*$), ($\neg j\lor c^*)$, thus decreasing the number of variables by 1. Otherwise, the literal $a$ is neighbours with either $\neg d$ or $\neg e$. Without loss of generality, let $a$ be neighbours with $\neg d$. The last instance of the literal $d$ must hence be neighbours with variable $b,c$ or $e$. If the last instance of $d$ is solely neighbours with the variable $e$, the set $\{x,y, a,d,e\}$ can be reduced. Else, if $d$ is neighbour with $b$ or $c$ and, the set $\{x,y,a,b,c,d,e\}$ can be reduced.
\medskip\noindent\newline\textbf{Case 6:} $|\alpha|=|\beta|=|\gamma|=|\delta|=1$

There are clauses ($x\lor a^*$), ($x\lor b^*)$, $(y\lor c^*)$, ($y \lor d^*$). If $a^*$ is negative, the other 2 instance of the variable $a$ are neighbours to only variables in $\{x,y,a,b,c,d\}$. If $a^*$ is positive, the other instance of the positive literal of $a$ must be neighbours to only variables in $\{x,y,a,b,c,d\}$ (otherwise another variable is eliminated as $a$ is assigned to 1 in the branch $F$[$\neg x, y$]). The analysis is symmetric for $b^*,c^*,d^*$. Hence, if one of $a^*,b^*,c^*$ or $d^*$ is negative the set $\{x,y,a,b,c,d\}$ can be reduced. 
\end{proof}

\section{Step 8: Making $F$ monotone}
In this section, we show that enough structure is imposed on our formula for us to use a reduction rule to make the formula monotone.  
\begin{lemma}\label{step 8}
    If $F$ is a step-7 reduced formula, then if $C$ is an all-negative clause
    (i) $|C| < 4$  
    (ii) If $|C| =3$, let $C = (\neg x \lor \neg y \lor \neg z)$. The literals $x,y,z$ are in all-positive clauses of length 2.
    (iii) If $|C| =2$, let $C = (\neg x \lor \neg y)$. The literals $x,y$ are in all-positive clauses of length 2 or 3.
\end{lemma}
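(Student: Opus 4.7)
The plan is to prove each of the three items by contraposition, invoking the branching-factor lemmas from Section 5 to argue that if the claimed structural condition fails, then step 7 of the algorithm would actually apply, contradicting the hypothesis that $F$ is step-7 reduced.

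For part (i), I would apply Lemma \ref{C=4} directly: any all-negative clause of length at least $4$ admits a variable $w$ on which branching yields a factor of at most $\tau(4,9) < 1.1199$, so step 7 would fire. Hence no such clause survives.

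For part (ii), suppose $|C|=3$ with $C=(\neg x\lor\neg y\lor\neg z)$, and adopt the ordering from Section 5.2 so that $|\alpha|+|\beta|\ge|\gamma|+|\delta|\ge|\omega|+|\sigma|$. Lemmas \ref{C=3} and \ref{C=3ii} together handle the cases $|\alpha|+|\beta|\ge 4$ and $|\alpha|+|\beta|=3$, in each case producing a branch with factor at most $\tau(3,11)<1.1199$. Thus in a step-7 reduced formula we must have $|\alpha|+|\beta|=2$, which forces $|\alpha|=|\beta|=|\gamma|=|\delta|=|\omega|=|\sigma|=1$, so each of $x,y,z$ sits in exactly two other clauses, both of length $2$. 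Finally, Lemma \ref{C =3, +} rules out any of these six 2-clauses containing a negative literal besides $\neg x,\neg y,\neg z$, since otherwise step 7 would fire with factor $\tau(4,9)$. Hence those six 2-clauses are all-positive, as claimed.

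For part (iii), suppose $|C|=2$ with $C=(\neg x\lor\neg y)$. Lemma \ref{3 clause} immediately rules out $x$ or $y$ appearing in a clause of size $\ge 4$, for step 7 would then fire with factor $\tau(4,9)$. So every other clause containing $x$ or $y$ has length $2$ or $3$. To establish that these clauses are all-positive, I will combine the remark preceding Lemma \ref{C =2, +} — which bounds the baseline branching factor for $F[\neg x,y]\lor F[x,\neg y]$ by $\tau(4,8)$ based on neighbour counts — with Lemma \ref{C =2, +} itself, which says that if either $x$ or $y$ shares a clause with a negative literal then one further variable can be eliminated in one of the two branches. Adding one to the smaller side of the $\tau(4,8)$ worst case gives at most $\tau(4,9)<1.1199$, again triggering step 7. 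The one subtlety to double-check is that the extra eliminated variable is added to the correct branch (i.e., the branch in which the literal hypothesized to have a negative neighbour is assigned false), which is precisely the content of the proof of Lemma \ref{C =2, +}; this is the main place where I would be careful. Therefore no such negative literal exists, and every clause containing $x$ or $y$ other than $C$ is all-positive of length $2$ or $3$.
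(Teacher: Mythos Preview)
Your proposal is correct and follows essentially the same route as the paper's own proof: invoke Lemma~\ref{C=4} for (i); Lemmas~\ref{C=3}, \ref{C=3ii}, and \ref{C =3, +} for (ii); and Lemmas~\ref{3 clause} and \ref{C =2, +} (together with the $\tau(4,8)\to\tau(4,9)$ remark) for (iii). The only cosmetic slip is the phrase ``adding one to the smaller side'' in part (iii)---the worst case is when the extra variable lands on the \emph{larger} side, yielding $\tau(4,9)$---but your stated bound and conclusion are correct, and your concern about ``the correct branch'' is unnecessary since the paper's remark explicitly covers either branch.
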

\begin{proof}
    By Lemma \ref{C=4}, $|C| \leq 3$. If $C = (\neg x \lor \neg y \lor \neg z)$, by lemma \ref{C=3} and \ref{C=3ii}, we are left with the case $|N(x)| = |N(y)| = |N(z)| = 2$ and thus the literals $x,y,z$ are in clauses of length 2. Furthermore, by lemma \ref{C =3, +}, the literals $x,y,z$ are in all-positive clauses. If $C = (\neg x \lor \neg y)$, by lemma \ref{3 clause} and \ref{C =2, +}, the literals $x,y$ are in all-positive clauses of length 2 or 3. \end{proof}
\begin{definition} \emph{\textbf{(Autarkic sets)} }(Slightly modified definition from \cite{chu2021improved})
    A set of variables \emph{X} is called an autarkic set if each clause containing a negative literal in \emph{X} also contains a positive literal in \emph{X}. A CNF formula   $F$ is satisfiable if and only if it is satisfiable when all variables in an autarkic set are assigned 1.
\end{definition}

\begin{lemma}\label{aut} 
    Let $S$ be the set of all variables not in an all-negative clauses. $S$ is an autarkic set and hence $S$ is satisfiable if and only it is satisfiable when all variables in $S$ are assigned 1. 
\end{lemma}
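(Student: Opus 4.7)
The plan is to verify the autarky condition directly from the definition and then invoke the standard autarky theorem for the second half of the statement. Concretely, I need to show: for every clause $C$ containing $\neg v$ with $v \in S$, there exists some $u \in S$ appearing positively in $C$.

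First I would observe that $C$ itself cannot be all-negative, since $v$ appears in $C$ and $v \in S$ means $v$ does not appear in any all-negative clause. Hence $C$ contains at least one positive literal; call its variable $u$.

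The main step is to argue that $u \in S$. Suppose toward contradiction that $u \notin S$, so $u$ appears in some all-negative clause $D$. By Lemma \ref{step 8}, $D$ has length at most 3, and each variable occurring negatively in $D$ has all of its positive occurrences in all-positive clauses (of length 2 when $|D|=3$, and of length 2 or 3 when $|D|=2$). In particular, every clause that contains $u$ positively must be all-positive. But $C$ contains $u$ positively and also $\neg v$, so $C$ is not all-positive — a contradiction. Therefore $u \in S$, which establishes the autarky condition for $S$.

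The second half of the statement is then immediate from the definition of an autarkic set recalled just before the lemma (the standard fact that assigning all variables of an autarkic set to $1$ preserves satisfiability). The main obstacle in the proof is the contradiction step, but it is not really an obstacle: the structural consequence of Lemma \ref{step 8} — that the positive occurrences of any variable appearing negatively in a short all-negative clause are confined to all-positive clauses — is exactly the right hammer, so the argument reduces to a one-line case check.
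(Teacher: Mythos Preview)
Your proposal is correct and follows essentially the same approach as the paper: both argue that a clause containing $\neg v$ with $v\in S$ cannot be all-negative, hence contains some positive literal $u$, and then use Lemma~\ref{step 8} to rule out $u\notin S$ (since variables in all-negative clauses have their positive occurrences only in all-positive clauses). Your write-up is in fact more explicit than the paper's somewhat terse version.
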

\begin{proof}
 As variables in $S$ are not in an all negative clause, all clauses that contain a negative literal in $S$ must also contain a positive literal in $S$.  Furthermore, as positive literals in $\text{Vars}(F)/S$ are in all positive clauses by Lemma \ref{step 8}, all clauses that contain a negative literal in $S$ must also contain a positive literal in $S$. 
\end{proof}
\begin{corollary}
    If $F$ is a step-8 reduced formula, then $F$ is monotone.
\end{corollary}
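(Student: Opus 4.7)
The plan is short and leans on two lemmas already proved in the paper. First I would unpack what it means for $F$ to be step-8 reduced: step 8 takes $S$, the set of all variables not occurring in any all-negative clause, and assigns every variable in $S$ to $1$. By Lemma \ref{aut} this reduction is safe, and whenever $S$ is nonempty it strictly decreases the number of variables. Consequently, in a step-8 reduced formula one must have $S = \emptyset$, i.e., every variable of $F$ appears (necessarily negatively) in at least one all-negative clause.

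Next I would invoke Lemma \ref{step 8}. Since being step-8 reduced implies being step-7 reduced, the lemma applies and tells us that every all-negative clause of $F$ has length $2$ or $3$, and moreover that for each variable $v$ occurring in such a clause, the positive literal of $v$ appears only in all-positive clauses.

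These two facts immediately give the corollary. Take an arbitrary clause $C$ of $F$. If every literal of $C$ is negative, then $C$ is all-negative and we are done. Otherwise $C$ contains some positive literal of a variable $v$; by the first observation $v$ lies in some all-negative clause, so by Lemma \ref{step 8} every clause containing the positive literal of $v$ must be all-positive, forcing $C$ to be all-positive. Either way $C$ is monotone, and hence so is $F$. The only mildly subtle point worth stating explicitly is the first step: that step 8, as formulated via Lemma \ref{aut}, really does reduce the variable count whenever $S$ is nonempty, so that "step 8 does not apply" translates cleanly into the structural identity $S = \emptyset$ on which the rest of the argument hangs.
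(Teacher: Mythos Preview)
Your proof is correct and follows the same route the paper implicitly takes: from step-8 reducedness you deduce $S=\emptyset$, hence every variable lies in some all-negative clause, and then Lemma~\ref{step 8} forces every clause containing a positive literal to be all-positive. This is precisely the reasoning the paper packages into Lemma~\ref{aut} (whose proof already invokes Lemma~\ref{step 8} for exactly this purpose) and the subsequent corollary, so there is no meaningful difference in approach.
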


\section{Step 9: Dealing with the remaining all-negative 2-clause}

For us to use the fast 3-SAT algorithm by \cite{492575}, for all-negative 2-clause $(\neg x \lor \neg y)$, it is necessary that $|N(x)| + |N(y)| \leq 6$. Without loss of generality, let $|N(x)| \geq |N(y)|$. If $F$ is a step-8 reduced formula, $|N(x)| \leq4$ by Lemma \ref{step 8}. Thus, if $|N(x)| + |N(y)| > 6$, either $|N(x)| = 4, |N(y)| =3$ or $|N(x)| = |N(y)| = 4$.  Now that the formulas are monotone, there is sufficient structure for us to tackle these cases while having a good branching factor. In Lemma \ref{monotone + no N = 3}, we tackle the former case with just our primary branching rule and in Lemma \ref{step 9}, we tackle the latter case with our secondary branching rule.

\begin{lemma} \label{C= 2, special}  With clauses ($\neg x\lor\neg y$), $(y\lor a\lor b)$, ($y\lor c$), ($\neg c \lor \neg d)$, either a reduction apply and we can decrease the number of variables in $F$, or one of 3-occur-SAT($F$[$\neg$$x$, $y$]) $\lor$ 3-occur-SAT($F$[$x$, $\neg y$]) or 3-occur-SAT($F$[$\neg$$c$, $d$]) $\lor$ 3-occur-SAT($F$[$c$, $\neg d$]) has branching factor of at most $\tau(5,8)$. \end{lemma}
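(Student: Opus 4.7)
The plan is to exploit a unit-propagation chain that runs through the bridge clause $(y\lor c)$. In $F[x,\neg y]$, setting $y=0$ collapses $(y\lor c)$ to the unit clause $(c)$, which forces $c=1$ by unit propagation; the clause $(\neg c\lor\neg d)$ then collapses to $(\neg d)$, forcing $d=0$. Symmetrically, in $F[\neg c,d]$ the bridge $(y\lor c)$ becomes the unit $(y)$, which forces $y=1$, and $(\neg x\lor\neg y)$ then becomes $(\neg x)$, forcing $x=0$. Thus in either branching the entire chain $\{x,y,c,d\}$ is eliminated ``for free'' in addition to the baseline count from Lemma~\ref{tau}.

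First I would attempt the branching on $(\neg x\lor\neg y)$. By Lemma~\ref{not x}, the branch $F[\neg x,y]$ eliminates $x$, $y$ and the three variables in $N(y)=\{a,b,c\}$, giving at least $5$ variables. For the branch $F[x,\neg y]$, $x=1$ eliminates $x$ together with its $|N(x)|$ neighbours (via step~4), and the propagation above additionally eliminates $c$ and $d$. In the step-9 setting we have $|N(x)|=4$, so the count is at least $2+4+2=8$, yielding branching factor $\tau(5,8)$, provided $d$ is not already one of the four variables in $N(x)$. Note that by Lemma~\ref{many rules}(iii) applied to the length-one subclause of $(y\lor c)$ (since $(\neg x\lor\neg y)$ has no $\gamma$), we automatically have $c\notin N(x)$, so the only collision that can spoil the count is $d\in N(x)$.

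The main obstacle is precisely the overlap case $d\in N(x)$. Here I would switch to the symmetric second branching on $(\neg c\lor\neg d)$, where $(y\lor c)$ plays the mirror role: $F[\neg c,d]$ eliminates $\{c,d,y,x\}$ together with the neighbours of $d$, for at least $8$ variables unless a mirrored collision $x\in N(d)$ also holds; and $F[c,\neg d]$ eliminates at least $5$ by the analogue of the $F[\neg x,y]$ calculation. If both overlaps occur simultaneously, then $x$ and $d$ share a common positive clause $(x\lor d\lor\cdots)$, and together with $(\neg x\lor\neg y)$, $(y\lor a\lor b)$, $(y\lor c)$ and $(\neg c\lor\neg d)$ the four variables $\{x,y,c,d\}$ are tied together tightly; in this residual configuration I would show that the number of nonempty maximal subclauses external to a suitable set $V\supseteq\{x,y,c,d\}$ is at most $3$, so that Lemma~\ref{key reduction} invokes step~6d and reduces the number of variables. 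Verifying this reduction across every residual overlap configuration is where the technical case analysis lies.
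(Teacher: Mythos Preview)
Your core idea---the unit-propagation chain $y\to c\to d$ in the branch $F[x,\neg y]$---is exactly what the paper exploits, and your analysis of the first branching is essentially correct (the assumption $|N(x)|=4$ is unnecessary: the same count gives at most $\tau(9-|N(x)|,4+|N(x)|)\le\tau(5,8)$ for every $|N(x)|\le 4$). The fallback, however, has two real gaps.

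First, the ``symmetric second branching'' does not buy what you think. In the problematic case $d\in N(x)$, once you have ruled out negative neighbours via Lemma~\ref{C =2, +}, the shared clause is a positive clause $(x\lor d\lor\sigma)$, which forces $x\in N(d)$ automatically. So the mirrored collision is not an independent event: it occurs \emph{whenever} the first collision does, and the reverse chain bonus is lost exactly when you need it. Even ignoring this, your claim that $F[c,\neg d]$ eliminates at least $5$ ``by analogy'' is unjustified: with $|\gamma|=1$ (so $|N(c)|=2$) and $|N(d)|=4$, Lemma~\ref{tau} gives only $6-4+2=4$ in that branch, and you do not reach $\tau(5,8)$.

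Second, your residual plan of applying Lemma~\ref{key reduction} to some $V\supseteq\{x,y,c,d\}$ does not go through. For $V=\{x,y,c,d\}$ the clauses $(y\lor a\lor b)$, $(x\lor d\lor\sigma)$, $(x\lor\omega)$, $(c\lor\gamma)$, $(d\lor\beta)$ contribute five nonempty external subclauses, not three, and enlarging $V$ does not obviously fix this. The paper proceeds differently: it first uses the $c,d$-branching to dispose of $|\gamma|\ge 3$ and of negative literals in $\gamma$; it then observes that the chain in $F[x,\neg y]$ also eliminates the variables of $\gamma$ (not only $d$), so the true residual is the much tighter configuration $d\in N(x)$ \emph{and} $\mathrm{Vars}(\gamma)\subseteq N(x)$. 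In that narrow case the paper carries out the resolution of $\{x,y,c,d\}$ explicitly and checks by hand that one resulting clause always simplifies away (via $e\in\sigma$ or $e\in\omega$ when $|\gamma|=1$, and via step~6a/6d when $|\gamma|=2$), so that step~6d applies. That concrete resolution bookkeeping, not the coarse subclause count of Lemma~\ref{key reduction}, is what closes the argument.
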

\begin{proof}
    Let there be the clause $(c \lor \gamma)$. We shall first analyse the branching factor of 3-occur-SAT($F$[$\neg$$c$, $d$]) $\lor$ 3-occur-SAT($F$[$c$, $\neg d$]). If $|\gamma| \geq 3$, the variables in $\gamma$ are eliminated in the branch $F[c]$. If $\gamma$ contains a negative literal, by Lemma \ref{C =2, +}, an additional variable is eliminated  in the branch $F[c]$. In both cases, the variable $y$ is eliminated in both branches.
Expanding on the idea of Lemma \ref{tau}, at least 13 variables are eliminated across both branches and each branch eliminates at least 5 variables. Hence, the branching factor is at most $\tau(5,8)$. Otherwise, assume $|\gamma| \leq 2$ and $\gamma$ does not contain a negative literal.

We will now show that an additional variable is eliminated in the branch $F$[$x$, $\neg y$], leading to a branching factor of at most $\tau(5,8)$. Note that $d \notin \gamma$, otherwise step 5 of the algorithm would be applicable. If $d$ is neighbours with $y$, step 6c of the algorithm applies due to the clause ($\neg c \lor \neg d)$. If $\gamma$ contains a variable not in $N(x)$ or if $d$ is not a variable in $N(x)$, another variable can be eliminated in this branch. Otherwise, we obtain the clauses ($\neg x\lor\neg y$), $(x\lor d \lor \sigma)$, $(x\lor\omega)$, $(y\lor \alpha)$, ($y\lor c$), ($\neg c \lor \neg d$), ($c\lor\gamma$). If $x$ is neighbours with a negative literal, by Lemma \ref{C =2, +},  returning 3-occur-SAT($F$[$\neg$$x$, $y$]) $\lor$ 3-occur-SAT($F$[$x$, $\neg y$]) has branching factor of at most $\tau(5,8)$. 

Suppose $|\gamma|=1$ and let $\gamma$ be $e$, $e \in N(x)$ and $e\notin N(y)$. Resolution of the clauses via $x,y$ gives us $(d \lor \sigma\lor \alpha)$, ($d \lor \sigma\lor c$), $(\omega\lor \alpha)$, ($\omega\lor c$), ($\neg c \lor \neg d$),  ($c\lor e$). Resolution via $c$, gives us
$(d \lor \sigma\lor \alpha)$, ($d \lor \sigma\lor \neg d$), $(\omega\lor \alpha)$, ($\omega\lor \neg d$), ($\neg d\lor e$) which can be simplified to $(d \lor \sigma\lor \alpha)$, $(\omega\lor \alpha)$, ($\omega\lor \neg d$), ($\neg d\lor e$). With the clause ($d \lor \beta$), resolution of $d$ yields the clauses, $(\omega\lor \alpha)$, ($\omega\lor \sigma\lor \alpha$), ($\sigma\lor \alpha\lor e$), ($\omega\lor \beta$), ($\beta\lor e$) which can be simplified to $(\omega\lor \alpha)$, ($\sigma\lor \alpha\lor e$), ($\omega\lor \beta$), ($\beta\lor e$). We need to introduce 4 variables to deal with the extra occurrences of $e, \beta,\omega$ and $\alpha$ for the safe resolution of the variables $\{x,y,c,d\}$. Hence, if we can eliminate one of the above clauses or eliminate an instance of $e$, step 6d is applicable to $\{x,y,c,d\}$. If the variable $e$ is in $\sigma$, ($\sigma\lor \alpha\lor e$) can be simplified to ($\sigma\lor \alpha$). If the literal $e$ is in $\omega$, ($\omega\lor \beta$) can be eliminated as it subsumes ($\beta\lor e$).

Suppose $|\gamma|=2$ and let $\gamma$ contain the variables $e,f$, and  $\{e,f\} \subseteq N(x)$ and $e,f \notin N(y)$.  In the branch $F[x,\neg y]$, the literals $\neg e$ and $\neg f$ are pure literals and if they share  a clause with a variable outside $N(x)$, another variable can be eliminated in this branch. If literals $\neg e$ or $\neg f$ is neighbours with $d$, step 6a of the algorithm applies. As $x$ is already neighbours with $d,e,f$, by Lemma \ref{3 clause}, $x$ can only be neighbours with at most 1 other variable. Let that variable be $g$. If $\neg e, \neg f$ are both solely neighbours with the variable $g$, step 6d of the algorithm applies to $\{x,y,c,e,f,g\}$ due to the criteria of Lemma \ref{key reduction}.  
\end{proof}
\begin{lemma}\label{monotone + no N = 3} 
       If $F$ is a step-8 reduced formula with the clause $(\neg x \lor \neg y)$, then $|N(x)| \neq 3$ and $|N(y)| \neq 3$.
\end{lemma}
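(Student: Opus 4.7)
Proof by contradiction. Assume $|N(y)|=3$ (the case $|N(x)|=3$ is symmetric). By Lemma~\ref{step 8}(iii) and the monotonicity established after step~8, $y$ lies in one all-positive $2$-clause $(y\lor c)$ and one all-positive $3$-clause $(y\lor a\lor b)$, in addition to $(\neg x\lor\neg y)$. Since $c$ is $(2,1)$ by Lemma~\ref{3}, $\neg c$ occurs in a unique all-negative clause $C'$ with $|C'|\in\{2,3\}$ by Lemma~\ref{step 8}(i). I split on $|C'|$.

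If $|C'|=2$, write $C'=(\neg c\lor\neg d)$; the hypotheses of Lemma~\ref{C= 2, special} are met and its conclusion gives either a step-5/6a/6c/6d reduction (immediately contradicting step-8 reducedness) or a $2$-variable branching of factor $\tau(5,8)$. To replace this by a direct step-7 contradiction I would analyze the primary branching on $c$ within $(\neg c\lor\neg d)$. In $F[c]$, setting $c=1$ forces $d=0$ via the unit $(\neg d)$, makes $y$ into a $(1,1)$-variable (since $(y\lor c)$ is satisfied) that resolves $(\neg x\lor\neg y)$ against $(y\lor a\lor b)$ into $(\neg x\lor a\lor b)$, and similarly resolves $f$ from $(c\lor f)$; tracking the $d=0$ cascade into $N(d)_+$ through the various length patterns of $d$'s positive clauses (22/23/33) shows that $F[c]$ eliminates at least four variables. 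In $F[\neg c]$, the cascade $c=0\Rightarrow y=1\Rightarrow x=0$ together with $d$ becoming pure (then $d=1$) and the resolutions of $a,b,f$ and $N(d)_+$ yields at least nine eliminations; overlaps $a$ or $b\in N(d)_+$ that would shrink this count either leave it unaffected (because the overlap variable is still eliminated along the overlapping branch) or would trigger a step-5 reduction (as when $a,b\in N(d)_+$ in a 22 pattern, since then $a,b$ would share two clauses with $d$'s positive positions). Lemma~\ref{many rules}(iii) additionally forbids $y,f\in N(d)_+$. The resulting factor is at most $\tau(4,9)<1.1199$, so step~7 would have fired, contradicting step-8 reducedness.

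If $|C'|=3$, write $C'=(\neg c\lor\neg d\lor\neg e)$. Lemma~\ref{step 8}(ii) pins $c,d,e$ into two all-positive $2$-clauses each: $c\in(y\lor c),(c\lor f)$; $d\in(d\lor g),(d\lor h)$; $e\in(e\lor i),(e\lor j)$, with $\{y,f\},\{g,h\},\{i,j\}$ pairwise disjoint by Lemma~\ref{many rules}(iii). Direct step-7 branching on the $3$-clause does not suffice (the factor degenerates to $\tau(3,\cdot)$), so I would invoke step~6d: choose $V\subseteq\{x,y,c,d,e,a,b,f,g,h,i,j\}$ of size at most $10$ and use Lemma~\ref{key reduction} after verifying that the number of distinct non-empty maximal subclauses of clauses in $F_{contain\,V}$ is at most three. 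The negative-clause partners of $a,b,f,g,h,i,j$ are traced and the unavailability of step-5/6b/6c reductions is exploited to force enough of those partners to coincide or to lie inside $V$ for the bound to hold.

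The main obstacle is the length-$3$ case: the overlap structure can be intricate, and the proof would split into sub-cases based on the locations of the negative-clause partners of $a,b,f,g,h,i,j$, selecting a tailored $V$ in each. The length-$2$ case, by contrast, reduces cleanly to a primary-branching factor analysis once the cascade is followed carefully.
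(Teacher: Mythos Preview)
Your split into $|C'|=2$ and $|C'|=3$ matches the paper, and for $|C'|=2$ invoking Lemma~\ref{C= 2, special} is exactly right. But you are confused about what its conclusion buys: the ``$2$-variable branching'' $F[\neg x,y]$ versus $F[x,\neg y]$ on an all-negative $2$-clause $(\neg x\lor\neg y)$ \emph{is} a step-7 branching on $x$, since $F[x]$ forces $y=0$ via the unit $(\neg y)$ and $F[\neg x]$ makes $y$ pure. So Lemma~\ref{C= 2, special} already contradicts step-8 reducedness directly; your replacement analysis is unnecessary (and its overlap accounting is loose).

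The real gap is in the $|C'|=3$ case. Your assertion that ``direct step-7 branching on the $3$-clause does not suffice (the factor degenerates to $\tau(3,\cdot)$)'' is the crux of the error: $\tau(3,11)\approx 1.1198<1.1199$ is admissible, and this is precisely what the paper achieves. The paper does \emph{not} rely on step~6d here except in one terminal sub-case; instead it shows that for some $w$ in the $3$-clause, the branch $F[\neg w]$ eliminates all of $\{x,y,a,u,v\}$ together with the six $2$-clause neighbours of the $3$-clause variables (ten variables by Lemma~\ref{many rules}), and then traces one further elimination (via the positive occurrences of the $2$-clause neighbours, or via the two literals in the length-$2$ subclause $\gamma$ of $x$'s other positive clause) to reach eleven. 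Sub-cases arise according to whether $y$ coincides with one of the six $2$-clause neighbours, and according to where $\gamma$'s variables land; only when the neighbours close up entirely inside the ten-variable set does step~6d get invoked.

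Your proposed alternative of building a single $V$ with at most three maximal subclauses via Lemma~\ref{key reduction} is not carried out, and there is no reason to expect it to succeed in general: the seven variables $a,b,f,g,h,i,j$ (your notation) each contribute an outgoing negative clause, and bounding the distinct maximal subclauses by three would require extensive coincidences that the reduction rules do not force. You would end up doing the same sub-case analysis the paper does, but with a weaker tool.
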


\begin{proof}
    Without loss of generality, assume $|N(x)| = 3$ and there are clauses ($\neg x\lor\neg y$), ($x\lor a$). As $F$ is monotone, the literal $\neg a$ is in a all-negative clause. If the literal $\neg a$ is in an all-negative 2-clause, by Lemma \ref{C= 2, special}, such a clause will have a good branching factor for step 7 of the algorithm. Otherwise, by Lemma \ref{step 8}, literal $\neg a$ is in an all-negative 3-clause.
    
    We have the clauses $(\neg a\lor\neg u\lor\neg v)$, $(a\lor x)$, $(a\lor b)$, $(u\lor c)$, $(u\lor d)$, $(v\lor e)$, $(v\lor f)$, $(\neg x\lor\neg y)$, ($x\lor\gamma$) with $|\gamma| = 2$.  If $y$ is $b$, resolution via $x$ yield clauses $(a\lor \neg b)$, $(a\lor b)$ and we can assign $a$ to 1. If $y$ is one of $\{c,d,e,f\}$, without loss of generality, let $y$ be $c$. In the branch $F$[$v$], if no other variables other than $v,e,f$ are eliminated, we have clauses $(\neg a\lor\neg u)$, $(a\lor x)$, $(a\lor b)$, $(u\lor c)$, $(u\lor d)$, $(\neg x \lor\neg c)$, $(x \lor \gamma), (c \lor \delta)$. After resolving via $a,u$, we have $(x\lor c)$, $(x\lor d)$, $(b\lor c)$, $(b\lor d)$, $(\neg x \lor\neg c)$, $(x \lor \gamma), (c \lor \delta)$. After resolving via $x$, we have $(\neg c\lor c)$ (redundant), $(\neg c\lor d)$, $(b\lor c)$, $(b\lor d)$,  $(\neg c \lor \gamma), (c \lor \delta)$. After resolving via $c$, we have $(\gamma \lor \delta)$, $(b\lor \gamma)$, $(d \lor \delta)$, $(b\lor d)$. As $b,d,\gamma$ and $\delta$ each occur twice, a safe resolution of $\{a,u,x,c\}$ requires the introduction of 4 new variables. If any clause can be eliminated, we can safely resolve the formula by  $\{a,u,x,c\}$ while decreasing the number of variables, leading to a branching factor of 3-occur-SAT($F$[$v$]) $\lor$ 3-occur-SAT($F$[$a$, $u$, $\neg$$v$]) to be at most $\tau(4,9)$. If $\gamma$ contains the variable $b$ or $d$ another clause can be eliminated. Note that if $\gamma$ contains $\neg d$, let $\gamma = \neg d \lor \sigma$. $(\neg d \lor \sigma \lor \delta)$,  $(d \lor \delta)$,  $(b\lor \neg d \lor \sigma)$, $(b\lor d)$ can be simplified to  $(b\lor d)$,  $(d \lor \delta)$, $(\sigma \lor \delta),$  $(b\lor \sigma)$ as the last 2 clauses are resolvents by $d$ and they are subsumed by $(\neg d \lor \sigma \lor \delta)$,  $(b\lor \neg d \lor \sigma)$. $d$ becomes a pure literal and can be eliminated. Else, $\gamma$ contains 2 variables not in $\{x,u,a,b,c,d\}$. The branch $F$[$\neg$$a$, $u$]) assigns $x$ to 1 and thus eliminates 8 variable including the 2 in $\gamma$.  Returning 3-occur-SAT($F$[$a$]) $\lor$ 3-occur-SAT($F$[$\neg$$a$, $u,v$]) has branching factor of at most $\tau(3,11)$. 

Otherwise, $y \notin \{a,b,c,d,e,f\}$. Let $Elim =\{x,y,u,v,a,b,c,d,e,f\}$. We note that for $w \in \{a,u,v\}$, assigning $w$ to 0 eliminates all variables in $Elim$ for a total of 10 variables. We note that if $a$ is set to 0, $x,b$ are assigned to 1, if $u$ is set to 0, $c,d$ are assigned to 1 and if $v$ is set to 0, $e,f$ are assigned to 1. Thus, if the literal $x,b,c,d,e,f$ are in a clause with a variable outside of $Elim$, there is a $w \in \{a,u,v\}$ such that returning 3-occur-SAT($F$[$w$]) $\lor$ 3-occur-SAT($F$[$\neg$$w$]) has branching factor of at most $\tau(3,11)$. Otherwise, $\gamma$ contains literals $m,n \in \{b,c,d,e,f\}$ (note that $\gamma$ cannot contain variables in $\{x,y,a,u,v\}$).  If $\neg m$ is in a clause with a variable outside of $Elim$ and $m \in \{c,d,e,f\}$, in the branch $F[\neg a, u,v]$, $\neg m$ becomes a pure literal and that variable can be eliminated and branching on $a$ yields a branching factor of $\tau(3,11)$. If $\neg m$ is in a clause with a variable outside of $Elim$ and $m \in \{b\}$, then suppose without loss of generality that $n$ is $c$. In the branch $F[a,\neg u,v],$ the literal $c$ is assigned to 1 and hence $\neg b$ is a pure literal. Thus, branching on $u$ yields a branching factor of $\tau(3,11)$. Otherwise, at most 1 literal in $\{\neg b, \neg c, \neg d, \neg e, \neg f\}$ shares a clauses with a variable outside of $Elim$. As literals $b,c,d,e,f$ are in clauses that solely contain variables in $Elim$, step 6.4 of the algorithm is applicable to $\{a,u,v,x,b,c,d,e,f\}$.
\end{proof}

\begin{lemma}\label{step 9}If F is a step 9 reduced formula, then F does not contain the clause ($\neg x \lor \neg y$), $|N(x)|=|N(y)| = 4$.\end{lemma}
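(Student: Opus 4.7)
The plan is to show that when $F$ contains the clause $(\neg x \lor \neg y)$ with $|N(x)|=|N(y)|=4$, step~9 must already have fired on some $z \in N(x)$, contradicting step-9-reducedness. By monotonicity (corollary after Lemma~\ref{aut}) together with Lemma~\ref{step 8}(iii), both $x$ and $y$ sit in two all-positive clauses of length 2 or 3; step~5 forbids two clauses from sharing any pair of variables, so $|N(x)|=|N(y)|=4$ forces both $x$-clauses and both $y$-clauses to have length 3. Hence the local structure is
\[
(\neg x \lor \neg y),\ (x \lor a \lor b),\ (x \lor c \lor d),\ (y \lor e \lor f),\ (y \lor g \lor h),
\]
with $N(x) = \{a,b,c,d\}$ and $N(y) = \{e,f,g,h\}$, possibly overlapping.

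First I would bound $F[x]$: setting $x = 1$ converts $(\neg x \lor \neg y)$ into the unit $(\neg y)$, forcing $y = 0$, which collapses the two $y$-clauses to $(e \lor f)$ and $(g \lor h)$, removes the $x$-clauses, and drops each of $a,b,c,d$ from $(2,1)$ to at most $(1,1)$, so they are eliminated by step~4. Thus $F[x]$ eliminates at least the six variables $\{x,y,a,b,c,d\}$. Then I would analyse $F[\neg x]$ before fixing $z$: the clause $(\neg x \lor \neg y)$ vanishes, the $x$-clauses shrink to the $2$-clauses $(a \lor b), (c \lor d)$, $y$ becomes pure and is set to $1$ deleting both $y$-clauses, and each of $e,f,g,h$ then becomes a $(1,1)$-variable resolved by step~4 (a variable in $N(x) \cap N(y)$ also loses its $y$-clause positive occurrence and gets resolved). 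So $F[\neg x]$ already eliminates at least $\{x,y,e,f,g,h\}$ before any further branching on $z$.

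The goal is then to pick $z \in N(x)$, say $z = a$, so that the two subbranches together gain enough additional eliminations that $\tau(n_1, n_2, 6) \leq 1.1199$ is met, e.g.\ by $\tau(11,14,6)$ from the table or a balanced $\tau(12,12,6)$. In $F[\neg x, a]$ the $2$-clause $(a \lor b)$ is satisfied (so $b$ drops to $(1,1)$ and is resolved), and the all-negative clause containing $\neg a$ — forced to have length $2$ or $3$ by Lemma~\ref{step 8} — loses the $\neg a$ literal, which in the length-$2$ case produces a new unit clause that cascades. In $F[\neg x, \neg a]$ the clause $(a \lor b)$ becomes the unit $(b)$, forcing $b = 1$ and thereby satisfying $b$'s second all-positive clause (which exists by monotonicity); this cascades through the positive neighbours of $b$, and $\neg b$'s all-negative clause likewise shrinks. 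Across the four choices $z \in \{a,b,c,d\}$, and the four symmetric choices available by swapping the roles of $x$ and $y$, the claim is that at least one candidate meets the required $\tau$-bound.

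The main obstacle will be the case analysis keyed on $|N(x) \cap N(y)|$ and on the containment pattern of the negative literals $\neg a,\ldots,\neg h$ among all-negative clauses (length $2$ with $|N(\cdot)| \ne 3$ by Lemma~\ref{monotone + no N = 3}, or length $3$). In the generic configuration where $N(x)$ and $N(y)$ are disjoint and the negative partners of $a,\ldots,h$ are ``fresh,'' the cascades above give at least the required $(11,14,6)$ or $(12,12,6)$ elimination counts. In the tight configurations — heavy overlap between $N(x)$ and $N(y)$, repeated negative partners, or the second positive occurrences of $b,d,e,f,g,h$ landing inside $N(x) \cup N(y)$ — either the cascade count only improves, or the closed set $\{x,y\} \cup N(x) \cup N(y)$ (possibly augmented by at most three outside variables) satisfies the criterion of Lemma~\ref{key reduction}, so step~6d fires and contradicts step-$6$-reducedness instead of step-$9$-reducedness, closing the proof.
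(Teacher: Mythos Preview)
Your outline has the right target---show that some step among 6d, 7, or 9 must fire---but it misses the structural fact that actually makes the case analysis tractable, and it does not carry out the analysis.

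The key omission is this: since each of $a,b,c,d$ already sits in a positive \emph{3}-clause (namely $(x\lor a\lor b)$ or $(x\lor c\lor d)$), Lemma~\ref{step 8}(ii) rules out $\neg a,\neg b,\neg c,\neg d$ lying in an all-negative 3-clause (that lemma forces all positive occurrences of such a variable into 2-clauses). Hence every $\neg a,\neg b,\neg c,\neg d$ is in an all-negative \emph{2}-clause, and by Lemma~\ref{monotone + no N = 3} their partners in those 2-clauses give each of $a,b,c,d$ a second positive 3-clause as well. You explicitly leave the ``length-3 case'' open, but it cannot occur; and without knowing that each $\neg a,\dots,\neg d$ has a single 2-clause partner, you cannot organise the finite case split the proof requires. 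The paper's decomposition is precisely by where those four partners live: (1) both pairs among $\{a,b,c,d\}$ themselves, (2) one pair internal, (3) two partners in $N(y)$, (4) two partners outside $N(y)$. Your proposed split by $|N(x)\cap N(y)|$ does not line up with this and gives no handle on the cascades you need.

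Second, the proof is not purely a step-9 argument. In two of the four cases the paper shows that the \emph{primary} branch (step~7) on a neighbouring 2-clause such as $(\neg b\lor\neg e)$ or $(\neg a\lor\neg e)$ already gains an extra elimination (via step~5, 6a, or 6c in the branched formula), contradicting step-7-reducedness directly. Only in the remaining cases does one fall back to the genuine step-9 pattern, and there the extra eliminations in $F[\neg x,y]$ come from Lemma~\ref{C= 2, special} applied to the residual 2-clause $(\neg a\lor\neg c)$ (or $(\neg a\lor\neg e)$), yielding the $\tau(6,11,14)$ bound. Your sketch asserts that ``at least one candidate meets the required $\tau$-bound'' and that tight configurations ``only improve or trigger step~6d'', but neither claim is argued; the paper needs several paragraphs of explicit resolution-and-count computations (including step-5/6a/6c triggers in branched formulas) to close each subcase, and nothing in your outline substitutes for that work.
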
 
\begin{proof} Suppose there are clauses $(\neg x\lor\neg y)$, $(x\lor a\lor b)$, $(x\lor c\lor d)$. By Lemma \ref{step 8} and Lemma \ref{monotone + no N = 3},  the literals $\neg a$, $\neg b$, $\neg c$ and $\neg d$ are in all-negative 2-clauses and the literals $a,b,c,d$ are in all-positive 3-clauses. Note that when $|N(x)|=|N(y)|=4$, by Lemma \ref{tau}, the primary branching factor is $\tau(6,6)$.
\medskip\noindent\newline\textbf{Case 1:} There are clauses ($\neg a \lor \neg c$), ($\neg b \lor \neg d$)

 If none of $a,b,c,d$ are neighbours with $y$, then in the branch $F$[$\neg x,y$], we have ($\neg a \lor \neg c$), ($\neg b \lor \neg d$), $(a\lor b)$, $(c\lor d)$. By Lemma \ref{monotone + no N = 3}, as the last instance of $a$ cannot be in a 2-clause, we either eliminate an additional variable or there is a secondary branching with branching factor at most $\tau(5,8)$. Hence the combined branching factor is at most $\tau(6,6 + 5,6 + 8) = \tau(6,11,14)$.

 If 1 or 2 of $a,b,c,d$ are neighbours with $y$, then in the branch $F$[$\neg x,y$],  ($\neg a \lor \neg c$), ($\neg b \lor \neg d$), $(a\lor b)$, $(c\lor d)$ can be further reduced. If only $a$ (symmetric for $b,c,d$) is neighbours with $y$, the clauses simplify to ($b \lor \neg c$), ($\neg b \lor \neg d$), $(c\lor d)$ and an additional variable can be eliminated due to step 6a of the algorithm. If $a,b$ are neighbours of $y$, the clauses simplify to ($\neg c \lor \neg d$), $(c\lor d)$ and an additional variable can be eliminated due to step 5 of the algorithm. If $a,c$ are neighbours of $y$, the clauses simplify to  ($b \lor d$), ($\neg b \lor \neg d$), and an additional variable can be eliminated due to step 5 of the algorithm. Else if more than 2 of $a,b,c,d$ are neighbours with $y$, step 6d of the algorithm is applicable to the set $\{x,y,a,b,c,d\}$.
\medskip\noindent\newline\textbf{Case 2:} There is the clause ($\neg a \lor \neg c$)
 
We will have clauses ($\neg b \lor \neg e$), ($\neg d \lor \neg f$). If $e$ is not a neighbour of $a$ or $c$, in the branch $F$[$\neg b,e$], we will have clauses ($x \lor a$), ($x\lor c \lor \gamma$), ($\neg a \lor \neg c$) and we can eliminate another variable in this branch with step 6c of the algorithm. If $e$ is neighbours with one of $a$ or $c$, in the branch $F$[$\neg b,e$], after resolution of $a/c$, both clauses containing the literal $x$ contains the other of the variable $a/c$ and by step 5 of the algorithm another variable can be eliminated. 
\medskip\noindent\newline\textbf{Case 3:} There are clauses ($\neg a \lor \neg e$), ($\neg c \lor \neg f$), $e,f\in N(y)$
 
Note that because of case 1 and 2, $a,c\notin N(y)$, $e,f\notin N(x)$. If $c,f$ are both not neighbour of $a$, in the branch $F$[$a,\neg e$] and after the resolution of $x$, there will be clauses ($\neg c \lor \neg f$), ($\neg y \lor \ c \lor d$), ($f \lor y \lor \gamma$) and another variable can be eliminated by step 6a of the algorithm. Else, as only one of $c,f$ can be neighbours with $a$, after resolution  of the neighbour of $a $ in $\{c,f\}$  there will be 2 clauses with both the variables $y,c$ (if $f\in N(a)$) or $y,f$ (if $c\in N(a)$) and another variable can be eliminated by step 5 of the algorithm.

Note that if case 3 is not applicable, all the variables from one of $\{a,b\}$ or $\{c,d\}$ are in all-negative 2-clause with variables not from $N(y)$. 
\medskip\noindent\newline\textbf{Case 4:} There is the clause ($\neg a \lor \neg e$),($\neg b \lor \neg f$), $e,f\notin N(y)$
If either $a$ or $b$ are neighbours with $y$, without loss of generality let $a$ be neighbours with $y$. In the branch $F$[$\neg b,f$], there will be clauses $(\neg x \lor \neg y), (x \lor a), (y \lor a \lor \gamma)$  we can eliminate another variable by step 6c of the algorithm. 

Else, if both $a,b\notin N(y)$, in the branch $F$[$\neg x,y$], we have clauses ($\neg a \lor \neg e$),($\neg b \lor \neg f$), ($a\lor b$).  By Lemma \ref{C= 2, special}, as the last instance of $a$ cannot be in a 2-clause, we can either eliminate an additional variable or there is a secondary branching with branching factor at most $\tau(5,8)$. Hence the combined branching factor is at most $\tau(6,6 + 5,6 + 8) = \tau(6,11,14)$.
\end{proof}

\section{Final Results}
\begin{lemma}
     An instance of 3-SAT with $t$ clauses of length 3 can be solved by \cite{492575} algorithm in time $O^*(1.3645^t)$.
\end{lemma}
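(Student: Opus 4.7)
The plan is to analyze the Beigel--Eppstein 3-SAT algorithm using $t$, the number of 3-clauses, as the complexity measure instead of the usual variable count $n$ or total formula length $L$. The algorithm is a branch-and-reduce procedure: at each stage it locates a 3-clause, examines the local neighborhood structure, and recurses on a small collection of sub-instances obtained from carefully chosen partial assignments; once no 3-clause remains the residual 2-CNF can be decided in polynomial time by any standard 2-SAT procedure. So the overall running time is the size of the branching tree multiplied by polynomial work, and it suffices to bound the branching factor in terms of how many 3-clauses disappear in each sub-branch.

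First I would fix a canonical listing of Beigel and Eppstein's branching rules. For each rule, triggered by a 3-clause $(l_1 \lor l_2 \lor l_3)$ together with its prescribed local pattern, I would trace the effect on $t$ in each sub-branch: the chosen 3-clause itself is satisfied (and hence removed) or shortened to a 2-clause in every sub-branch, and additional 3-clauses sharing variables with $\{l_1,l_2,l_3\}$ are shortened or removed via unit propagation and pure-literal elimination. Collecting these effects yields, for each rule, a tuple $(k_1,\ldots,k_j)$ counting the 3-clauses eliminated in each sub-branch.

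Next I would verify, via Kullmann's method, that each of the resulting branching factors $\tau(k_1,\ldots,k_j)$ is at most $1.3645$. The worst case is exactly the bottleneck rule in Beigel and Eppstein's analysis, and establishing that this same rule remains tight under the $t$-measure is the main technical obstacle: one has to argue that even when $n$ is very different from $t$ (for instance, when the formula has many 2-clauses contributing variables but no 3-clauses), the branching is still forced to reduce $t$ by the appropriate amount, so that no branching factor worse than $1.3645$ appears.

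Once the per-rule bound $\tau(k_1,\ldots,k_j) \leq 1.3645$ is verified across all cases, the conclusion $O^*(1.3645^t)$ follows by the standard recurrence argument for branch-and-reduce procedures, using the fact that the base case $t=0$ is solvable in polynomial time. I expect the main difficulty to lie entirely in the case analysis for the bottleneck rule; the remaining rules will likely admit branching factors comfortably below $1.3645$ and so pose no obstruction.
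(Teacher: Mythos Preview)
The paper supplies no proof for this lemma at all: it is stated as a citation of the Beigel--Eppstein result and used as a black box in the subsequent Lemma~\ref{finale}. So the intended ``proof'' is simply a reference, not an argument to be reconstructed.

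Your proposal, by contrast, attempts to re-derive the bound from scratch, and there are two issues. First, your description of the Beigel--Eppstein algorithm is not accurate: it does not operate as a branch-and-reduce procedure that picks a 3-clause and branches on local SAT structure. The actual mechanism is a reduction from a 3-SAT instance with $t$ 3-clauses to a $(3,2)$-CSP instance with $t$ variables (one CSP variable per 3-clause, its three values encoding which literal satisfies that clause, with binary constraints enforcing consistency across shared SAT variables), after which their $(3,2)$-CSP algorithm runs in $O^*(1.3645^n)$ on $n=t$ CSP variables. The $1.3645^t$ bound for 3-SAT falls out of this reduction immediately; there is no need to re-analyse branching factors in terms of eliminated 3-clauses. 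Second, even taken on its own terms, your sketch leaves the entire technical content---the enumeration of branching rules and the verification that each has factor at most $1.3645$ in the $t$-measure---unperformed, so it is a plan rather than a proof.

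In short: the paper treats this as a quotable result, and if you did want to justify it, the clean route is the one-line reduction to $(3,2)$-CSP, not a case-by-case re-analysis of branching on 3-clauses.
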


\begin{lemma}\label{finale}  If F is a step-9 reduced formula, then the fast 3-SAT algorithm by Beigel and Eppstein is applicable and runs in $O^*(1.1092^n)$. \end{lemma}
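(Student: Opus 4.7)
The plan is to first argue that a step-9 reduced $F$ is a 3-SAT instance, then carefully count the number of 3-clauses $t$, show $t \le n/3$, and finally invoke the previous lemma with $1.3645^{n/3} \le 1.1092^n$.

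For the first step, by Lemma \ref{step 8} every all-negative clause has length at most 3, and the positive literals of any variable $v$ live either in all-positive 2-clauses (if $v$ is in an all-negative 3-clause) or in all-positive clauses of length 2 or 3 (if $v$ is in an all-negative 2-clause). Since $F$ is step-8 reduced, $F$ is monotone (by the corollary following Lemma \ref{aut}), so every clause is either all-positive or all-negative and therefore has length at most 3. Hence $F$ is a 3-SAT instance and the Beigel-Eppstein algorithm is indeed applicable.

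Next I would classify variables. Since $F$ is monotone and every variable is a $(2,1)$-variable (Lemma \ref{3}), each variable appears in exactly one all-negative clause via its single negative occurrence. Split $Vars(F)$ into: $n_3$ variables in all-negative 3-clauses; and $n_2$ variables in all-negative 2-clauses, which I further split by $|N(x)|$. Because step 5 forbids two clauses sharing two variables, the two clauses containing literal $x$ share no other variable, so $|N(x)| = l_1 + l_2 - 2$ where $l_1,l_2\in\{2,3\}$. For $x$ in an all-negative 2-clause, Lemma \ref{monotone + no N = 3} rules out $|N(x)|=3$, leaving Type A ($|N(x)|=2$, both positive occurrences in 2-clauses, count $n_2^A$) and Type B ($|N(x)|=4$, both positive occurrences in 3-clauses, count $n_2^B$), so $n_2 = n_2^A + n_2^B$. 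By Lemma \ref{step 9}, no all-negative 2-clause $(\neg x\lor\neg y)$ has both $x,y$ of Type B, so each of the $n_2/2$ such clauses contains at most one Type B variable, giving $n_2^B \le n_2/2$ and hence $n_2^A \ge n_2^B$.

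Now I count 3-clauses. The all-negative 3-clauses account for $n_3/3$ clauses. The all-positive 3-clauses can only contain positive literals of Type B variables, since by Lemma \ref{step 8}(ii) the positive literals of $n_3$-variables lie in 2-clauses and by definition Type A positive literals lie in 2-clauses. Each Type B variable contributes exactly two positive occurrences, all of which land in 3-clauses, giving exactly $2n_2^B/3$ all-positive 3-clauses. Thus
\[
t \;=\; \frac{n_3 + 2 n_2^B}{3} \;\le\; \frac{n_3 + n_2^A + n_2^B}{3} \;=\; \frac{n}{3},
\]
using $n_2^A \ge n_2^B$. Combining with the stated $O^*(1.3645^t)$ bound and the arithmetic $1.3645^{1/3} \le 1.1092$ yields the claimed $O^*(1.1092^n)$ runtime.

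The bulk of the work has been frontloaded into Lemmas \ref{step 8}, \ref{monotone + no N = 3}, and \ref{step 9}, so the main obstacle here is conceptual rather than technical: one must recognize that these structural lemmas together pin down exactly the distribution of 3-clauses (the Type A/Type B dichotomy and the prohibition on pairing two Type B's) that makes the ratio $t/n$ exactly $1/3$ in the worst case, matching the Beigel-Eppstein exponent $\log 1.3645$ to the target $3 \log 1.1092$.
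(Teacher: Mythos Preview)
Your proof is correct and follows essentially the same route as the paper: establish that $F$ is monotone 3-SAT, then use Lemmas~\ref{step 8}, \ref{monotone + no N = 3}, and \ref{step 9} to show that each all-negative 2-clause contributes at most two positive 3-clause occurrences (your Type~A/Type~B split is just a slightly finer phrasing of the paper's ``$x$ and $y$ are part of at most 2 3-clauses in total''), yielding $t\le n/3$. The only cosmetic difference is that the paper lets $n_2,n_3$ count all-negative \emph{clauses} rather than variables, so its arithmetic reads $n=2n_2+3n_3$ and $t\le 2n_2/3+n_3=n/3$.
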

\begin{proof} We will show that a step 9 reduced formula has only clauses of length 2 and 3 and only $n/3$ clauses of length 3. Hence, the 3-SAT algorithm by \cite{492575} is applicable and runs in $O^*(1.3645^t) = O^*(1.3645^{(n/3)}) = O^*(1.1092^n)$.

As a step-9 reduced formula only contains monotone clauses (Lemma \ref{monotone + no N = 3}), all variables are in an all-negative clause of length 2 or 3. Hence, by Lemma \ref{step 8}, there are only clauses of length 2 and 3, thus making $F$ an instance of 3-SAT. Let $n_2$ and $n_3$ be the number of all-negative 2-clause and 3-clause respectively and let $C$ be an all-negative clause. If $|C| =2$, let $C$ be ($\neg x \lor \neg y )$. By Lemma \ref{monotone + no N = 3} and Lemma \ref{step 9}, the literals $x$ and $y$ are part of at most 2 3-clauses in total. Thus, there are at most $2n_2/3$ all-positive 3-clauses as only variables from all-negative 2-clauses can be part of all-positive 3-clauses. If $|C| =3$, let $C$ be ($\neg x \lor \neg y \lor \neg z)$. By Lemma \ref{step 8}, the literals $x,y$ and $z$ are part of no 3-clauses.  The number of variables, $n$, is equal to $2n_2+3n_3$.  As there are $n_3$ all-negative 3-clauses, there are at most $2/3 n_2  + n_3 = (2n_2+3n_3)/3 = n / 3$ clauses of length 3.
\end{proof}
\begin{theorem} If F is an instance of 3-occur-SAT, our algorithm decides the satisfiability of F in time $O^*(1.1199^n)$
\end{theorem}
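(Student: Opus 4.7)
The plan is to combine all the structural lemmas from Sections 4--7 into a unified runtime analysis via Kullmann's method. First, I would verify correctness and termination: Steps 1--2 are trivial base cases; Steps 3--6 and Step 8 are reduction rules whose equisatisfiability follows from standard arguments (pure literals, resolution with auxiliary variables) together with Lemma \ref{aut} for Step 8; Steps 7 and 9 are complete case splits on variable assignments and are therefore correct. Each reduction strictly decreases $|Vars(F)|$, so there is no infinite loop among the reductions, and every recursive call is on a strictly smaller instance.

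The heart of the proof is to verify that every branching conducted by the algorithm has branching factor at most $1.1199$. Step 7 enforces this bound by its guarding condition, so the remaining task is to ensure that whenever Step 7 does not fire despite an all-negative clause being present, the algorithm still proceeds correctly through Steps 8--10. By Lemmas \ref{C=4}, \ref{C=3}, \ref{C=3ii}, and \ref{C =3, +}, if Step 7 does not fire then Lemma \ref{step 8} characterizes the residual structure: no all-negative $k$-clauses exist for $k \geq 4$, every literal in an all-negative 3-clause appears in a length-2 all-positive clause, and every literal in an all-negative 2-clause appears in an all-positive clause of length 2 or 3. This is precisely the structure Step 8 needs: the variables not in an all-negative clause form an autarkic set (Lemma \ref{aut}), and Step 8 reduces the formula to a monotone one without branching.

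For the monotone residual formula, Step 9 handles any remaining all-negative 2-clause $(\neg x \lor \neg y)$ with $|N(x)| + |N(y)| > 6$. By Lemma \ref{monotone + no N = 3} one may assume $|N(x)| = |N(y)| = 4$, and Lemma \ref{step 9} shows that Step 9's secondary branching succeeds with factor $\tau(6,11,14) \approx 1.11984 \leq 1.1199$. When Step 9 also fails to apply, Lemma \ref{finale} ensures that the remaining instance is a 3-SAT formula with at most $n/3$ clauses of length 3, so Beigel and Eppstein's algorithm terminates in $O^*(1.1092^n)$, well within the target bound.

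Finally, I would assemble these pieces using Kullmann's method. The largest branching factor incurred anywhere in the algorithm is $\tau(3,11) \approx 1.11984 \leq 1.1199$, as summarised in the table of relevant branching factors, and the polynomial work between branchings is absorbed into the $O^*$ notation. The leaf-level call to Beigel and Eppstein runs in $O^*(1.1092^n)$, which is dominated by the branching bound. Therefore the total running time is $O^*(1.1199^n)$. The main obstacle in a formal write-up is essentially organisational, namely assembling the many case analyses of Sections 5 and 7 in a coherent order and verifying that every execution path through the algorithm is covered by some lemma, rather than any single new technical step.
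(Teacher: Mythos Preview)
Your proposal is correct and matches the paper's approach: the theorem is stated in the paper without an explicit proof because it is the direct consequence of assembling Lemmas \ref{C=4}--\ref{C =2, +} (for Step 7), Lemmas \ref{step 8}--\ref{aut} (for Step 8), Lemmas \ref{C= 2, special}--\ref{step 9} (for Step 9), and Lemma \ref{finale} (for Step 10) under Kullmann's method, exactly as you outline. Your organization of the argument---correctness and termination of reductions, the guarding conditions on Steps 7 and 9 capping all branching factors at $1.1199$, the structural characterization of step-$k$ reduced formulas, and the final handoff to Beigel--Eppstein---is precisely the intended synthesis.
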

\begin{lemma}\label{reduction}
    Using the reverse resolution method from \cite{wahlstrom2005faster},   an $O^*(\alpha^n)$ algorithm for 3-occur-SAT can be extended to an $O^*(\alpha^{(d-2)n})$ algorithm for SAT.
\end{lemma}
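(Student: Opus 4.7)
The plan is to apply Wahlstr\"om's reverse-resolution transformation to convert an arbitrary SAT instance $F$ with $n$ variables, each of degree at most $d$, into an equisatisfiable 3-occur-SAT instance $F'$ with at most $(d-2)n$ variables, and then invoke the $O^*(\alpha^n)$ 3-occur-SAT algorithm of Section~3 on $F'$.

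First I would standardise $F$ and eliminate all unit clauses and pure literals (cf.\ Definition~1 and Step 4), so that every surviving variable has degree at least $2$ and appears in both polarities. Next, as long as some variable $x$ has degree at least $4$, I would apply the following reverse-resolution step. By pigeonhole, one polarity of $x$ occurs in at least two clauses; say positive, in $(x \lor \alpha)$ and $(x \lor \beta)$. Introduce a fresh variable $y$ and replace these two clauses by $(x \lor \neg y)$, $(y \lor \alpha)$, $(y \lor \beta)$. Equisatisfiability is immediate: a model of the old formula extends by setting $y := x$, while any model of the new formula must satisfy either $x = 1$ (from $(x \lor \neg y)$ when $y = 1$) or both $\alpha$ and $\beta$ (from $(y \lor \alpha), (y \lor \beta)$ when $y = 0$), which in either case satisfies the two replaced clauses. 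Equivalently, classical resolution of the three new clauses on $y$ returns exactly $(x \lor \alpha)$ and $(x \lor \beta)$.

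The key accounting is that one such step decreases $\deg(x)$ by exactly one (two positive occurrences removed, one added in $(x \lor \neg y)$) while introducing $y$ with degree exactly three (positive in $(y \lor \alpha)$ and $(y \lor \beta)$, negative in $(x \lor \neg y)$). Iterating, a variable $x$ of original degree $k \ge 4$ is, after $k-3$ steps, replaced by itself (now of degree $3$) together with $k-3$ fresh degree-$3$ variables, contributing $k-2$ variables total to $F'$. Variables of original degree at most $3$ contribute only themselves, which is $1 \le d-2$ whenever $d \ge 3$ (the only nontrivial regime). Summing over the $n$ original variables yields at most $(d-2)n$ variables in $F'$, and running the 3-occur-SAT algorithm on this equisatisfiable formula then takes $O^*(\alpha^{(d-2)n})$ time.

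The main technical care is verifying that after preprocessing every variable of degree at least $4$ actually has both polarities, so that pigeonhole really delivers two same-sign occurrences on which to perform the step, and that each insertion preserves equisatisfiability as claimed; both checks are routine once made explicit. Given these, the lemma reduces to the degree/variable bookkeeping above.
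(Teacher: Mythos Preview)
Your approach is essentially the same as the paper's: both apply the reverse-resolution step that replaces two same-sign occurrences $(x\lor\alpha),(x\lor\beta)$ by $(x\lor\neg y),(y\lor\alpha),(y\lor\beta)$, observe that each application drops $\deg(x)$ by one while the fresh variable has degree exactly three, and conclude that a degree-$k$ variable is represented by $k-2$ variables in the final 3-occur instance.

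One point worth noting: your counting argument treats $d$ as the \emph{maximum} degree, bounding each variable's contribution by $d-2$. The paper additionally observes that if every variable has degree at least $2$, then the \emph{exact} total $\sum_u(\deg(u)-2)=(d-2)n$ holds with $d$ the \emph{average} degree, which is the stronger statement advertised in the abstract. Your preprocessing (eliminating pure literals and, via Step~4, $(1,1)$-variables) already ensures $\deg(u)\ge 2$, so the same summation gives you this sharper bound for free; you just did not state it.
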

\begin{proof}
Let $F$ be a CNF formula with $n$ variables. We can construct (in polynomial time) a formula $F'$ that is equisatisfiable to $F$ such that $F'$ has $(d-2)n$ variables and a maximum degree of 3. 

Let $u$ be a variable in $F$ with $d^*$ occurrences. Without loss of generality, assume the literal $u$ has at least as many occurrences as the literal $\neg u$. If $d^*> 3$, let there be clauses ($u \lor \alpha$) and ($u \lor \beta$).  We can reduce the number of occurrences of $u$ by one by introducing a new variable $v$ to obtain the clauses ($v \lor \alpha$), ($v \lor \beta$) and ($\neg v \lor u$). After introducing $d-3$ variables, we reduced the occurrences of $u$ back to 3. If $d^*\leq2$, $u$ can be trivially dealt with either by assigning $u$ to 1 or by resolution. Let 
$rep(u)$ be the number of variables that $u$ can be represented by in $F'$.

If $d^* \geq 2$, $rep(u) = d^* - 2$ and if $d^* = 1$, $rep(u) = 0$. If $d$ is the maximum degree of $F$, the number of variables in $F'$ is  

\begin{align*}
    \sum_{u \in Vars(F)} rep(u) &\leq \sum_{u \in Vars(F)} d-2 \text{ (as } rep(u) \leq d -2 \text{)}\\
    &= (d-2)n
\end{align*}
$$$$

If $d$ is the average degree of $F$, and if each variable occurs at least twice, the number of variables in $F'$ is  $$\sum_{u \in Vars(F)} rep(u) = \sum_{u \in Vars(F)} deg(u) -2 = (d-2)n$$
\end{proof}
\begin{corollary}
        If F is an instance of SAT, our algorithm can be extended to decide the satisfiability of F in time $O^*(1.1199^{(d-2)n})$
\end{corollary}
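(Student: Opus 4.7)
The plan is to derive the corollary by directly composing Lemma \ref{reduction} with the main theorem. Given an arbitrary CNF formula $F$ on $n$ variables where $d$ is either the maximum number of occurrences of any variable or the average number of occurrences (with no variable occurring exactly once), I would first invoke Lemma \ref{reduction} to construct, in polynomial time, an equisatisfiable formula $F'$ whose every variable has degree at most $3$ and whose variable count $n'$ satisfies $n' \leq (d-2)n$. The construction of $F'$ is the reverse-resolution splitting already described in Lemma \ref{reduction}: each high-degree variable $u$ with $d^* \geq 3$ occurrences is replaced by a chain of $d^* - 2$ fresh variables connected by implication clauses $(\neg v \lor u)$, while variables with $d^* \leq 2$ are eliminated by the trivial reduction rules of step 4.

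Next I would apply the main theorem to $F'$. Since $F'$ is an instance of 3-occur-SAT, the algorithm decides its satisfiability in time $O^*(1.1199^{n'})$. Substituting the bound $n' \leq (d-2)n$ gives a running time of $O^*(1.1199^{(d-2)n})$ for the overall procedure on $F$. Equisatisfiability ensures that the answer returned for $F'$ is the correct answer for $F$, so nothing further is needed to establish correctness.

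The only subtlety I would spell out is the case split for the interpretation of $d$. When $d$ is the maximum degree, $\mathrm{rep}(u) \leq d-2$ holds variable-by-variable, so summing over $\mathrm{Vars}(F)$ trivially yields $n' \leq (d-2)n$. When $d$ is the average degree and no variable occurs exactly once, the inequality $\mathrm{rep}(u) \leq \deg(u) - 2$ holds for every remaining variable (variables with $\deg(u)=0$ contribute $0$, those with $\deg(u)\geq 2$ contribute $\deg(u)-2$), and summing gives $\sum_u \mathrm{rep}(u) = \sum_u (\deg(u)-2) = (d-2)n$ by definition of the average degree. There is no real obstacle here beyond being careful that the polynomial overhead of constructing $F'$ is absorbed into the $O^*$ notation, and that the two conditions on $d$ in the abstract align exactly with the two bounds proved in Lemma \ref{reduction}.
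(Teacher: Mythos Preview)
Your proposal is correct and is exactly the argument the paper intends: the corollary is stated without its own proof because it is the immediate composition of the main theorem with Lemma~\ref{reduction}, which is precisely what you spell out. Your extra care about the two interpretations of $d$ simply rehearses the case split already carried out inside the proof of Lemma~\ref{reduction}, so nothing is missing.
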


\section{Conclusion}

We improved the best known upper bound for 3-occur-SAT to $O^*(1.1199^n)$ and for SAT to $O^*(1.1199^{(d-2)n})$. This was achieved primarily through deeper case analysis and a new powerful reduction rule for this setting. This reduction rule enables us to resolve many variables at once under a relatively broad criteria while generalising many previously-known specialised rules. We also show how to make a formula monotone without the use of branching, and suggest how to use secondary branching to avoid previously pessimal cases.

For future research, fine-grained reductions \cite{cygan2016problems,williams2018some} can be used to investigate how upper bounds for different variants of SAT relate to each other. Furthermore, as the performance of 3-occur-SAT algorithms are often the bottleneck case for algorithms measured by the formula length $L$, it could be fruitful to revisit those problems.
\bibliographystyle{plain}
\bibliography{name}

\begin{thebibliography}{10}

\bibitem{492575}
R.~Beigel and D.~Eppstein.
\newblock 3-coloring in time 0(1.3446/sup n/): a no-mis algorithm.
\newblock In {\em Proceedings of IEEE 36th Annual Foundations of Computer Science}, pages 444--452, 1995.

\bibitem{belova2020algorithms}
Tatiana Belova and Ivan Bliznets.
\newblock Algorithms for (n, 3)-maxsat and parameterization above the all-true assignment.
\newblock {\em Theoretical Computer Science}, 803:222--233, 2020.

\bibitem{brilliantov2023improved}
Kirill Brilliantov, Vasily Alferov, and Ivan Bliznets.
\newblock Improved algorithms for maximum satisfiability and its special cases.
\newblock In {\em Proceedings of the AAAI Conference on Artificial Intelligence}, volume~37, pages 3898--3905, 2023.

\bibitem{chen2009improved}
Jianer Chen and Yang Liu.
\newblock An improved sat algorithm in terms of formula length.
\newblock In {\em Workshop on Algorithms and Data Structures}, pages 144--155. Springer, 2009.

\bibitem{chu2021improved}
Huairui Chu, Mingyu Xiao, and Zhe Zhang.
\newblock An improved upper bound for sat.
\newblock {\em Theoretical Computer Science}, 887:51--62, 2021.

\bibitem{cook1971complexity}
Stephen~A. Cook.
\newblock The complexity of theorem proving procedure.
\newblock In {\em Proc. 3rd Symp. on Theory of Computing}, pages 151--158, 1971.

\bibitem{cygan2016problems}
Marek Cygan, Holger Dell, Daniel Lokshtanov, D{\'a}niel Marx, Jesper Nederlof, Yoshio Okamoto, Ramamohan Paturi, Saket Saurabh, and Magnus Wahlstr{\"o}m.
\newblock On problems as hard as {CNF-SAT}.
\newblock {\em ACM Transactions on Algorithms (TALG)}, 12(3):1--24, 2016.

\bibitem{davis1960computing}
Martin Davis and Hilary Putnam.
\newblock A computing procedure for quantification theory.
\newblock {\em Journal of the ACM (JACM)}, 7(3):201--215, 1960.

\bibitem{heule2011cube}
Marijn~JH Heule, Oliver Kullmann, Siert Wieringa, and Armin Biere.
\newblock Cube and conquer: Guiding cdcl sat solvers by lookaheads.
\newblock In {\em Haifa Verification Conference}, pages 50--65. Springer, 2011.

\bibitem{hirsch1998two}
Edward~A Hirsch.
\newblock Two new upper bounds for sat.
\newblock In {\em SODA}, pages 521--530, 1998.

\bibitem{hirsch2000new}
Edward~A Hirsch.
\newblock New worst-case upper bounds for sat.
\newblock {\em Journal of Automated Reasoning}, 24(4):397--420, 2000.

\bibitem{impagliazzo2001complexity}
Russell Impagliazzo and Ramamohan Paturi.
\newblock On the complexity of k-sat.
\newblock {\em Journal of Computer and System Sciences}, 62(2):367--375, 2001.

\bibitem{KULLMANN19991}
O.~Kullmann.
\newblock New methods for 3-sat decision and worst-case analysis.
\newblock {\em Theoretical Computer Science}, 223(1):1--72, 1999.

\bibitem{kullmann1997deciding}
Oliver Kullmann and Horst Luckhardt.
\newblock Deciding propositional tautologies: Algorithms and their complexity.
\newblock {\em preprint}, 82, 1997.

\bibitem{liu2018chain}
Sixue Liu.
\newblock Chain, generalization of covering code, and deterministic algorithm for k-sat.
\newblock In {\em 45th International Colloquium on Automata, Languages, and Programming (ICALP 2018)}. Schloss Dagstuhl-Leibniz-Zentrum fuer Informatik, 2018.

\bibitem{marques2008practical}
Joao Marques-Silva.
\newblock Practical applications of boolean satisfiability.
\newblock In {\em 2008 9th International Workshop on Discrete Event Systems}, pages 74--80. IEEE, 2008.

\bibitem{inproceedings}
Junqiang Peng and Mingyu Xiao.
\newblock Fast algorithms for sat with bounded occurrences of variables.
\newblock In {\em Proceedings of the Thirty-Second International Joint Conference on Artificial Intelligence}, pages 2004--2012, 08 2023.

\bibitem{article}
Junqiang Peng and Mingyu Xiao.
\newblock Further improvements for sat in terms of formula length.
\newblock {\em Information and Computation}, 294:105085, 08 2023.

\bibitem{scheder2022ppsz}
Dominik Scheder.
\newblock Ppsz is better than you think.
\newblock In {\em 2021 IEEE 62nd Annual Symposium on Foundations of Computer Science (FOCS)}, pages 205--216. IEEE, 2022.

\bibitem{tovey1984simplified}
Craig~A Tovey.
\newblock A simplified np-complete satisfiability problem.
\newblock {\em Discrete applied mathematics}, 8(1):85--89, 1984.

\bibitem{van1988satisfiability}
Allen Van~Gelder.
\newblock A satisfiability tester for non-clausal propositional calculus.
\newblock {\em Information and Computation}, 79(1):1--21, 1988.

\bibitem{wahlstrom2005algorithm}
Magnus Wahlstr{\"o}m.
\newblock An algorithm for the sat problem for formulae of linear length.
\newblock In {\em European Symposium on Algorithms}, pages 107--118. Springer, 2005.

\bibitem{wahlstrom2005faster}
Magnus Wahlstr{\"o}m.
\newblock Faster exact solving of sat formulae with a low number of occurrences per variable.
\newblock In {\em Theory and Applications of Satisfiability Testing: 8th International Conference, SAT 2005, St Andrews, UK, June 19-23, 2005. Proceedings 8}, pages 309--323. Springer, 2005.

\bibitem{williams2018some}
Virginia~Vassilevska Williams.
\newblock On some fine-grained questions in algorithms and complexity.
\newblock In {\em Proceedings of the international congress of mathematicians: Rio de janeiro 2018}, pages 3447--3487. World Scientific, 2018.

\bibitem{xu2019resolution}
Chao Xu, Jianer Chen, and Jianxin Wang.
\newblock Resolution and linear {CNF} formulas: improved (n, 3)-maxsat algorithms.
\newblock {\em Theoretical Computer Science}, 774:113--123, 2019.

\bibitem{cas}
Edward Zulkoski, Curtis Bright, Albert Heinle, Ilias Kotsireas, Krzysztof Czarnecki, and Vijay Ganesh.
\newblock Combining sat solvers with computer algebra systems to verify combinatorial conjectures.
\newblock {\em Journal of Automated Reasoning}, 58, 03 2017.

\end{thebibliography}
\end{document}